\theoremstyle{plain}
\newtheorem{thm}{Theorem}
\newtheorem{cor}[thm]{Corollary}
\newtheorem{lem}[thm]{Lemma}
\newtheorem{conj}[thm]{Conjecture}
\theoremstyle{definition}
\newtheorem{defn}{Definition}
\theoremstyle{plain}
\theoremstyle{definition}
\theoremstyle{definition}
\newtheorem{expr}{Experiment}
\theoremstyle{definition}
\newcommand{\qqed}{\hfill $\Box$}
\begin{document}

%

\title{Topological Graphic Passwords And Their Matchings Towards Cryptography}



\author{\IEEEauthorblockN{Bing  Yao$^{1,5}$,~Hui Sun$^{1}$,~Xiaohui Zhang$^{1}$,~Yarong Mu$^{1}$,~Yirong Sun$^{1}$\\ Hongyu Wang$^{2,\ddagger}$,~Jing Su$^{2}$, Mingjun Zhang$^{3}$, Sihua Yang$^{3}$, Chao Yang$^{4}$}
\IEEEauthorblockA{{1} College of Mathematics and Statistics,
 Northwest Normal University, Lanzhou, 730070, CHINA}
 \IEEEauthorblockA{{2} School of Electronics Engineering and Computer Science, Peking University, Beijing, 100871, CHINA}
\IEEEauthorblockA{{3} School of Information Engineering, Lanzhou University of Finance and Economics, Lanzhou, 730030, CHINA}
\IEEEauthorblockA{{4} School of Mathematics, Physics \& Statistics, Shanghai University of Engineering Science, Shanghai, 201620, CHINA}
\IEEEauthorblockA{{5} School of Electronics and Information Engineering, Lanzhou Jiaotong University, Lanzhou, 730070, CHINA\\
$^\ddagger$ Hongyu Wang is the corresponding author with email: why1988jy@163.com}

\thanks{Manuscript received June 1, 2017; revised August 26, 2017.
Corresponding author: Bing Yao, email: yybb918@163.com.}}


%


\maketitle

\begin{abstract}
Graphical passwords (GPWs) are convenient for mobile equipments with touch screen. Topological graphic passwords (Topsnut-gpws) can be saved in computer by classical matrices and run quickly than the existing GPWs. We research Topsnut-gpws by the matching of view, since they have many advantages. We discuss: configuration matching partition, coloring/labelling matching partition, set matching partition, matching chain, etc. And, we introduce new graph labellings for enriching Topsnut-matchings and show that these labellings can be realized for trees or spanning trees of networks. In theoretical works we explore Graph Labelling Analysis, and show that every graph admits our extremal labellings and set-type labellings in graph theory. Many of the graph labellings mentioned are related with problems of set  matching partitions to number theory, and yield new objects and new problems to graph theory.\\[4pt]
\end{abstract}
\textbf{\emph{Keywords---Cryptography; graphical password; matching; partition; labelling.}}


%
\IEEEpeerreviewmaketitle


\pagestyle{fancy}
\pagestyle{plain}

\section{Introduction}

As known, public key and private key play important roles in cryptography nowadays. How to realize the authentication of public keys and private keys by ciphers with easy to use and high level security? Graphical passwords (GPWs) emerged for alternative to  text-based passwords. GPWs have been researched and applied in the real life for a long time, for example, QR code is popular in electronic commerce, open screen graphic cipher for smart mobiles, and so on (\cite{Suo-Zhu-Owen-2005, Biddle-Chiasson-van-Oorschot-2009, Gao-Jia-Ye-Ma-2013}). The existing GPWs are lack of pictures frequently changed and occupy huge spaces of computer, and need users to learn more and have good memory, and do not support more individual ideas and personal making GPWs.

For overcoming weak limits of the existing GPWs, Wang \emph{et al.} (\cite{Wang-Xu-Yao-2016, Wang-Xu-Yao-Key-models-Lock-models-2016}) have designed Topological graphic passwords (Topsnut-gpws) by an idea of ``topological structure pulsing number theory''. Clearly, Topsnut-gpw is a mathematical expression of  nature, and can be storage into computer by canonical matrices, and be operated quickly in computer. Topsnut-gpws have such advantages: (i) no general polynomial algorithms for finding topological structures and colorings/labellings in graph theory, which are two basic components for producing Topsnut-gpws; (ii) easily yield text-based passwords, and such procedure is irreversible; (iii) easily operating like gesture passwords used in mobiles with touch screen; (iv) allow personal knowledge into making Topsnut-gpws for long time remembering in mind; (v) huge spaces \cite{Yao-Zhang-Sun-Mu-Wang-Zhang-Yang-2018}, for instant, there are $t_{23}\approx 2^{179}$ and $t_{24}\approx 2^{197}$, where $t_p$ is the number of graphs of $p$ vertices, and over 200 existing labellings \cite{Gallian2016}, and so on. Thereby, Topsnut-gpws have provable security, computational security and unbreakable in nowadays' computer. We will study Topsnut-gpws by the matching of view in this article.

\subsection{Examples}

Matching phenomena are popular and exist almost every where of the world, such as black and white, more and less, men and women, rich and poor, public and private, and even mathematics, also, is the matching of space structure and quantity. Matching is not a connection between two different things, but also connections of one-more things and more-more things. Matching is not a simple combination of two or more things, but a combination with restrictive conditions. Here, our matchings belong to  mathematics and cryptograph.

In cryptography we can consider that a public key and a private key form an \emph{authentication matching}. Sometimes, people want one public key vs two or more private keys. In \cite{Wang-Xu-Yao-2016} and \cite{Wang-Xu-Yao-Key-models-Lock-models-2016}, the authors have listed many advantages of Topsnut-gpws. One advantage of Topsnut-gpws is that Topsnut-gpws can produce easily text-based passwords with longer bytes as longer as desired. However, we cannot reconstruct the origin Topsnut-gpws from the text-based passwords made by them. This irreversibility also appears in Hash algorithm that is a one-way encryption system, that is, only encryption procedure, no decryption procedure.

We start our discussion with the following examples for showing Topsnut-gpws worked best in generating text-based passwords. In Fig.\ref{fig:K4-2-decomposition}, we can see a Topsnut-gpw $K_4$ having: Any two small circles (called vertices hereafter) are connected by an edge. Furthermore, we identify the vertices of the Topsnut-gpws $T_1,T_2,T_3$ pictured in Fig.\ref{fig:K4-2-decomposition} that have the same label into one vertex, the resulting graph is just $K_4$, so we write this fact as $K_4=\odot\langle T_i\rangle^3_1$ for briefness. A phenomenon is that any two caterpillars $T_i$ and $T_j$ have no the edge with the same label, we denote this phenomenon by $E(T_i)\cap E(T_j)=\emptyset $. However, the Topsnut-gpws $H_1,H_2,H_3$ shown in Fig.\ref{fig:K4-2-decomposition} have common edges with the same labels, we denote this fact as $E(H_i)\cap E(H_j)\neq \emptyset $; and identifying the vertices of these Topsnut-gpws $H_1,H_2,H_3$ results in $K_4$ by deleting multiple edges, we write this case as $K_4=\ominus(H_i)^3_1$. We denote the number of vertices of $T_i$ as $|V(T_i)|$ for $i=1,2,3$, where $V(G)$ is the set of vertices of a graph $G$. It is not hard to observe that $|V(T_i)|\neq |V(T_j)|$ as $i\neq j$. But, $|V(H_i)|=|V(H_j)|$ for $i\neq j$ in Fig.\ref{fig:K4-2-decomposition}, these $H_1,H_2,H_3$ are the spanning trees of $K_4$. Spanning trees, such as spanning algorithms, Spanning Tree Protocol (STP), minimum spanning trees and spanning tree searches, are useful in today's scientific areas, especially, network security.

\begin{figure}[h]
\centering
\includegraphics[height=4cm]{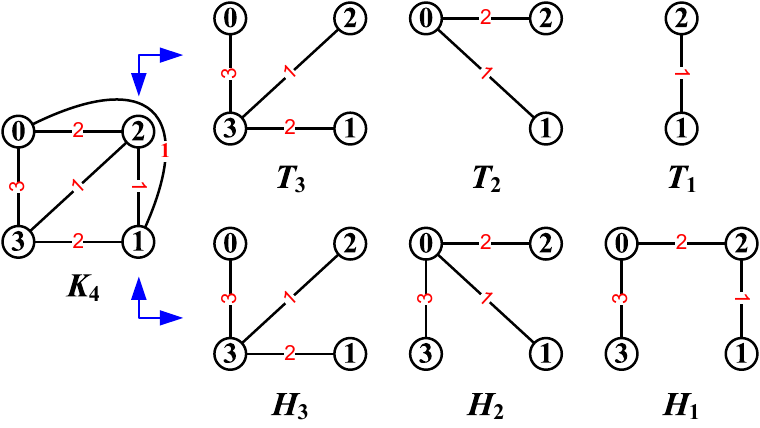}\\
\caption{\label{fig:K4-2-decomposition} {\small Two decompositions of $K_4$. The vertices  and edges of each $T_i$ and $H_i$ are labelled with $0,1,2,3$.}}
\end{figure}

We use three Topsnut-gpws $T_1,T_2,T_3$ shown in Fig.\ref{fig:K4-2-decomposition} to generate three text-based passwords $D(T_1)=112$, $D(T_2)=110022$, $D(T_3)=033312321$ for the purpose of encrypting electronic files. Moreover, we produce a little bit of complex text-based passwords below:
\begin{center}
$D_{123}=D(T_1)+D(T_2)+D(T_3)=112110022033312321$

$D_{321}=D(T_3)+D(T_2)+D(T_1)=033312321110022112$

$D_{132}=D(T_1)+D(T_3)+D(T_2)=112033312321110022$.
\end{center}
Clearly, $D_{123}$, $D_{321}$ and $D_{132}$ differ from to each other (notice that we have other three text-based passwords $D_{213}$, $D_{231}$ and $D_{312}$). In fact, we can construct more complex text-based passwords by
\begin{equation}\label{eqa:long-byte}
D_T=\sum ^m_{i=1}D(T_{j_i})
\end{equation}
with $1\leq j_i\leq 3$ and $m\geq 2$, such that $D_T$ has longer bytes as we desired. Also, we can get text-based passwords $D(H_i)$ from three Topsnut-gpws $H_1,H_2,H_3$ shown in Fig.\ref{fig:K4-2-decomposition}, and moreover
\begin{equation}\label{eqa:long-byte-1}
D_H=\sum ^m_{i=1}D(H_{j_i})
\end{equation}
with $1\leq j_i\leq 3$ an $m\geq 2$.

The second example for showing an important property of Topsnut-gpws. In Fig.\ref{fig:twi-odd-elegant}, we can walk along a path $1\rightarrow 10\rightarrow 21\rightarrow 6\rightarrow 13\rightarrow 2\rightarrow 9\rightarrow 14$ to make two text-based passwords
\begin{equation}\label{eqa:long-byte-elegant00}
{
\begin{split}
T'_{vv}=&1'1012141'10'1152110'21'681021'6'13\\
&17216'13'2613'2'913192'9'14'135914'
\end{split}}
\end{equation}
and
\begin{equation}\label{eqa:long-byte-elegant11}
{
\begin{split}
T'_{vev}=&1'1110131215141'10'11131592110'21'567\\
&891021'6'19131175216'13'15219613'2'119\\
&151321192'9'2414189'14'1511731951914'
\end{split}}
\end{equation}
or eliminating ``$'$'' from $T'_{vv}$ and $T'_{vev}$ for complex reason yields
\begin{equation}\label{eqa:long-byte-elegant22}
{
\begin{split}
T_{vv}=&1101214110115211021681021613\\
&172161326132913192914135914
\end{split}}
\end{equation}
and
\begin{equation}\label{eqa:long-byte-elegant33}
{
\begin{split}
T_{vev}=&11110131215141101113159211021567\\
&89102161913117521613152196132119\\
&15132119292414189141511731951914
\end{split}}
\end{equation}

\begin{figure}[h]
\centering
\includegraphics[height=5.6cm]{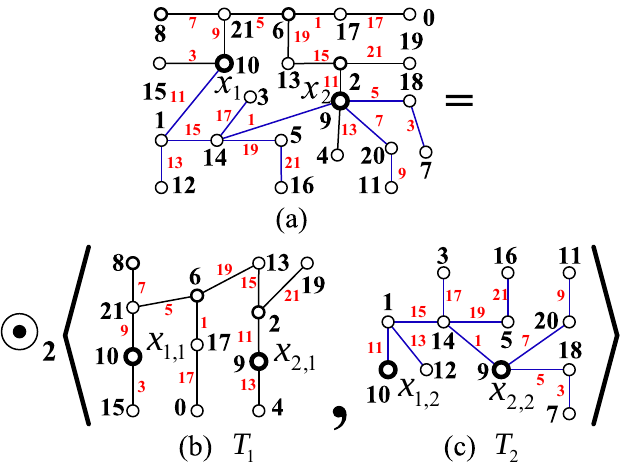}\\
\caption{\label{fig:twi-odd-elegant} {\small (a) is a twin odd-elegant graph $\odot_2 \langle T_1, T_2\rangle$ cited from \cite{Wang-Xu-Yao-2017}.}}
\end{figure}

It is not easy to reconstruct $\odot_2 \langle T_1, T_2\rangle$ shown in Fig.\ref{fig:twi-odd-elegant} by $T_{vv}$ in (\ref{eqa:long-byte-elegant22}) or $T_{vev}$ in (\ref{eqa:long-byte-elegant33}), thus large scale Topsnut-gpws are \emph{provable security}, since reconstructing graph problems are related with some mathematical conjectures, such as Kelly-Ulam's Reconstruction Conjecture proposed in 1942. So we can claim that the procedure of getting text-based passwords from Topsnut-gpws is \emph{irreversible}. On the other hand, large scale Topsnut-gpws made by various graph labellings are \emph{computational security, or computationally unbreakable}, since \emph{no polynomial algorithm} for finding all possible graph labellings for a given graph, also \emph{no polynomial algorithm} for constructing all non-isomorphic graphs. We have \emph{no polynomial algorithm} for listing all possible text-based passwords in a Topsnut-gpw, although it may be interesting and important.

We have to face the following problems:

\begin{asparaenum}[Prob-1.  ]
\item In general, for $n\geq 4$, how many ways are there to form a Topsnut-gpw $K_n=\odot\langle T_i\rangle^m_1$ with $E(T_i)\cap E(T_j)=\emptyset $ as $i\neq j$, or $K_n=\ominus(H_i)^m_1$ with $V(K_n)=V(H_i)$ for $i\in [1,m]$ and $|E(H_i)|=|E(H_j)|$?
\item Are the text-based passwords $D_T$ in (\ref{eqa:long-byte}) and $D_H$ in (\ref{eqa:long-byte-1}), $T_{vv}$ in (\ref{eqa:long-byte-elegant22}) and $T_{vev}$ in (\ref{eqa:long-byte-elegant33}) computationally unbreakable?
\end{asparaenum}

\vskip 0.4cm

Fig.\ref{fig:H3-T3-decomposition} tells us: Each Topsnut-gpw $T_i$ ($H_i$) consists of one configuration $T_{i,1}$ ($H_{i,1}$) and one labelling $T_{i,2}$ ($H_{i,2}$). We need to know:
\begin{asparaenum}[$-$ ]
\item How many configurations $T_{i}$ or $H_{i}$ are there for producing $K_n$ with $n\geq 4$?
\item How many type of label-functions (also, called labellings hereafter) do $T_{i}$ and $H_{i}$ admit?
\item How to label the vertices or edges of $T_{i}$ and $H_{i}$ with the labellings admitted by $T_{i}$ and $H_{i}$, such that identifying the vertices of $T_{i}$ or $H_{i}$ that have the same labels into one just results $K_n=\odot\langle T_i\rangle^m_1$ or $K_n=\ominus(H_i)^m_1$?
\end{asparaenum}

\begin{figure}[h]
\centering
\includegraphics[height=4.4cm]{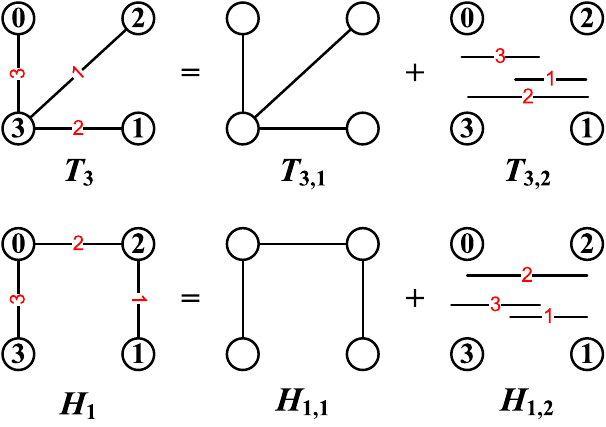}\\
\caption{\label{fig:H3-T3-decomposition} {\small Each Topsnut-gpw $T_i$ ($H_i$) consists of a topological structure $T_{i,1}$ ($H_{i,1}$) and label-function $T_{i,2}$ ($H_{i,2}$).}}
\end{figure}

A public key and a private key make an authentication true in network communication. Sometimes, an authentication needs one public key and two or more private keys, and vice versa. In other words, we can consider that ``public key vs private key'' forms some \emph{matching partition} (authentication can be seen as a \emph{matching entirety} that can be partitioned into several parts). Here, we will design matching type of Topsnut-gpws (\emph{Topsnut-matchings}) for the requirement of protecting people's information and property in networks.

The topic of matching partition contains: configuration matching partition, coloring/labelling matching partition, set matching partition, matching chain, etc. In the number of matching partitions, we have one-vs-one, one-vs-more and more-vs-more styles of matching partitions. Each matching mentioned here will be obtained by one of configuration-vs-configuration, configuration-vs-labelling, labelling-vs-labelling, and (configuration, labelling)-vs-(configuration, labelling).

A Topsnut-gpw was made by a \emph{topological structure} (also, configuration, called \emph{graph} in graph theory, which is a branch of mathematics) with a \emph{label-function} (also, called graph labelling, or labelling for short) on vertices, or edges, or vertices and edges (see Fig.\ref{fig:H3-T3-decomposition}). So, we are reasonable to consider any labeled graph of graph theory as a Topsnut-gpw here.  Notice that Topsnut-gpws can be defined by many labellings shown in \cite{Gallian2016}.

\subsection{Preliminary}

Before exploring solutions of Prob-1 and Prob-2, we need terminology, notations and particular graphs (=configurations) in our discussion, standard notations and definitions of graph theory can be found in \cite{Bondy-2008}. A $(p,q)$-graph $G$ has $p$ vertices and $q$ edges. We will use a notation $[a, b]=\{a, a+1,\dots, b\}$, where $m, n$ are integers with $ 0\leq m<n$, and employ another notation $[\alpha, \beta]^o=\{\alpha, \alpha+2, \dots, \beta\}$ with odd integers $\alpha, \beta$ holding $ 1\leq \alpha<\beta$ true.

A \emph{tree} is a graph in which any pair of two vertices $x,y$ is connected by a unique path $P(x,y)=xu_1u_2\cdots u_my$; a \emph{leaf} is a vertex of degree one; a \emph{caterpillar} is a tree such that the deletion of all leaves of the tree results in a path; a \emph{lobster} is a tree such that the deletion of all leaves of the tree produces just a caterpillar.

A \emph{labelling} $h$ of a graph $G$ is a mapping $h:S\subseteq V(G)\cup E(G)\rightarrow [a,b]$ such that $h(x)\neq h(y)$ for any pair of elements $x,y$ of $S$, and write the label set $h(S)=\{h(x): x\in S\}$. A \emph{dual labelling} $h'$ of a labelling $h$ is defined as: $h'(z)=\max h(S)+\min h(S)-h(z)$ for $z\in S$. Moreover, $h(S)$ is called  \emph{vertex label set} if $S=V(G)$, $h(S)$ \emph{edge label set} if $S=E(G)$, and $h(S)$ a \emph{universal  label set} if $S=V(G)\cup E(G)$.

We, in the following discussion, need four pairs of graph operations on four basic elements of vertex, edge, path and cycle as follows: In Fig.\ref{fig:two-operations}, a \emph{vertex-split operation} from (a) to (b); a \emph{vertex-identifying operation} from (b) to (a); an \emph{edge-split operation} from (c) to (d); and an \emph{edge-identifying operation} from (d) to (c). Let $N(x)$ be the set of all neighbors of a vertex $x$, very often, $N(x)$  is called a \emph{neighbor set}. In Fig.\ref{fig:two-operations} (b) and (d), after a series of vertex/edge-split operations, we have to emphasize that the neighbor sets hold $N(y')\cap N(y'')=\emptyset $, $N(u')\cap N(u'')=\emptyset $, $N(u')\cap N(v'')=\emptyset $, $N(v')\cap N(u'')=\emptyset $ and $N(v')\cap N(v'')=\emptyset $ true. The \emph{path/cycle-split operation} and the \emph{path/cycle-identifying operation} are shown in Fig.\ref{fig:path-split}, also, it stresses that the neighbor sets $N(u'_i)\cap N(u''_j)=\emptyset $ with $i,j\in [1,n]$.

\begin{figure}[h]
\centering
\includegraphics[height=6.4cm]{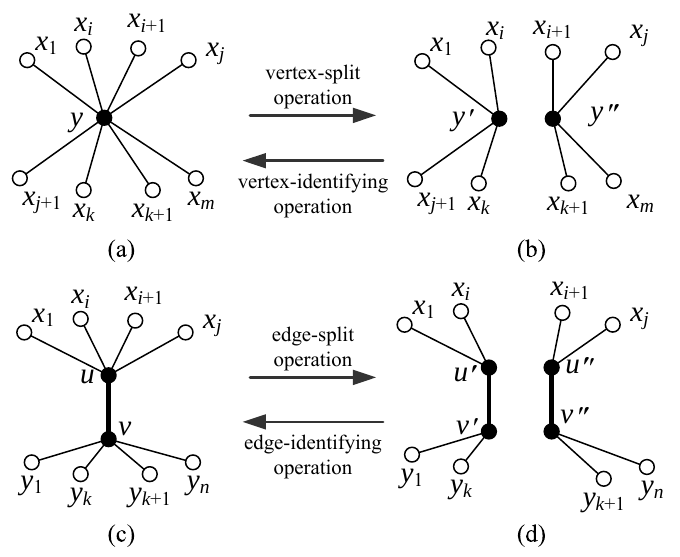}\\
\caption{\label{fig:two-operations} {\small A scheme for illustrating four graph operations: vertex-split operation; vertex-identifying operation; edge-split operation; edge-identifying operation.}}
\end{figure}

\begin{figure}[h]
\centering
\includegraphics[height=7.4cm]{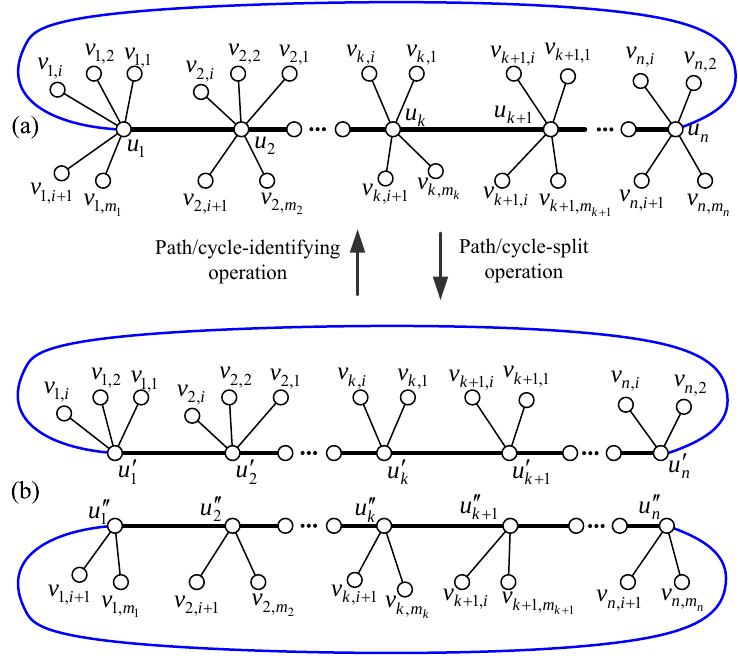}\\
\caption{\label{fig:path-split} {\small A path-split operation from (a) to (b), and a path-identifying operation from (b) to (a). Also, a cycle-split operation from (a) to (b), and a cycle-identifying operation from (b) to (a).}}
\end{figure}

\section{Matching diversity}

\subsection{Configuration matching partition}

Let $W$ be a \emph{universal graph} admitting a labelling $f$. If $W$ contains \emph{edge disjoint subgraph} $G_i$ admitting a labelling $f_i$ induced by $f$ with $i\in [1,m]$ such that $V(W)=\bigcup^m_{i=1} V(G_i)$ and $E(W)=\bigcup^m_{i=1} E(G_i)$ after identifying the vertices of $G_1, G_2,\dots ,G_m$ having the same labels respectively (see an example $K_4=\odot\langle T_i\rangle ^3_1$ shown in Fig.\ref{fig:K4-2-decomposition}), we write this case as $W=\odot\langle G_i\rangle ^m_1$, called an \emph{edge-disjoint matching partition}. On the other hand, each $G_i$ matches with the set $Z_i=\{G_1, G_2,\dots ,G_m\}\setminus \{G_i\}$ on the edge-disjoint matching partition $W$, then we denote simply $W$ as $\odot\langle G_i, Z_i\rangle$. Also, we allow the case $V(G_i)=V(W)$ with $i\in [1,m]$ in an edge-disjoint matching partition $W=\odot\langle G_i\rangle ^m_1$.

In encryption of networks, each $G_i$ with $i\in [1,m]$ can be considered as a \emph{key}, so $W=\odot\langle G_i\rangle ^m_1$ is just an \emph{edge-disjoint authentication}. Furthermore, if $G_i$ is a \emph{public key}, and the set $Z_i$ is a group of \emph{private keys}, and $W=\odot\langle G_i, Z_i\rangle$ is just a \emph{one-vs-more authentication}.

Similarly, we have a \emph{multiple-edge matching partition} $W=\ominus(H_i) ^m_1$ as each subgraph $H_i$ admitting a labelling $f_i$ induced by $f$ with $i\in [1,m]$ holds $V(H_i)=V(W)$ and $E(H_i)\cap E(G_j)\neq \emptyset$ true for some $i\neq j$ by identifying the vertices of $\bigcup ^m_{i=1}V(H_i)$ having the same labels together and eliminating multiple edges. A \emph{mixed matching partition} $W=\bigcup^m_{i=1}G_i$ has some $G_j$ holding $V(G_j)\neq V(W)$ and $E(G_s)\cap E(G_t)\neq \emptyset$ true for some $s\neq t$.

\begin{expr}\label{expr:labelling-vs-dual-labelling}
Naturally, a labelling $h$ and its dual labelling $h'$ of a graph $G$ are matching with each other, call $(h,h')$ a \emph{labelling matching} of $G$. Let $G_1,G_2$ be two copies of $G$, and let $G_1$ admit the labelling $h$, $G_2$ admit the dual labelling $h'$ of $h$, so we have a graph $\odot \langle G_1,G_2\rangle$ obtained by identifying those vertices of $V(G_1)\cup V(G_2)$ having the same labels together.
\end{expr}

\begin{expr}\label{expr:complete-matching}
As $W=K_n$, $V(K_n)=V(G_i)$ with $i=1,2$ and $E(G_1)\cap E(G_2)=\emptyset$ holding $E(K_n)=E(G_1)\cup E(G_2)$ true, we say $G_1$ and $G_2$ matching to each other and $G_1$ and $G_2$ are \emph{complementary} to each other, moreover $K_n=\odot\langle G_1,G_2\rangle$. Conversely, by doing a vertex-split operation to each vertex of $K_n$, so we split $K_n$ into two subgraphs $G_1$ and $G_2$.
\end{expr}

\begin{expr}\label{expr:Euler-matching}
As $W$ is an Euler graph, the edge-disjoint matching partition $W=\odot\langle G_i\rangle ^m_1$ has its vertex degree $\textrm{deg}_W(u)=\sum ^m_{i=1}\textrm{deg}_{G_i}(u)$ to be even for each vertex $u\in V(W)$. Here, $W$ admits some v-set e-proper labelling $(F,g)$ defined in Definition \ref{defn:set-labelling}. For $m=2$,  $G_1,G_2$ are not Euler graphs, but $\odot\langle G_1,G_2\rangle$ is an Euler graph, then we say both $G_1,G_2$ are \emph{Euler matching} to each other.
\end{expr}

\begin{expr}\label{expr:complete-conjecture}
As $W$ is a complete graph $K_n$, we have the following longstanding conjectures in graph theory, which show that the edge-disjoint matching partition $K_n=\odot\langle G_i\rangle ^m_1$ may be \emph{computationally unbreakable}:

(i) Anton Kotzig (1964) proposed the \emph{Perfect 1-Factorization Conjecture}: For any $n \geq 2$,
$K_{2n}$ can be decomposed into $2n-1$ perfect matchings such that the union of any two matchings
forms a hamiltonian cycle of $K_{2n}$.

(ii) If each tree admits a graceful labelling, then this
will settle down a well-known \emph{Ringel-Kotzig Decomposition Conjecture} (Gerhard Ringel and Anton Kotzig, 1963; Alexander Rosa, 1967): A complete graph $K_{2n+1}$ can be decomposed into $2n + 1$ edge-disjoint subgraphs that are all isomorphic to any given tree having $n$ edges.

(iii) \emph{K-T conjecture} (Gy\'{a}r\'{a}s and Lehel, 1978; B\'{e}la Bollob\'{a}s, 1995): For integer
$n \geq 3$, given $n$ disjoint trees $T_k$ having $k$ vertices with respect to $k\in [1,n]$. Then the complete graph
$K_n$ can be decomposed into the union of $n$ edge-disjoint trees $H_k$, such that $T_k\cong H_k$ with $k\in [1,n]$. Very often, we write this case as $\langle T_1, T_2, \dots, T_m | K_n\rangle$.

Thereby, the above three conjectures can help us to design more complex Topsnut-gpws to be computationally unbreakable.
\end{expr}

\subsection{Coloring/labelling matchings}

Graph coloring/labellings are powerful and essential for designing Topsnut-gpws, let us see an example as follows:

\begin{expr}\label{expr:planar-matching}
As $W$ is a \emph{maximal planar graph}, each $G_i$ with $i\in [1,m]$ is a \emph{semi-maximal planar graph} (\cite{Jin-Xu-(1)-2016, Jin-Xu-(2)-2016, Jin-Xu-(3)-2016, Jin-Xu-(4)-2016}) having a common even-cycle $C$, such that $E(G_i)\cap E(G_j)=E(C)$ for $i\neq j$. If each $G_i$ is $4$-colorable such that the even-cycle $C$ is colored with $1,2$ only, then we do a series of cycle-identifying operations on $m$ common even-cycles in $G_1,G_2,\dots ,G_m$ in to one. The resulting Topsnut-gpw, like a ``book'', is denoted as $W=\ominus_C(G_i) ^m_1$, and $G_i\cup G_j$ for $i\neq j$ is a maximal planar graph, we say $W=\ominus_C(G_i) ^m_1$ to be a \emph{maximal planar $C$-matching partition}. Each Topsnut-gpw $G_i$ is as a ``page'' of the ``book'' $W$, and the common even-cycle $C$ is as the ``bone'' of the ``book'' $W$. This ``book'' can be considered as an authentication too.  It is not hard to see that there are two or more (infinite) ``books'' $W=\ominus_C(G_i) ^m_1$.

Conversely, select a cycle $L$ of a maximal planar graph $G$, and do an edge-split operation to each edge of the cycle $L$, so we split $G$ into two semi-maximal planar graphs $G^L$ and $\overline{G}^L$, and call $G^L$ and $\overline{G}^L$ to be matching to each other (see Fig.\ref{fig:sami-planar}). So, $G=\ominus_L\langle G^L, \overline{G}^L\rangle$. Determining particular \emph{semi-maximal planar matchings} $(G^L,\overline{G}^L)$ can provide more complex models for authentication of public keys and private keys, such as both graphs $G^L-V(L)$ and $\overline{G}^L-V(L)$ are trees.
\end{expr}

\begin{figure}[h]
\centering
\includegraphics[height=3cm]{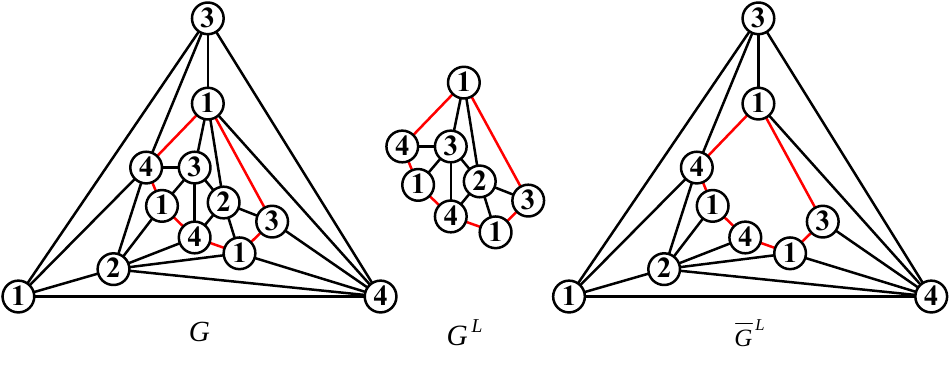}\\
\caption{\label{fig:sami-planar} {\small A maximal planar graph $G=\ominus_L(G^L,\overline{G}^L)$ with its two semi-maximal planar graphs $G^L$ and $\overline{G}^L$. As known, every planar graph is $4$-colorable, no mathematical proof for this fact up to now.}}
\end{figure}

\begin{defn}\label{defn:set-ordered-odd-graceful-labelling}
(\cite{Gallian2016, Bing-Yao-Cheng-Yao-Zhao2009, Zhou-Yao-Chen-Tao2012}) Suppose that a bipartite $(p,q)$-graph $G$ with partition $(X,Y)$  admits a vertex labelling $f:V(G) \rightarrow [0,2q-1]$, such that every edge $uv$ is labeled as $f(uv)=|f(u)-f(v)|$ holding $f(E(G))=[1, 2q-1]^o$ true, we call $f$ an \emph{odd-graceful labelling} of $G$ (called an \emph{odd-graceful graph}). Furthermore, if $G$ holds $\max\{f(x):~x\in X\}< \min\{f(y):~y\in Y\}$ ($f_{\max}(X)<f_{\min}(Y)$ for short) true, then $f$ is called a \emph{set-ordered odd-graceful labelling}.\qqed
\end{defn}

In \cite{Yao-Mu-Sun-Zhang-Wang-Su-Ma-2018}, we have expanded the odd-graceful labelling as: Let $G$ be a $(p,q)$-graph, we have:

(i) If $G$ admits a vertex labelling $f:V(G) \rightarrow [0,2q]$ (it is allowed $f(E(G))\subset [1,2q]$), such that every edge $uv$ is labeled as $f(uv)=|f(u)-f(v)|$ or $f(uv)=2q-1-|f(u)-f(v)|$ and $f(E(G))=[1, 2q-1]^o$, then we call $f$ a \emph{pan-odd-graceful labelling}.

(ii) A \emph{$k$-matching odd-graceful labelling} $g$ of an odd-graceful labelling $f$ of a $(p,q)$-Topsnut-gpw $G$ is a vertex labelling defined on another graph $H$ admitting $g:V(H) \rightarrow [0,2q-1]$, every edge $uv\in E(H)$ has its label $g(uv)=|g(u)-g(v)|$ holding $g(E(H))=[1, 2q-1]^o$ true, such that $f(V(G))\cup g(V(H))=[0,2q]$, or $[0,2q-1]$ and $|f(V(G))\cap g(V(H))|=k$. We call $H$ with a $k$-matching odd-graceful labelling as an \emph{odd-graceful Topsnut-matching} of $G$, denoted as $\odot\langle G,H \rangle$. (see  Fig. \ref{fig:odd-graceful-matching} and Fig.\ref{fig:odd-graceful-matching-connected})

(iii) A \emph{$k$-sequential odd-graceful labelling} $h: V(G)\rightarrow [k,2q-1+k]$ such that the induced edge labelling $h(uv)=|h(u)-h(v)|$ for $uv \in E(G)$ holding $g(E(G))=[1,2q-1]^o$ true.

\begin{defn}\label{defn:odd-elegant-labelling}
\cite{Zhou-Yao-Chen-2013} A $(p,q)$-graph $G$ admits a labelling $f:V(G)
\rightarrow [0,2q-1]$ such that each edge $uv\in E(G)$ is labeled as $ f(uv)=f(u)+f(v)~(\bmod~2q)$. If the set $f(E(G))$ of all edge labels is equal to $[1, 2q-1]^{o}$, we call $f$ an \emph{odd-elegant labelling} of $G$ (called an \emph{odd-elegant graph}).\qqed
\end{defn}

Finding all odd-graceful (odd-elegant) labellings of a Topsnut-gpw $G$ admitting an odd-graceful (odd-elegant) labelling seems to be very difficult, and no way is for determining conditions for graphs admitting set-ordered odd-graceful (odd-elegant) labellings up to now. In Fig.\ref{fig:odd-graceful-matching}, we have six odd-graceful Topsnut-matchings $G_i=\odot_0\langle G,H_i \rangle$ with $i\in [1,6]$, since $G$ admits an odd-graceful labelling. Here, $V(G_i)=V(G)\cup V(H_i)$ and $E(G_i)=E(G)\cup E(H_i)$ with $i\in [1,6]$. Here, we present an algorithm for finding  odd-graceful Topsnut-matchings.

\vskip 0.4cm

\textbf{ODD-GRACEFUL-GRAPH Algorithm:}

\textbf{Input:} A connected $(p,q)$-graph $G$ admitting an odd-graceful labelling $f$.

\textbf{Output:} A connected odd-graceful Topsnut-matching $H$ admitting an odd-graceful labelling.

\emph{Step 1.} Build up an integer set $R(f,G)=[0,2q-1]\setminus f(V(G))$ (or $R'(f,G)=[0,2q]\setminus f(V(G))$), and make a candidate edge set $C_0=\{x^ox^e:x^o,x^e\in R(f,G), x^o\textrm{ is odd, }x^e\textrm{ is even}\}$, and a graph $H_0$ is constructed by identifying the ends of edges of $C_0$ into one vertex, these ends  have the same labels.

\emph{Step 2.} If the graph $H_k$ contains no two edges $x^o_ix^e_i$ and $x^o_jx^e_j$ holding $|x^o_i-x^e_i|=|x^o_j-x^e_j|$ true, go to Step 4; otherwise,  go to Step 3.

\emph{Step 3.} Do $H_{k+1}:=x^o_ix^e_i+H_k-E(k+1)$ with $E(k+1)=\{x^o_jx^e_j: |x^o_j-x^e_j|=k+1\}\subset C_0$ and $x^o_ix^e_i$ with $|x^o_i-x^e_i|=k+1$, such that $H_{k+1}$ is connected,  go to Step 2.

\emph{Step 4.} Return a connected Topsnut-matching $H$ admitting an odd-graceful labelling.

\vskip 0.4cm

\begin{figure}[h]
\centering
\includegraphics[height=7.2cm]{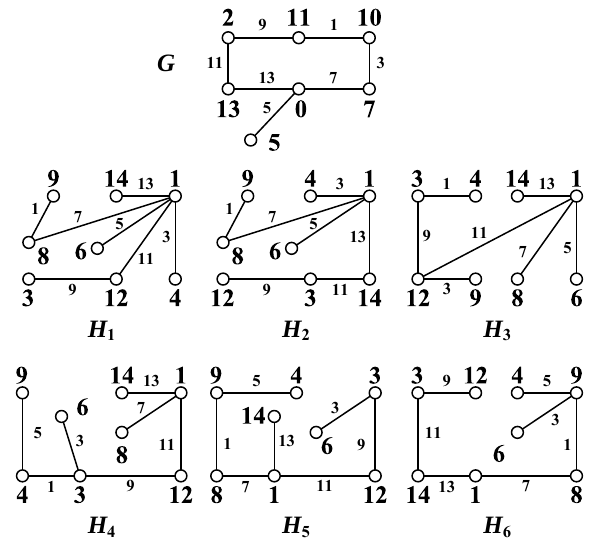}\\
\caption{\label{fig:odd-graceful-matching} {\small A $(7,7)$-graph $G$ admitting an odd-graceful labelling has six odd-graceful matchings $H_i$ with a matching odd-graceful labelling $g_i$ for $i\in [1,6]$, such that each odd-graceful Topsnut-matching $\odot\langle G,H_i \rangle$ holds $f(V(G))\cup g_i(V(H_i))= [0,14]$ and $f(V(G))\cap g_i(V(H_i))=\emptyset$ true with $i\in [1,6]$.}}
\end{figure}

\begin{figure}[h]
\centering
\includegraphics[height=2.4cm]{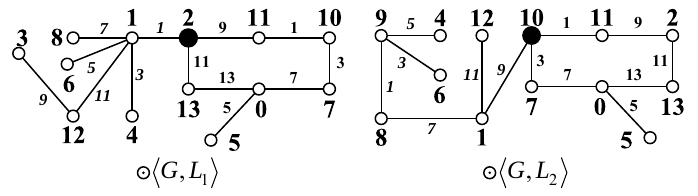}\\
\caption{\label{fig:odd-graceful-matching-connected} {\small Two odd-graceful Topsnut-matchings $\odot_1\langle G,L_1 \rangle$ and $\odot_1\langle G,L_2 \rangle$ of a $(7,7)$-graph $G$ shown in Fig.\ref{fig:odd-graceful-matching} hold $f(V(G))\cap g_1(V(L_1))=\{2\}$, $f(V(G))\cap g_2(V(L_2))=\{10\}$ and $f(V(G))\cup g(V(L_i))\neq [0,14]$ true with $i=1,2$.}}
\end{figure}

Let $H$ be an (a pan-)odd-graceful graph with vertices $v_1,v_2,\dots, v_p$. If each vertex $v_i$ matches with an (a pan-)odd-graceful graph $T_i$ with $i\in [1,p]$ such that there exists an odd-graceful Topsnut-matching $\odot_1\langle H,T_i\rangle ^p_1$ obtained by identifying the vertex $v_i$ with some vertex of $T_i$ into one, these two vertices have been labeled with the same labels. We say $\odot_1\langle H,T_i\rangle ^p_1$ a  \emph{(pan-)odd-graceful Topsnut-matching team} (see an example shown in  Fig.\ref{fig:odd-every-vertex-matching}). Moreover, $\odot_1\langle H,T_i\rangle ^p_1$ is called a \emph{uniformly (pan-)odd-gracefully Topsnut-matching team} if $T_i\cong T_j$ for $i\neq j$.

\begin{figure}[h]
\centering
\includegraphics[height=5.6cm]{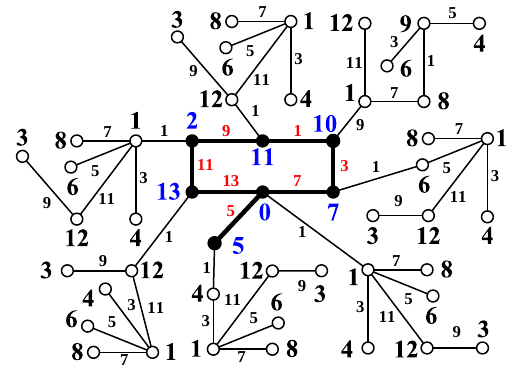}\\
\caption{\label{fig:odd-every-vertex-matching} {\small An odd-graceful Topsnut-matching team.}}
\end{figure}

\begin{thm} \label{thm:Topsnut-matching-team}
Each caterpillar $H$ of $p$ vertices has an (a pan-)odd-graceful Topsnut-matching team $\odot_1\langle H,T_i\rangle ^p_1$.
\end{thm}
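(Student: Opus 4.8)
The plan is to split the statement into two parts: (a) the caterpillar $H$ itself admits a convenient (pan-)odd-graceful labelling, and (b) every single vertex of $H$ can be matched by an auxiliary (pan-)odd-graceful graph sharing with $H$ exactly the label of that vertex. Throughout put $q=|E(H)|=p-1$. For (a) I would invoke the standard fact that caterpillars are set-ordered odd-graceful (cf. \cite{Bing-Yao-Cheng-Yao-Zhao2009, Gallian2016}): writing $(X,Y)$ for the bipartition of $H$ with $|X|=s$, there is an odd-graceful labelling $f$ of $H$ with $f(X)=\{0,2,\dots,2s-2\}$, $f(Y)=\{2s-1,2s+1,\dots,2q-1\}$, so that $f(E(H))=[1,2q-1]^o$ and $f_{\max}(X)<f_{\min}(Y)$; such an $f$ is a fortiori pan-odd-graceful. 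It leaves free the ``gap set'' $R=[0,2q]\setminus f(V(H))=\{1,3,\dots,2s-3\}\cup\{2s,2s+2,\dots,2q\}$, of size $q$, and this set together with the single shared label is all that an auxiliary graph matched at a vertex of $H$ may use.

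For (b), fix a vertex $v_i$ and put $c_i:=f(v_i)$. I would build $T_i$ explicitly as a star $K_{1,q_i}$ (a caterpillar), where the number of edges $q_i\ge q$ is taken as large as the construction requires; the vertex $w_i$ identified with $v_i$ is the centre of $T_i$ when $c_i$ is an extreme label ($0$ or $2q-1$) and a leaf otherwise, and is labelled $c_i$. The pan-odd-graceful rule is what makes this work: since the two candidate labels $d$ and $2q_i-1-d$ for an edge always have opposite parities, each edge is forced to receive the unique odd one, so the requirement $g_i(E(T_i))=[1,2q_i-1]^o$ reduces to choosing the remaining vertex labels of $T_i$ so that the $q_i$ numbers $|g_i(w_i)-g_i(\ell)|$ (over the leaves $\ell$, or over the leaves and the centre when $w_i$ is a leaf) form a transversal of the pairs $\{d,2q_i-1-d\}$. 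I would arrange this, in the spirit of the ODD-GRACEFUL-GRAPH Algorithm above, by spreading those labels over $[0,2q_i]$, taking them first from the gap set $R$ and then, when needed, from the ``fresh'' range $\{2q+1,2q+2,\dots,2q_i\}$ of labels larger than $2q-1$, so that by construction $g_i(V(T_i))\cap f(V(H))=\{c_i\}$ and no vertex of $T_i$ other than $w_i$ reuses a label of $H$.

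Finally I would assemble the team: for each $i\in[1,p]$ identify $v_i$ with $w_i$ (legitimate since both carry $c_i$); then $|g_i(V(T_i))\cap f(V(H))|=1$, so each $\odot_1\langle H,T_i\rangle$ is an odd-graceful Topsnut-matching, and the whole family $\odot_1\langle H,T_i\rangle^p_1$ is the claimed (pan-)odd-graceful Topsnut-matching team. The main obstacle is the constructive step in (b): producing, for each $v_i$, a graph $T_i$ that simultaneously carries the prescribed label $c_i$ at one vertex, uses no other label of $H$, and is genuinely (pan-)odd-graceful on some number $q_i$ of edges. The tight instances are the extreme ones --- most strikingly $H$ a star and $v_i$ its centre, when the colour class of $v_i$ supplies no leftover label of the opposite parity --- and it is precisely these that force $q_i>q$ and the use of the pan-odd-graceful (not the plain odd-graceful) rule; checking the transversal and distinctness condition uniformly across the cases (position of $c_i$ inside $f(X)\cup f(Y)$, choice of centre label, size of $q_i$) is the technical heart of the argument.
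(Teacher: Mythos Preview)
Your plan diverges from the paper's at the key step, and the divergence is what creates the obstacle you call the ``technical heart''.

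The paper does not keep the plain odd-graceful labelling on $H$ and then hunt for auxiliary $T_i$'s among stars. It first constructs (as you do) a set-ordered odd-graceful labelling $h$ on the caterpillar, but then assigns to $H$ the \emph{shifted} labelling $h^*(x)=h(x)+1$, which is pan-odd-graceful; the crucial consequence is that $h(V)$ and $h^*(V)=h(V)+1$ are disjoint and together cover $[0,2(p-1)]$. Thus the caterpillar $T$ itself, carrying $h$, already has its vertex labels entirely outside $h^*(V(H))$. Each $T_i$ is then $T$ with a single leaf swap: remove the leaf whose incident edge has label $1$, and attach a fresh leaf $w_i$ (labelled $f_i(w_i)\in h^*(V(H))$) to a vertex $w_i'$ of the remainder with $|f_i(w_i)-f_i(w_i')|=1$. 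The resulting $T_i$ is genuinely odd-graceful (not merely pan-), has exactly $p-1$ edges like $H$, and shares with $H$ precisely the one label $f_i(w_i)$. Letting that label run over the $p$ elements of $h^*(V(H))$ produces the team, and as a bonus all the $T_i$ are near-isomorphic to $H$.

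Your route instead keeps the odd-graceful labelling on $H$ and pushes the pan rule onto ad-hoc stars $K_{1,q_i}$ with $q_i\ge q$. Two problems remain. First, the paper's definition of a $k$-matching odd-graceful labelling pins $g_i:V(T_i)\to[0,2q-1]$ and $g_i(E(T_i))=[1,2q-1]^o$ to the \emph{same} $q=|E(H)|$, so taking $q_i>q$ and labels beyond $2q$ steps outside the stated framework. Second, even granting that licence, the transversal condition you isolate is not obviously satisfiable in the extremal case you yourself flag ($H$ a star, $v_i$ its centre, gap set entirely even), and you give no construction there. The paper's ``shift by one, swap a leaf'' device dissolves both issues at once.
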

\begin{proof} There is  a path $P=u_1u_2\cdots u_n$ in a caterpillar $H$, such that each $u_i$ has its own leaf set $L(u_i)=\{v_{i,j}:j\in [1,m_i]\}$ with $m_i\geq 0$ and $i\in [1,n]$. And
$$V(H)=V(P)\cup L(u_1)\cup L(u_2)\cup \cdots \cup L(u_n),$$ see a caterpillar shown in Fig.\ref{fig:caterpillar-general}.

We define an odd-graceful labelling $h$ of $H$ by setting

$h(u_1)=0$, $h(v_{2,j})=h(u_1)+2j$ with $j\in [1,m_2]$;

$h(u_3)=h(v_{2,m_2})+2=2(m_2+1)$,

$h(v_{4,j})=h(u_3)+2j=2(m_2+1+j)$ with $j\in [1,m_4]$;

$h(u_5)=h(v_{4,m_4})+2=2(m_2+m_4+1)+2=2(m_2+m_4+2)$,

$h(v_{6,j})=h(u_5)+2j=2(m_2+m_4+2+j)$ with $j\in [1,m_6]$.

For $i\geq 2$, we let
$$h(u_{2i-1})=h(v_{2i-2,m_{2i-2}})+2=2\left (i-1+\sum^{i-1}_{k=1}m_{2k}\right )$$
and for $j\in [1,m_{2i}]$,
$$h(v_{2i,j})=h(u_{2i-1})+2j=2\left (i-1+j+\sum^{i-1}_{k=1}m_{2k}\right ).$$

Clearly, $h(v_{2i-2,m_{2i-2}})<h(u_{2i-1})$ and $h(v_{2i,j})<h(v_{2i,j+1})$ with $j\in [1,m_{2i}-1]$ and $i\geq 2$.

\emph{Case 1.}  $n$ is even.

We set $h(u_{n})=h(v_{n,m_{n}})+1$, an odd integer, and $h(v_{n-1,j})=h(u_{n})+2j$  with $j\in [1,m_{n-1}]$. Notice that $h(u_{n})-h(v_{n,m_{n}})=1$. Furthermore, we have $h(u_{n-2})=h(v_{n-1,m_{n-1}})+2$, and $h(v_{n-3,j})=h(u_{n-2})+2j$  with $j\in [1,m_{n-3}]$.

For $i\geq 1$, $h(u_{n-2i})=h(v_{n-2i+1,m_{n-2i+1}})+2$, and $h(v_{n-2i-1,j})=h(u_{n-2i})+2j$  with $j\in [1,m_{n-2i-1}]$. As  $2i=n-2$, $h(u_{2})=h(v_{3,m_{3}})+2$, and $h(v_{1,j})=h(u_{2})+2j$  with $j\in [1,m_{1}]$.

\emph{Case 2.}  $n$ is odd.

We set $h(u_{n})=h(v_{n-1,m_{n-1}})+2$ to be an even integer, so we label $h(v_{n,j})=h(u_{n})+j$  with $j\in [1,m_{n}]$, such that each $h(v_{n,j})$ is  an odd integer. Then,  $h(v_{n,1})-h(u_{n})=1$. Next, let $h(u_{n-1})=h(v_{n,m_{n}})+2$, and $h(v_{n-2,j})=h(u_{n-1})+2j$  with $j\in [1,m_{n-2}]$. In general, we have $h(u_{n-2i+1})=h(v_{n-2i+2,m_{n-2i+2}})+2$, and $h(v_{n-2i,j})=h(u_{n-2i+1})+2j$  with $j\in [1,m_{n-2i}]$ for $i\geq 1$. As $2i=n-1$, $h(u_{2})=h(v_{1,m_{n-2i-2}})+2$, and $h(v_{1,j})=h(u_{2})+2j$  with $j\in [1,m_{1}]$.

Notice that $h(v_{1,m_{1}})=2p-3$ and $h(v_{1,m_{1}-1})=2p-5$. So, we can use the induction to show $h$ is a set-ordered   odd-graceful labelling of the  caterpillar $T$.

Now, we write a copy of the caterpillar $T$ with an odd-graceful labelling $h$ by $H$ and define a set-ordered pan-odd-graceful labelling $h^*$ of  $H$ as:  $h^*(x)=h(x)+1$ for each $x\in V(T)=V(H)$. Clearly,
$$h^*(V(H))\cup h(V(T))=[0,2(P-1)]
$$
and $h^*(V(H))\cup h(V(T))=\emptyset $. Now, we make another caterpillar $T-v_{n,1}$ with a labelling $f$ defined by $f(y)=h(y)$ for $y\in V(T)\setminus \{v_{n,1}\}$. So, $f(E(T-v_{n,1}))=[3,2p-3]^o$. We add a new vertex $w_i$ to $T-v_{n,1}$, and join $w_i$ with some vertex $w'_i$ of $T-v_{n,1}$ by an edge $w_iw'_i$, the resulting tree is denoted as $T_i=T-v_{n,1}+w_iw'_i$, and define a labelling $f_i$ of $T_i$ as $f_i(x)=f(x)$ for $x\in V(T_i)\setminus \{w_i\}$, and $f_i(w_i)\in h^*(V(H))$ such that $|f_i(w_i)-f_i(w'_i)|=1$, which means $f_i(V(T_i))=[1,2p-3]^o$. Thereby, $$|f_i(V(T_i))\cap h^*(V(H))|=1$$ and the graph $\odot_1\langle H,T_i\rangle$ is a pan-odd-graceful Topsnut-matching, and so we can claim that $\odot_1\langle H,T_i\rangle^p_1$ is an (a pan-)odd-graceful Topsnut-matching team. The proof of this theorem is complete.
\end{proof}

\begin{figure}[h]
\centering
\includegraphics[height=2.4cm]{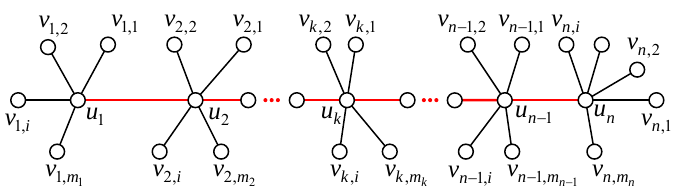}\\
\caption{\label{fig:caterpillar-general}{\footnotesize A caterpillar.}}
\end{figure}

In the proof of Theorem  \ref{thm:Topsnut-matching-team}, we can see
$$
H-v_{n,1}\cong T-v_{n,1}\cong T_i-w_i
$$ with $i\in [1,p]$, so $\odot_1\langle H,T_i\rangle^p_1$ is approximately a uniformly (pan-)odd-gracefully Topsnut-matching team. Thereby, we can get a result: ``\emph{If a tree $T$ of $p$ vertices admits an odd-graceful labelling $f$ such that $f(uv)=1$ for an edge $uv \in E(T)$, where $v$ is a leaf of $T$, that is, $deg_T(v)=1$. Then we have an (a pan-)odd-graceful Topsnut-matching team $\odot_1\langle H,T_i\rangle^p_1$, where $T_i$ is obtained by deleting the leaf $v$ of $T$ and add a new vertex $w$ to the remainder $T-v$, and join $w$ with some vertex of $T-v$; $H$ is a copy of $T$ and admits a pan-odd-graceful labelling $g$ defined by $g(x)=f(x)+1$ for $x\in V(T)=V(H)$}.''

For designing complex Topsnut-gpws, Wang \emph{et al.} (\cite{Wang-Xu-Yao-2017-Twin, Wang-Xu-Yao-2017}) have defined firstly the twin-type of labellings by means of the matching of view (also, \emph{key-vs-lock}).

\begin{defn}\label{defn:odd-elegant-labelling}
\cite{Wang-Xu-Yao-2017-Twin} For two connected $(p_i,q)$-graphs $G_i$ with $i=1,2$, and $p=p_1+p_2-2$, if a $(p,q)$-graph $G=\odot \langle G_1, G_2\rangle$ admits a vertex labelling $f$: $V(G)\rightarrow [0, q]$ such that

(i) $f$ is just an odd-graceful labelling of of $G_1$;

(ii) $f(E(G_2))=\{f(uv)=|f(u)-f(v)|: uv\in E(G_2)\}=[1, q-1]^o$;

(iii) $|f(V(G_1))\cap f(V(G_2))|=k$ and $f(V(G_1))\cup f(V(G_2))\subseteq [0, q-1]$.

Then $f$ is called a \emph{twin odd-graceful labelling} (Tog-labelling) of $G$, and $G$ a \emph{Tog-matching partition}.\qqed
\end{defn}

\begin{defn}\label{defn:twin-odd-elegant-graph}
\cite{Wang-Xu-Yao-2017} For two connected $(p_i,q)$-graphs $G_i$ with $i=1,2$, and $p=p_1+p_2-2$, if a $(p,q)$-graph $G=\odot \langle G_1, G_2\rangle$ admits a vertex labelling $f$: $V(G)\rightarrow [0, q-1]$ such that

(i) $f$ is just an odd-elegant labelling of $G_i$ with $i=1,2$;

(ii) $|f(V(G_1))\cap f(V(G_2))|=k$ and $f(V(G_1))\cup f(V(G_2))\subseteq [0, q-1]$.

Then $f$ is called a \emph{twin odd-elegant labelling} (Toe-labelling) of $G$ (called a \emph{Toe-graph}), and $\odot \langle G_1, G_2\rangle$ is called a \emph{Toe-matching partition}, where $G_1$ is the \emph{Toe-source} and $G_2$ is a \emph{Toe-association}.\qqed
\end{defn}

Fig.\ref{fig:twi-odd-elegant} shows an example for understanding Definition \ref{defn:twin-odd-elegant-graph}. If each $G_i$ with $i=1,2$ is a connected graph in Definition \ref{defn:twin-odd-elegant-graph}, and $G_1$ is a bipartite connected graph with bipartition $(X_1,Y_1)$ holding $f_{\max}(X_1)<f_{\min}(Y_1)$ true, then $f$ is called a \emph{set-ordered twin odd-elegant labelling} (Sotoe-labelling) of $G$ (called a \emph{Sotoe-graph}). Wang \emph{et al.} have shown the algorithms for producing Toe-graphs $G=\odot \langle G_1, G_2\rangle$, such as

\begin{thm} \label{thm:set-ordered-and-twin-odd-elegant}
\emph{\cite{Wang-Xu-Yao-2017}} Every non-star tree $T$ admitting a set-ordered odd-elegant labelling matches with at least two trees $T_1,T_2$ such that each $\odot_2 \langle T, T_i\rangle$ with $i=1,2$ admits a set-ordered twin odd-elegant labelling.
\end{thm}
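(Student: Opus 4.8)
The plan is to first pin down exactly what a set-ordered odd-elegant labelling of a tree looks like, and then to produce the partner trees by a single $\pm 1$ shift of the vertex labels that leaves every edge label unchanged. Concretely, let $T$ have $q$ edges, bipartition $(X,Y)$ with $|X|=s$, $|Y|=t$, $s+t=q+1$, and a set-ordered odd-elegant labelling $f$ with $f_{\max}(X)<f_{\min}(Y)$. Since every edge label $f(u)+f(v)~(\bmod~2q)$ is odd, the two ends of each edge have opposite parity, so (as $T$ is connected) one part of $T$ carries only even labels and the other only odd labels. I would first check that the lower part $X$ must be the even one (were it the odd part it could use at most $s-1$ admissible values, a contradiction), and then a squeezing argument forces the precise form $f(X)=\{0,2,\dots,2s-2\}$ and $f(Y)=\{2s-1,2s+1,\dots,2q-1\}$: $f(Y)$ has to be $t$ odd numbers all exceeding $f_{\max}(X)$, which already needs $f_{\max}(X)\le 2s-2$, hence $f(X)$ is an $s$-element subset of the $s$-element set $\{0,2,\dots,2s-2\}$ and so equals it, and $f(Y)$ is the remaining list of odd numbers. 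Since $T$ is non-star, $s\ge 2$ and $t\ge 2$; in particular $f_{\max}(X)=2s-2$ and $f_{\min}(Y)=2s-1$ are consecutive.

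For the first partner I would take a copy $T_1$ of $T$ with the inherited bipartition $(X_1,Y_1)$ and set $g(x)=f(x)+1$ for $x\in X_1$ and $g(y)=f(y)-1$ for $y\in Y_1$. Then $g$ is injective into $[0,2q-1]$, sending $X_1$ onto $\{1,3,\dots,2s-1\}$ and $Y_1$ onto $\{2s-2,2s,\dots,2q-2\}$, and, crucially, $g(x)+g(y)=f(x)+f(y)$ on every edge, so $g$ reproduces the edge labels of $f$ and is an odd-elegant labelling of $T_1$. Comparing the two vertex-label sets gives $f(V(T))\cup g(V(T_1))=[0,2q-1]$ and $f(V(T))\cap g(V(T_1))=\{2s-2,\,2s-1\}$, a set of size $2$. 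Identifying the vertex of $T$ labelled $2s-2$ with that of $T_1$ labelled $2s-2$, and likewise for $2s-1$, yields $\odot_2\langle T,T_1\rangle$, on which the common labelling meets every requirement of Definition~\ref{defn:twin-odd-elegant-graph} with $k=2$; it is set-ordered because the copy of $T$ keeps the bipartition $(X,Y)$ with $f_{\max}(X)<f_{\min}(Y)$. (If those two vertices happen to be adjacent in $T$ one gets a repeated edge carrying the same label from both sides, which one may simply keep or delete.)

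For the second partner I would repeat this with the dual labelling $f'(z)=(2q-1)-f(z)$ in place of $f$. The same modular cancellation shows $f'$ is again odd-elegant, and it is set-ordered with the roles of $X$ and $Y$ interchanged, namely $f'(Y)=\{0,2,\dots,2t-2\}$ and $f'(X)=\{2t-1,\dots,2q-1\}$. Running the $+1/-1$ shift relative to $f'$ (now adding $1$ on $Y$ and subtracting $1$ on $X$) produces a tree $T_2$ whose label set meets $f'(V(T))$ in the two consecutive values $\{2t-2,\,2t-1\}$ and together with it exhausts $[0,2q-1]$, so gluing there gives a set-ordered twin odd-elegant labelling of $\odot_2\langle T,T_2\rangle$. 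These two matchings differ: their interface pairs $\{2s-2,2s-1\}$ and $\{2t-2,2t-1\}$ are unequal whenever $|X|\neq|Y|$, and when $|X|=|Y|$ the source labellings $f$ and $f'$ — hence the glued labelled trees — are already different. Thus $T$ matches with at least two trees.

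The crux is the normalisation step: the rigidity $f(X)=\{0,2,\dots,2s-2\}$, $f(Y)=\{2s-1,\dots,2q-1\}$, and especially the consecutiveness of $f_{\max}(X)$ and $f_{\min}(Y)$, is exactly what makes the $\pm1$ shift land on a legitimate configuration with overlap precisely $2$; establishing this interval structure cleanly is where the real work lies. Everything afterwards — verifying that $g$ is a labelling, that the edge labels are unchanged modulo $2q$, that the overlap is $\{2s-2,2s-1\}$, and that the two partners are inequivalent — is routine, the only minor nuisance being the possible repeated edge at the point of gluing, which does not affect the labelling.
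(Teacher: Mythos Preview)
The paper does not prove this theorem; it is quoted from \cite{Wang-Xu-Yao-2017} without argument, so there is no in-paper proof to compare against. Your construction is correct and is in the spirit of the paper's other twin-labelling proofs (e.g.\ Corollary~\ref{thm:self-matching-twin-odd-graceful}, which uses a uniform $+1$ shift): the normalisation to $f(X)=\{0,2,\dots,2s-2\}$, $f(Y)=\{2s-1,\dots,2q-1\}$ is valid (your parity/counting sketch works, though the phrase ``at most $s-1$ admissible values'' is imprecise; the clean statement is that odd values below $f_{\min}(Y)$ plus even values above $f_{\max}(X)$ total only $q<q+1$), and once that is in hand the $\pm1$ shift preserves all edge sums, lands in $[0,2q-1]$, and produces overlap exactly $\{2s-2,2s-1\}$. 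The dual $f'(z)=2q-1-f(z)$ is again odd-elegant since $k\mapsto 2q-2-k\pmod{2q}$ permutes $[1,2q-1]^o$, and repeating the shift yields the second partner.

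One clarification you should make explicit: your two partners coincide as \emph{unlabelled} graphs. Under $f'$ the vertex with label $2t-2$ is the $y$ with $f(y)=2q-1-(2t-2)=2s-1$, and the vertex with label $2t-1$ is the $x$ with $f(x)=2s-2$; these are the very same two vertices of $T$ used in the first gluing, and on the partner side the same thing happens. So $\odot_2\langle T,T_1\rangle\cong\odot_2\langle T,T_2\rangle$ as abstract graphs, and your ``interface pairs differ'' argument is really about the labels, not the attachment points. In the Topsnut-gpw framework of this paper a labelled graph is the object of interest, so two distinct labellings do count as two distinct matchings --- and you do verify that $(f,g)\neq(f',h)$ as labellings --- but you should say this directly rather than leave the impression that the underlying gluings are structurally different.
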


Wang \emph{et al.} have constructed large scale of Toe-graphs $\odot_2 \langle H_{1}, H_{2}\rangle$ by smaller Toe-graphs $\odot_2 \langle T_i, T'_i\rangle$ with $i\in [1,m]$ for exploring Topsnut-chains. Moreover, they have mixed odd-graceful labelling with odd-elegant labelling together in Definition \ref{defn:2odd2-matching-labelling} below.

\begin{defn}\label{defn:2odd2-matching-labelling}
\cite{Wang-Xu-Yao-2017} For two connected $(p_i,q)$-graphs $G_i$ with $i=1,2$, and $p=p_1+p_2-2$, if a $(p,q)$-graph $G=\odot \langle G_1, G_2\rangle$ admits a vertex labelling $f$: $V(G)\rightarrow [0, q]$ such that

(i) $f$ is an odd-graceful labelling of $G_1$;

(ii) $f:V(G_2)\rightarrow [0,2|E(G_2)|]$ holding
$${
\begin{split}
f(E(G_2))&=\{f(uv)=f(u)+f(v) (\bmod~2|E(G_2)|):\\
&\qquad uv\in E(G_2)\}\\
&=[1, 2|E(G_2)|-1]^{o}
\end{split}}
$$ true.

Then $f$ is called a \emph{2-odd graceful-elegant labelling} (a 2odd2-labelling) of $G$ (called a \emph{2odd2-graph}), and $\odot \langle G_1, G_2\rangle$ is called a \emph{2odd2-matching partition}.\qqed
\end{defn}

In Definition \ref{defn:2odd2-matching-labelling}, if $f$ is a set-ordered odd-graceful labelling of $G_1$, and $G_2$ has its bipartition $(X, Y)$ holding $f_{\max}(X)<f_{\min}(Y)$ true, then we call $f$ a \emph{set-ordered 2odd2-labelling} of $G$. The results on the 2odd2-matching partition can be found in \cite{Wang-Xu-Yao-2017}. The authors of two articles \cite{Wang-Xu-Yao-2017-Twin} and \cite{Wang-Xu-Yao-2017}  propose several conjectures on the twin type of odd-graceful/odd-elegant labellings which mean that Topsnut-gpws made by such labellings are computational security (\cite{Wang-Xu-Yao-2016, Wang-Xu-Yao-Key-models-Lock-models-2016, Wang-Xu-Yao-2017, Wang-Xu-Yao-2017-Twin}).

\begin{figure}[h]
\centering
\includegraphics[height=3cm]{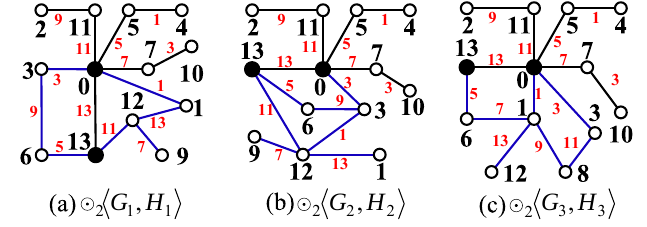}\\
\caption{\label{fig:2-odd-labelling} {\small Three 2odd2-graphs $\odot_2 \langle G_i, H_i\rangle$ for $i\in [1,3]$ cited from \cite{Wang-Xu-Yao-2017}, where $G_i$ has black edges, and $H_i$ has blue edges.}}
\end{figure}

\subsection{Matching partitions based on set-type of labellings}

We use a notation $S^2$ to denote the set of all subsets of a set $S$. For instance, $S=\{a,b,c\}$, so $S^2$ has its own elements: $\{a\}$, $\{b\}$, $\{c\}$, $\{a,b\}$, $\{a,c\}$, $\{b,c\}$ and $\{a,b,c\}$. It is clear that $\emptyset \not\in S^2$ in the paper.

\begin{defn}\label{defn:set-labelling}
$^*$ Let $G$ be a $(p,q)$-graph $G$, and $[0,M]^2$ be a set of all subsets of $[0,M]$ with $M\geq p+q$, and $[a,b]$ be an integer set. We have:

(i) A set mapping $F: V(G)\cup E(G)\rightarrow [0, M]^2$ is called a \emph{total set-labelling} of $G$ if $F(x)\neq F(y)$ for distinct elements $x,y\in V(G)\cup E(G)$.

(ii) A vertex set mapping $F: V(G) \rightarrow [0, M]^2$ is called a \emph{vertex set-labelling} of $G$ if $F(x)\neq F(y)$ for distinct vertices $x,y\in V(G)$.

(iii) An edge set mapping $F: E(G) \rightarrow [0, M]^2$ is called an \emph{edge set-labelling} of $G$ if $F(uv)\neq F(xy)$ for distinct edges $uv, xy\in E(G)$.

(iv) A vertex set mapping $F: V(G) \rightarrow [0, M]^2$ and a proper edge mapping $g: E(G) \rightarrow [a, b]$ are called a \emph{v-set e-proper labelling $(F,g)$} of $G$ if $F(x)\neq F(y)$ for distinct vertices $x,y\in V(G)$ and two edge labels $g(uv)\neq g(wz)$ for distinct edges $uv, wz\in E(G)$.

(v) An edge set mapping $F: E(G) \rightarrow [0, M]^2$ and a proper vertex mapping $f: V(G) \rightarrow [a,b]$ are called an \emph{e-set v-proper labelling $(F,f)$} of $G$ if $F(uv)\neq F(wz)$ for distinct edges $uv, wz\in E(G)$ and two vertex labels $f(x)\neq f(y)$ for distinct vertices $x,y\in V(G)$.\qqed
\end{defn}

\begin{expr}\label{expr:v-set-e-proper}
Fig.\ref{fig:v-set-e-proper}(a) shows a $(6,8)$-graph $G$ admitting a v-set e-proper graceful labelling $(F,g)$ defined by $F:V(G)\rightarrow [0,8]^2$ and $g(E(G))=[1,8]$. And Fig.\ref{fig:v-set-e-proper}(b) shows a $(6,8)$-graph $H$ admitting a v-set e-proper odd-graceful labelling $(F,f)$ defined by $F:V(H)\rightarrow [0,15]^2$ and $f(E(H))$ $=[1,15]^o$.
\end{expr}

\begin{figure}[h]
\centering
\includegraphics[height=3cm]{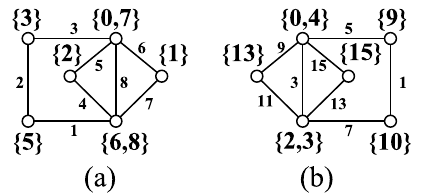}\\
\caption{\label{fig:v-set-e-proper} {\small (a) A v-set e-proper graceful labelling; (b) a v-set e-proper odd-graceful labelling.}}
\end{figure}

\begin{figure}[h]
\centering
\includegraphics[height=10cm]{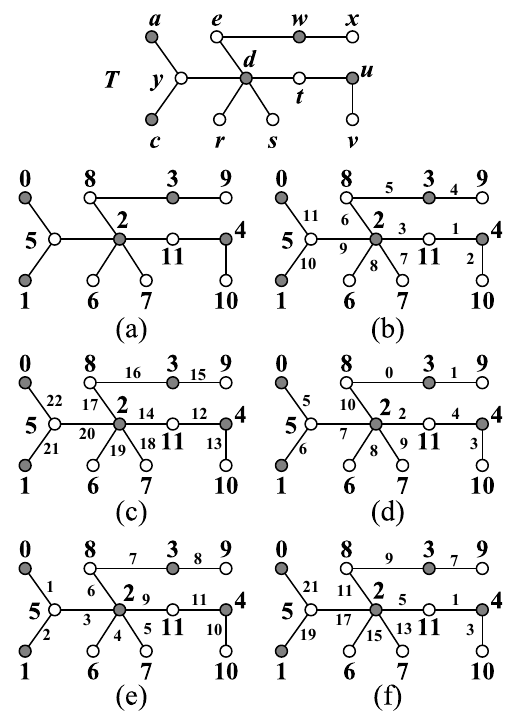}\\
\caption{\label{fig:1-labelling-more-meanning} {\small A lobster $T$ admits: (a) a labelling with null edge-labels: (b) a pan-edge-magic total labelling; (c) a pan-edge-magic total labelling; (d) a felicitous labelling; (e) an edge-magic graceful labelling; (f) an edge-odd-graceful labelling.}}
\end{figure}

\begin{expr}\label{expr:multiple-tree-matching-partition2}
A tree $T$ shown in Fig.\ref{fig:1-labelling-more-meanning} admits an e-set v-proper labelling $(F,f)$ defined by $f:V(T)\rightarrow [0,11]$, and let ``$\bullet$''=``null'', each edge of $T$ has its own label set as follows:

\noindent $F(ay)=\{\bullet,1,5,11,21,22\}$, $F(cy)=\{\bullet,2,6,10,19,21\}$,

\noindent $F(dy)=\{\bullet,3,7,9,17,20\}$, $F(de)=\{\bullet,6,10,11,17\}$,

\noindent $F(dr)=\{\bullet,4,8,15,19\}$, $F(ds)=\{\bullet,5,7,9,13,18\}$,

\noindent $F(dt)=\{\bullet,2,3,5,9,14\}$, $F(ew)=\{\bullet,0,5,7,9,16\}$,

\noindent $F(xw)=\{\bullet,1,4,7,8,15\}$, $F(ut)=\{\bullet,1,4,11,12\}$,

\noindent $F(uv)=\{\bullet,2,3,10,13\}$.

Thereby, this Topsnut-gpw $T$ can produce more complex text-based passwords.
\end{expr}

\begin{thm}\label{thm:total-set-labelling}
If a connected graph $T$ admits mutually distinct graceful labellings $f_1,f_2, \dots, f_m$ and odd-graceful labellings $g_1,g_2, \dots, g_n$ with $m,n\geq 1$, then $T$ admits a total set-labelling.
\end{thm}

There are many trees support Theorem \ref{thm:total-set-labelling}, such as caterpillars, and so do trees admitting set-ordered graceful labellings.

\begin{thm}\label{thm:e-set-v-proper-labellings}
If a tree $T$ admits a set-ordered graceful labelling, then $T$ admits an e-set v-proper labelling $(F,f)$ such that $\max\{|F(uv)|:uv\in E(T)\}\geq 5$.
\end{thm}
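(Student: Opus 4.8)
The idea is to anchor the edge set-labelling on the bijective edge labels that the hypothesis supplies, and then to inflate each edge's label set by sliding the labelling along its bipartition. Put $q=|E(T)|=p-1$ and let $f$ be a set-ordered graceful labelling of $T$ with respect to the (unique) bipartition $(X,Y)$. A graceful labelling of a $(p,q)$-tree is an injection $V(T)\to[0,q]$, hence $f(V(T))=[0,q]$; combined with $f_{\max}(X)<f_{\min}(Y)$ this forces $f(X)=[0,s]$ and $f(Y)=[s+1,q]$ with $s=|X|-1$. Every edge $uv$ joins $X$ to $Y$, so writing $u\in X$, $v\in Y$ its label is $f(uv)=f(v)-f(u)$, and as $uv$ runs over $E(T)$ the value $f(uv)$ runs bijectively over $[1,q]$. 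This bijectivity is the single fact that will make the map $F$ below injective, regardless of how we pad the sets $F(uv)$.

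For each $k\in[1,5]$ I would then form the $k$-th dilation $f_k$ of $f$: leave the labels on $X$ untouched and replace the label $f(v)$ of each $v\in Y$ by $f(v)+(k-1)$. Then $f_k(X)=[0,s]$ while the shifted $Y$-labels fill $[s+k,q+k-1]$, two disjoint blocks, so $f_k$ is injective; moreover $f_k(uv)=f(uv)+(k-1)$ runs bijectively over $[k,q+k-1]$, so each $f_k$ is a $k$-graceful labelling of $T$. Taking $f$ itself (that is, $f_1$) as the proper vertex mapping, I set, for every edge $uv$,
\[
F(uv)=\{\,f_k(uv):k\in[1,5]\,\}=\{\,f(uv),f(uv)+1,f(uv)+2,f(uv)+3,f(uv)+4\,\}.
\]
Then $|F(uv)|=5$ for every edge, so $\max\{|F(uv)|:uv\in E(T)\}\ge 5$; and $(F,f)$ is an e-set v-proper labelling in the sense of Definition~\ref{defn:set-labelling}(v), since $f$ is injective and since $\min F(uv)=f(uv)$ takes a distinct value on each edge, whence $F(uv)\ne F(wz)$ for $uv\ne wz$.

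The only genuine point to check, and the main obstacle, is that each $F(uv)$ really lies inside $[0,p+q]$: its largest element is $f(uv)+4\le q+4$, and since $p=q+1$ this is at most $p+q=2q+1$ precisely when $q\ge 3$, so the construction works verbatim for every tree with at least four vertices. The two remaining small trees are handled directly: for $T=P_3$ (so $q=2$, $p+q=5$) the dilations $f_1,\dots,f_4$ together with the null symbol ``$\bullet$'' of a null labelling (cf.\ Fig.~\ref{fig:1-labelling-more-meanning}(a)) already yield sets of size $5$, while $T=K_2$ has $p+q=3$, so no subset of $[0,p+q]$ has five elements and this degenerate case falls outside the statement. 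I expect a full write-up to consist almost entirely of the routine checks that the $f_k$ are $k$-graceful and of the range bound just described. A variant in the spirit of Experiment~\ref{expr:multiple-tree-matching-partition2} would instead let $F(uv)$ collect the edge labels of five labellings that a set-ordered graceful tree is known to admit, say an odd-graceful, a felicitous, an edge-magic graceful and an edge-odd-graceful labelling plus a null labelling; but that route carries the extra burden of checking that these five values are pairwise distinct on some edge, which is slightly more delicate than the dilation argument.
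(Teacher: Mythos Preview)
Your argument is correct and considerably more direct than the paper's. The paper follows the template of Experiment~\ref{expr:multiple-tree-matching-partition2}: it builds six labellings $h_1,\dots,h_6$ on $T$ (a null labelling, two pan-edge-magic total labellings, a felicitous labelling, an edge-magic graceful labelling, and an edge-odd-graceful labelling), verifies each one by direct computation, and then sets $F(x_iy_j)=\{h_k(x_iy_j):k\in[1,6]\}$. Your five $k$-graceful dilations instead give $F(uv)=\{f(uv),f(uv)+1,\dots,f(uv)+4\}$, an arithmetic progression, so both $|F(uv)|=5$ and the injectivity of $F$ are immediate from the distinctness of the graceful edge labels.

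What each route buys: yours is a two-line proof once the range bound $q+4\le p+q$ is checked, and it needs no coincidence analysis. The paper's construction, by contrast, is doing double duty---its real purpose is to exhibit five or six \emph{qualitatively different} labellings coexisting on a set-ordered graceful tree, in the spirit of Figure~\ref{fig:1-labelling-more-meanning}; the e-set v-proper labelling is then a convenient container for that data. Indeed the paper's six values are not always distinct (for instance $h_2(x_iy_j)=f(x_iy_j)$ and $h_5(x_iy_j)=p-f(x_iy_j)$ coincide on the edge with label $p/2$ when $p$ is even), which is presumably why the theorem is stated with $\ge 5$ although the proof ends by claiming $\ge 6$. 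Your variant in the final paragraph is essentially the paper's approach, and your remark that checking pairwise distinctness is ``slightly more delicate'' is exactly right.
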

\begin{proof} Let a tree $T$ of $p$ vertices has its own vertex bipartition $(X,Y)$ with $X=\{x_{i}:i\in [1,s]\}$ and $Y=\{y_{j}:j\in [1,t]\}$ holding $s+t=|V(T)|=p$ true. By the hypothesis of this theorem, $T$ admits a set-ordered graceful labelling $f$ defined by $f(x_i)=i-1$ for $i\in [1,s]$, $f(y_j)=s+j-1$ for $j\in [1,t]$ and $f(x_iy_j)=f(y_j)-f(x_i)=s+j-i$ for each edge $x_iy_j\in E(T)$. We define another labelling $f^*$ as: $f^*(x_i)=f(x_i)$ for $i\in [1,s]$, $f^*(y_j)=f(y_{t-j+1})$ for $j\in [1,t]$ and $f^*(x_iy_j)=f(x_iy_j)$ for each edge $x_iy_j\in E(T)$.

We define the following labellings:

\begin{asparaenum}[(L-1) ]
\item \label{e-null} $h_1(w)=f^*(w)$ for $w\in V(T)$, $h_1(x_iy_j)=null$ for each edge $x_iy_j\in E(T)$.
\item \label{e-pan-magic-total} We construct a labelling $h_2$ by setting $h_2(w)=f^*(w)$ for $w\in V(T)$, and $h_2(x_iy_j)=f^*(x_iy_j)$ for each edge $x_iy_j\in E(T)$. We verify
$${
\begin{split}
&\quad h_2(x_i)+h_2(x_iy_j)+h_2(y_j)\\
&=f^*(x_i)+f^*(x_iy_j)+f^*(y_j)\\
&=f(x_i)+f(x_iy_j)+f(y_{t-j+1})\\
&=i-1+(s+j-i+1)+s+(t-j+1)-1\\
&=2s+t\\
&=p+s.
\end{split}}$$
Thereby, $h_2$ is really a \emph{pan-edge-magic total labelling} of $T$.

\item \label{e-pan-magic-total-1} Suppose that $h_3(w)=f^*(w)$ for $w\in V(T)$, $h_3(x_iy_j)=p-1+f^*(x_iy_j)$ for each edge $x_iy_j\in E(T)$. Hence, $h_3$ is another \emph{pan-edge-magic total labelling} of $T$.

\item \label{e-set-felicitous} Define a labelling $h_4$ by setting $h_4(w)=f^*(w)$ for $w\in V(T)$ and $h_4(x_iy_j)=h_4(x_i)+h_4(y_j)~(\bmod ~p-1)$ for each edge $x_iy_j\in E(T)$. We compute
$${
\begin{split}
h_4(x_iy_j)&=h_4(x_i)+h_4(y_j)\\
&=f^*(x_i)+f^*(y_j)\\
&=f(x_i)+f(y_{t-j+1})\\
&=f(y_{t-j+1})+f(y_{j})-[f(y_{j})-f(x_i)]\\
&=s+(t-j+1)+s+j-f(x_iy_j)\\
&=p+s-f(x_iy_j),
 \end{split}}$$
which distributes a set
 $\{p+s-1,p+s-2,\dots ,p,p-1,p-2,\dots ,s+1\}~(\bmod ~p-1)=\{s,s-1,\dots ,1,0,p-2,\dots ,s+1\}=[0,p-2]$. So we claim that $h_4$ is a \emph{felicitous labelling} of $T$.

\item \label{e-set-e-magic-graceful} We define a labelling $h_5$ as: $h_5(w)=f^*(w)$ for $w\in V(T)$ and $h_5(x_iy_j)=p-f^*(x_iy_j)=p-f(x_iy_j)$ for each edge $x_iy_j\in E(T)$. Because

$${
\begin{split}
&\quad |h_5(x_i)+h_5(y_j)-h_5(x_iy_j)|\\
&=|f^*(x_i)+f^*(y_j)+f^*(x_iy_j)-p|\\
&=|f(x_i)+f(y_{t-j+1})+f(x_iy_j)-p|\\
&=|i-1+s+(t-j+1)-1+s+j-i+1-p|\\
&=s-1,
\end{split}}$$

then we claim that $h_5$ is an \emph{edge-magic graceful labelling} of $T$ according to  Definition \ref{defn:edge-magic-graceful-labelling}.
\item \label{e-odd-graceful} Let $h_6$ be defined by $h_6(w)=f^*(w)$ for $w\in V(T)$ and $h_6(x_iy_j)=2f^*(x_iy_j)-1=2f(x_iy_j)-1$ for each edge $x_iy_j\in E(T)$. We get

$${
\begin{split}
&\quad h_6(x_i)+h_6(x_iy_j)+h_6(y_j)\\
&=f(x_i)+f(y_{t-j+1})+2f(x_iy_j)-1\\
&=f(x_i)+f(y_{t-j+1})+2[f(y_j)-f(x_i)]-1\\
&=i-1+s+(t-j+1)-1+s+j-i+f(x_iy_j)-1\\
&=p+s-2+f(x_iy_j),
\end{split}}$$
which induces a set $\{p+s-1,p+s+1,\dots, 2p+s-3\}=[p+s,2p+s-3]$.
\end{asparaenum}

We can see relationships between the above labellings: $h_2(x_iy_j)+p-1=h_3(x_iy_j)$, $${
\begin{split}
h_2(x_iy_j)+h_4(x_iy_j)&=f(x_iy_j)+p+s-f(x_iy_j)\\
&\equiv s-1 ~(\bmod~p-1)
\end{split}}
$$
and
$$h_2(x_iy_j)+h_5(x_iy_j)=f(x_iy_j)+p-f(x_iy_j)=p.$$

Now, we define the desired e-set v-proper labelling $F$ in the way: $F(x_iy_j)=\{h_k(x_iy_j):~k\in [1,6]\}$ for each edge $x_iy_j\in E(T)$, so $T$ admits an e-set v-proper labelling $(F,f^*)$ with $\max\{|F(uv)|:uv\in E(T)\}\geq 6$. This theorem is covered.
\end{proof}

\begin{defn}\label{defn:multiple-trees-labelling}
$^*$ If a $(p,q)$-graph $G$ admits an e-set v-proper labelling $(F,f)$ by $f: V(G)\rightarrow [0, a(p,q)]$ and $F: E(G)\rightarrow [0, b(p,q)]^2$, where $a(p,q)$ and $b(p,q)$ are linear functions of $p$ and $q$, such that $G$ can be decomposed into spanning trees $T_1,T_2,\dots ,T_m$ with $m\geq 2$ and $E(G)=\bigcup^m_{i=1}E(T_i)$ (allow $E(T_i)\cap E(T_j)\neq \emptyset $ for some $i\neq j$), and each spanning tree $T_i$ admits a proper labelling $f_i$ induced by $(F,f)$. We call $G$ a \emph{multiple-tree matching partition}, denoted as $G=\oplus_F\langle T_i\rangle ^m_1$.\qqed
\end{defn}

\begin{figure}[h]
\centering
\includegraphics[height=6.2cm]{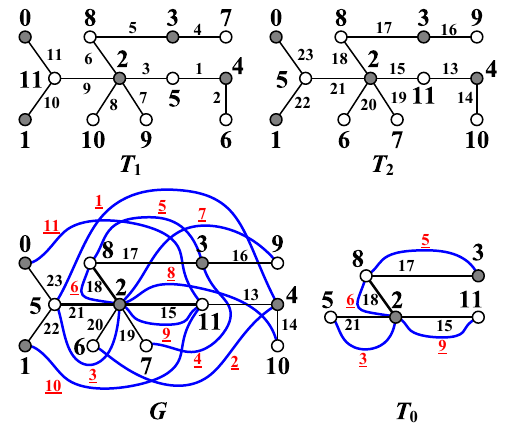}\\
\caption{\label{fig:twin-trees} {\small A multiple-tree matching partition $G$ for illustrating Definition \ref{defn:multiple-trees-labelling}.}}
\end{figure}

\begin{expr}\label{expr:multiple-tree-matching-partition1}
A multiple-tree matching partition $G$ shown in Fig.\ref{fig:twin-trees} has two spanning trees $T_1,T_2$ and an e-set v-proper labelling $F$, such that $T_1$ admits a set-ordered graceful labelling $f_1$ induced by $F$, and $T_2$ admits a super pan-edge-magic total labelling $f_2$ induced by $F$. In detail, $f_1(u)\in f(V(G))$ for $x\in V(T_1)=V(G)$, each edge $uv\in E(T_1)$ is labeled as $f_1(uv)=|f_1(u)-f_1(v)|\in F(uv)$; and $f_2(u)\in f(V(G))$ for $y\in V(T_2)=V(G)$, $f_2(uv)\in F(uv)$ for each edge $uv\in E(T_2)$, such that $f_2(u)+f_2(uv)+f_2(v)=28$ for $uv\in E(T_2)$. We can see $T_0=T_1\cap T_2$, called the \emph{common body} of $G$, each edge of $T_0$ is labeled with a set.
\end{expr}

Similarly with Definition \ref{defn:multiple-trees-labelling}, we propose:
\begin{defn}\label{defn:multiple-graph-matching}
$^*$ If a $(p,q)$-graph $G$ admits a vertex labelling $f: V(G)\rightarrow [0, p-1]$, such that $G$ can be decomposed into (spanning) graphs $G_1,G_2,\dots ,G_m$ with $m\geq 2$ and $E(G)=\bigcup^m_{i=1}E(G_i)$ and $E(G_i)\cap E(G_j)=\emptyset $ for  $i\neq j$, and each graph $G_i$ admits a proper labelling $f_i$ induced by $f$. We call $G$ a \emph{multiple-graph matching partition}, denoted as $G=\odot_f\langle G_i\rangle ^m_1$.\qqed
\end{defn}

In Fig.\ref{fig:generalized-multiple-tree}, we can see that $E(G_i)\cap E(G_j)=\emptyset $ for  $i\neq j$, and $G_1$ admits a graceful labelling $f_1$ induced by $f$, each $G_j$ admits a felicitous labelling $f_j$ induced by $f$ with $j\in [2,4]$. $H_j=\odot _{k_j}\langle G_1,G_j\rangle $, with $j\in [2,4]$, admits the labelling $f$ such that $f(E(G_1))=\{|f(u)-f(v)|:~uv\in E(G_1)\}=[1,11]$, $f(E(G_j))=\{f(x)+f(y)~(\bmod ~11):~xy\in E(G_j)\}=[1,11]$ (where we set $0\equiv 11~(\bmod ~11)$), so each $H_j=\odot _{k_j}\langle G_1,G_j\rangle $ with $j\in [2,4]$ is called a \emph{graceful-felicitous matching partition}.

\begin{thm}\label{thm:set-ordered-matchings-10-labellings}
If a tree $T$ admits a set-ordered graceful labelling, then $T$ matches with a multiple-tree matching partition $\oplus_F\langle T_i\rangle ^m_1$ with $m\geq 10$.
\end{thm}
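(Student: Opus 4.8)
The plan is to read $G$ off directly from the construction in the proof of Theorem~\ref{thm:e-set-v-proper-labellings}. By hypothesis $T$ is a tree on $p$ vertices with bipartition $(X,Y)$, $q=p-1$ edges, and a set-ordered graceful labelling; running that construction yields the vertex relabelling $f^*$ together with the six companion labellings $h_1,h_2,\dots,h_6$ of $T$, all of which agree with $f^*$ on $V(T)$ and differ only on $E(T)$ (null, pan-edge-magic total, a shifted pan-edge-magic total, felicitous, edge-magic graceful, and edge-odd-graceful edge labels, respectively), along with the e-set v-proper labelling $(F,f^*)$ given by $F(uv)=\{h_k(uv):k\in[1,6]\}$ with $\max\{|F(uv)|:uv\in E(T)\}\ge 5$. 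So essentially all the work is already done; what remains is to repackage it as a multiple-tree matching partition in the sense of Definition~\ref{defn:multiple-trees-labelling}.

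First I would check that each of $h_2,h_3,h_4,h_5,h_6$ is a \emph{proper} labelling in the sense of Definition~\ref{defn:set-labelling}, i.e.\ injective on $V(T)$ and on $E(T)$: the vertex part is $f^*$, a bijection onto $[0,p-1]$; and the $q$ edge labels are $[1,q]$ for $h_2$, $[p,2q]$ for $h_3$, $\{p-\ell:\ell\in[1,q]\}$ for $h_5$, $[1,2q-1]^o$ for $h_6$, and $[0,p-2]$ for $h_4$ (this last one distinct precisely because reversing the $Y$-labels converts the felicitous sums into a shifted copy of the pairwise distinct graceful differences, exactly as computed in that proof). The null-edge labelling $h_1$ is simply discarded, which costs nothing since we only need $m\ge 2$. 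Then I would assemble $G$: take $m=5$, let $T_k$ be a copy of $T$ carrying the proper labelling $h_k$ for $k\in[2,6]$, and let $G$ be the $(p,q)$-graph on vertex set $V(T)$ and edge set $E(T)$ equipped with $(F,f^*)$. Then $E(G)=\bigcup_{k=2}^{6}E(T_k)$ (with $E(T_i)\cap E(T_j)=E(T)\neq\emptyset$, which Definition~\ref{defn:multiple-trees-labelling} explicitly permits), each $T_k$ is a spanning tree of $G$ whose proper labelling $h_k$ is induced by $(F,f^*)$, and $f^*:V(G)\to[0,p-1]$, $F:E(G)\to[0,2q]^2$ have the required linear bounds in $p,q$. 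Hence $G=\oplus_F\langle T_k\rangle^5_1$ is a multiple-tree matching partition, and since each $T_k$ has underlying graph $T$, the tree $T$ matches with $G$. (If one insists that the copy of $T$ inside $G$ carry its original graceful labelling rather than the induced $f^*$, one can instead work over $f$ and use only the proper variants $f(uv)$, $f(uv)+p-1$, $2f(uv)-1$, giving a smaller $G=\oplus_F\langle T_i\rangle^3_1$.)

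The only real friction I anticipate is the bookkeeping around Definition~\ref{defn:multiple-trees-labelling}: confirming that every $h_k$ with $k\ge 2$ is injective on edges (the $(\bmod~p-1)$ felicitous case being the least automatic), that the label-set range $[0,2q]$ is linear in $(p,q)$, and --- at the level of interpretation --- that the degenerate decomposition in which all spanning trees share the full edge set $E(T)$ genuinely counts as a multiple-tree matching partition that $T$ ``matches with.'' If a non-degenerate $G$ with a proper common body $T_0=\bigcap_k T_k$, as in Example~\ref{expr:multiple-tree-matching-partition1}, were desired, the extra step would be to first graft an auxiliary tree onto $T$ (e.g.\ via a twin-type labelling in the spirit of Theorem~\ref{thm:set-ordered-and-twin-odd-elegant}) and only then apply the $h_k$-construction to the enlarged tree; but this is not needed for the statement as written.
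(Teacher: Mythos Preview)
Your proposal is correct, but it takes a genuinely different route from the paper's own proof, and the difference is worth noting.

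You piggyback on Theorem~\ref{thm:e-set-v-proper-labellings}: you keep the common vertex labelling $f^*$, discard the null-edge $h_1$, and package the five proper edge-labellings $h_2,\dots,h_6$ as five spanning ``trees'' $T_2,\dots,T_6$, all equal to $T$ as graphs. The resulting $G$ is $T$ itself, and the multiple-tree matching partition is the degenerate one in which every $T_k$ has the full edge set $E(T)$. This is permitted by Definition~\ref{defn:multiple-trees-labelling} (which explicitly allows $E(T_i)\cap E(T_j)\neq\emptyset$), and your checks on edge-injectivity and on the linear ranges $[0,p-1]$, $[0,2q]$ are all sound.

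The paper instead builds ten trees $T_1,\dots,T_{10}$ from scratch, each isomorphic to $T$ but carrying a different \emph{vertex} labelling as well as a different edge labelling: two super edge-magic total labellings, two super felicitous, two super edge-antimagic total, two harmonious, one Dgemm, together with the original $f$. Because the vertex labels genuinely vary (some shifted by $+1$, some with $X$ or $Y$ reversed), the identify-vertices-with-the-same-label step produces a graph $G$ that is in general strictly larger than $T$, with the $T_i$ sitting inside $G$ as honest spanning trees sharing a proper common body in the spirit of Example~\ref{expr:multiple-tree-matching-partition1}. The paper also records that each of $f_2,\dots,f_{10}$ can recover the set-ordered graceful $f$, tying the construction back to the equivalence results of \cite{Yao-Liu-Yao-2017}.

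So: your argument is shorter and cleanly reuses existing work, at the cost of a degenerate $G$; the paper's argument is longer but yields a non-trivial $G$ and exhibits a broader catalogue of labellings equivalent to set-ordered graceful. Both establish the theorem as stated.
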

\begin{proof} Let $(X,Y)$ be the bipartition of vertex set of the tree $T$ admitting a set-ordered graceful labelling $f$, where $X=\{x_{i}:i\in [1,s]\}$ and $Y=\{y_{j}:j\in [1,t]\}$ with $s+t=|V(T)|=p$. By the definition of a set-ordered graceful labelling, we have $f(x_i)=i-1$ for $i\in [1,s]$, $f(y_j)=s+j-1$ for $j\in [1,t]$ and $f(x_iy_j)=f(y_j)-f(x_i)=s+j-i$ for each edge $x_iy_j\in E(T)$. Now, let $T=T_1$ and $f=f_1$ for the purpose of statement. Clearly, $\max f_1(X)<\min f_1(Y)$.

We construct the following trees.

\begin{asparaenum}[(Equ-1) ]
\item \label{edge-magic-y} The tree $T_2$ is isomorphic to $T$, and admits an edge-magic total labelling $f_2$ defined as: $f_2(x_i)=f(x_i)+1$ for $i\in [1,s]$, $f_2(y_j)=f(y_{t-j+1})+1$ for $j\in [1,t]$, and $f_2(x_iy_j)=f(x_iy_j)+p$ for each edge $x_iy_j\in E(T)$. Notice that $f_2(V(T_2))=[1,p]$. Furthermore,
\begin{equation}\label{eqa:edge-magic-y}
{
\begin{split}
&\quad f_2(x_i)+f_2(x_iy_j)+f_2(y_j)\\
&=f(x_i)+f(x_iy_j)+p+f(y_{t-j+1})\\
&=s+2p+1.
\end{split}}
\end{equation}
So, $f_2$ is a super edge-magic total labelling with
the magic constant $s+2p+1$ and $\max f_2(X)<\min f_2(Y)$.

\item \label{edge-magic-x} We take a copy of $T$, denoted as $T_3$, and define an edge-magic total labelling $f_3$ of $T_3$ in the way: $f_3(x_i)=f(x_{s-i+1})+1$ for $i\in [1,s]$, $f_3(y_j)=f(y_j)+1$ for $j\in [1,t]$, and $f_2(x_iy_j)=f(x_iy_j)+p$ for each edge $x_iy_j\in E(T)$. So $f_3(V(T_3))=[1,p]$. By the same way used in Equ-\ref{edge-magic-y}, $f_3$ is a super edge-magic total labelling with
the magic constant $t+2p+1$ and $\max f_3(X)<\min f_3(Y)$.

\item \label{felicitous-y} Let $T_4$ be a tree being isomorphic to $T$. And $T_4$ admits a super felicitous
labelling $f_4$ made by: $f_4(x_i)=f(x_i)$ for $i\in [1,s]$, $f_4(y_j)=f(y_{t-j+1})$ for
$j\in [1,t]$. Moreover, we have
\begin{equation}\label{eqa:felicitous-y}
{
\begin{split}
f_4(x_i)+f_4(y_j)&=f(x_i)+f(y_{t-j+1})\\
&=f(x_i)+s+p-1-f(y_j)\\
&=s+p-1-\big [f(y_j)-f(x_i)\big ]\\
&=s+p-1-f(x_iy_j),
\end{split}}
\end{equation}
The above form (\ref{eqa:felicitous-y}) induces two sets 
$${
\begin{split}
S_1=&\{s+p-1-1,s+p-1-2,\dots
,\\
&s+p-1-(s-1),s+p-1-s\}
\end{split}}
$$ and $S_2=\{p-2,p-3,\dots ,s\}$. Under modulo
$(p-1)$, $S_1\equiv [0,s-1]$. Thereby,
$f_4(E(T_4))=[0,p-2]$. We claim that $f_4$ is a super felicitous
labelling of $T_4$ with $\max f_4(X)<\min f_4(Y)$.

\item \label{felicitous-x} $T_5$ is isomorphic to $T$, and admits a super felicitous
labelling $f_5$ with $\max f_5(X)<\min f_5(Y)$. made by: $f_5(x_i)=f(x_{s-i+1})$ for $i\in [1,s]$, $f_5(y_j)=f(y_{j})$ for
$j\in [1,t]$. The remainder proof is as the same as that in Equ-\ref{felicitous-y}.

\item \label{antimagic-y} Let $T_6\cong T$. We define a labelling $f_6$ of $T_6$ in the way:
$f_6(x_i)=f(x_i)+1$ for $i\in [1,s]$, $f_6(y_j)=f(y_{t-j+1})+1=s+p-f(y_{j})$ for $j\in [1,t]$, and
$f_6(x_iy_j)=2p-f(x_iy_j)$ for each edge $x_iy_j\in E(T_6)$. It is not hard to see $f_6(V(T_6))=[1,p]$. We have
$$f_6(x_i)+f_6(x_iy_j)+f_6(y_j)=s+3p+1-2f(x_iy_j)$$ which produces a
set $\{p+s+3,p+s+3+2,p+s+3+4, \dots ,p+s+3+2(p-2)\}$. We can confirm that $f_6$
is a super edge antimagic total labelling with $\max f_6(X)<\min f_6(Y)$.

\item \label{antimagic-x} Take a tree $T_7\cong T$, and define a super edge antimagic total labelling $f_7$ of $T_7$ with $\max f_7(X)<\min f_7(Y)$ as follows: $f_7(x_i)=f(x_{s-i+1})+1$ for $i\in [1,s]$, $f_7(y_j)=f(y_{j})+1=s+p-f(y_{j})$ for $j\in [1,t]$, and $f_7(x_iy_j)=2p-f(x_iy_j)$ for each edge $x_iy_j\in E(T_7)$. The remainder proof is very similar to that in Equ-\ref{antimagic-y}.

\item \label{harmonious-y} Suppose that $T_8$ is a copy of $T$, we define a labelling $f_8$ of $T_8$ in the way that
$f_8(x_i)=f(x_i)$ for $i\in [1,s]$, $f_8(y_j)=f(y_{t-j+1})$ for
$j\in [1,t-1]$, and $f_8(y_t)=0$. For each edge $x_ky_t\in E(T_8)$, we have
\begin{equation}\label{eqa:harmonious-y}
{
\begin{split}
&\quad f_8(x_k)+f_8(y_t)=f_8(x_k)+0\\
&=f_8(x_k)+(p-1)~(\bmod~p-1)\\
&=f(x_k)+f(y_{t})~(\bmod~p-1)\\
&=f(x_k)+s+p-1-f(y_1)~(\bmod~p-1)\\
&=s+p-1-\big [f(y_1)-f(x_k)\big ]~(\bmod~p-1)\\
&=s-f(x_ky_1).
\end{split}}
\end{equation} Under modulo $(p-1)$,
$${
\begin{split}
&\quad f_8(E(T_8))=\{f_8(x_iy_j)\\
&=f_8(x_i)+f_8(y_j)~ (\bmod ~p-1): x_iy_j\in
E(T_8)\}\\
&=[0,p-2].
\end{split}}$$ Therefore, $f_8$ is a harmonious labelling of $T_8$ with $\max f_8(X)<\min f_8(Y)$.

\item \label{harmonious-x} Let $T_9\cong T$. We define a harmonious labelling $f_9$ of $T_9$ with $\max f_9(X)<\min f_9(Y)$ as follows: $f_9(x_i)=f(x_{s-i+1})$ for $i\in [1,s]$, $f_9(y_j)=f(y_{j})$ for $j\in [1,t-1]$, and $f_9(y_t)=0$. By the same way used in Equ-\ref{harmonious-y}, we can show $f_9$ is a harmonious labelling of $T_9$ and $\max f_9(X)<\min f_9(Y)$.

\item \label{harmonious-x} By Definition \ref{defn:Dgemm-labelling} we define a Dgemm-labelling $f_{10}$ of $T_{10}$ that holds $T_{10}\cong T$ true as following: $f_{10}(x_i)=f(x_{i})$ for $i\in [1,s]$, $f_{10}(y_j)=f(y_{j})$ for $j\in [1,t]$, and $f_{10}(x_iy_j)=p-f(x_iy_j)$ for each edge $x_iy_j\in E(T_{10})$. We verify:

\quad (i) Each edge $x_iy_j$ corresponds an edge $x'_iy'_j$ such that
$${
\begin{split}
f_{10}(x_iy_j)&=p-f(x_iy_j)=p-|f(x_i)-f(y_j)|\\
&=|f(x'_i)-f(y'_j)|.
\end{split}}$$

\quad (ii) For each edge $x_iy_j\in E(T_{10})$,  $T_{10}$ has $p-1$ edges, and
$${
\begin{split}
s(x_iy_j)&=|f_{10}(x_i)-f_{10}(y_j)|-f_{10}(x_iy_j)\\
&=|f(x_i)-f(y_j)|-p+f(x_iy_j)\\
&=2f(x_iy_j)-p
\end{split}}$$ distributes a set $\{2-p,4-p,\dots, 0, 2,4, \dots, p-2\}$ when $p$ is even. Clearly, each edge $x_iy_j$ matches with another edge $x'_iy'_j\in E(T_{10})$ holding $s(x_iy_j)+s(x'_iy'_j)=0$ true. If $p$ is odd, $s(x_iy_j)$ induces another set $\{2-p,4-p,\dots, -1, 1,3, \dots, p-2\}$, so $s(x_iy_j)+s(x'_iy'_j)=0$ is true.

\quad (iii) For each edge $x_iy_j\in E(T_{10})$, we have
$${
\begin{split}
&\quad f_{10}(x_iy_j)+|f_{10}(x_i)-f_{10}(y_j)|\\
&=p-f(x_iy_j)+|f_{10}(x_i)-f_{10}(y_j)|\\
&=p.
\end{split}}$$

\quad (iv) Since $f(V(T))=[0,p-1]$ and $f(E(T))=[1,p-1]$, thus, $f(E(T))=[1,p-1]=f(V(T))\setminus \{0\}$, we have $f(x_iy_j)=f(w)$ for each edge $x_iy_j$ matching with a vertex $w$, which implies
$$f_{10}(x_iy_j)+f_{10}(w)=p-f(x_iy_j)+f(w)=p;$$ conversely, each vertex $w$ corresponds an edge $x_iy_j$ holding $$f_{10}(w)+f_{10}(x_iy_j)=p-f(x_iy_j)+f(w)=p$$  true, except the \emph{singularity} $f(x_0)=\lfloor p\rfloor $.
\end{asparaenum}

\begin{figure}[h]
\centering
\includegraphics[height=12cm]{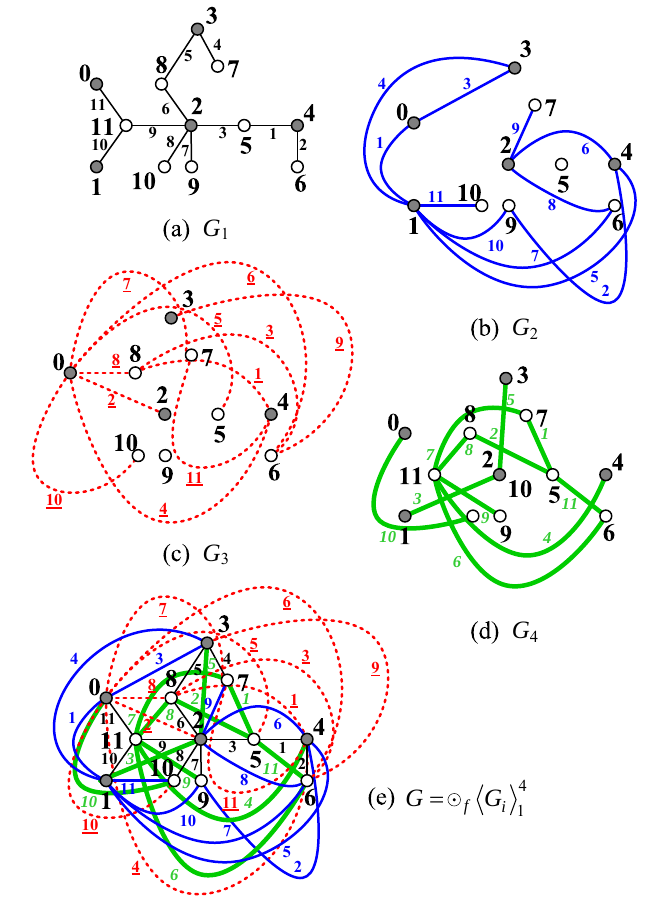}\\
\caption{\label{fig:generalized-multiple-tree} {\small An example for illustrating Definition \ref{defn:multiple-graph-matching}.}}
\end{figure}

We identify the vertices of $T_1,T_2,\dots ,T_{10}$ with the same labels into one, and delete the multiple edges, the resulting graph is just the desired multiple-tree matching partition $G=\oplus_F\langle T_i\rangle ^{10}_1$ with the v-proper labelling $f$ holding $f(V(G))=[0,p-1]$ true and the e-set labelling $F$ satisfying $F(x_iy_j)=\{f_{k}(x_iy_j):~k\in [1,10]\}$ for each edge $x_iy_j\in E(G)$.

The labellings $f_2,f_3,\dots ,f_{10}$ shown in the above proof can deduce the set-ordered graceful labelling $f$, we omit the proof since the proof methods are similar to that in \cite{Yao-Liu-Yao-2017}.
\end{proof}

\begin{thm}\label{thm:total-set-labellings}
If a bipartite $(p,q)$-graph $G$ admits a set-ordered graceful labelling, then $G$ admits a set-ordered graceful/odd-graceful total set-labelling.
\end{thm}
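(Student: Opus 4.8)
The plan is to manufacture, from the given set-ordered graceful labelling $f$ of $G$, a companion set-ordered odd-graceful labelling $g$ of the \emph{same} graph, and then to fuse $f$ and $g$ into a single total set-labelling after inserting one fixed ``marker'' element whose only role is to keep vertex-sets disjoint from edge-sets.

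First I would put $f$ in the canonical form already used in the proofs of Theorems~\ref{thm:e-set-v-proper-labellings} and~\ref{thm:set-ordered-matchings-10-labellings}: writing the bipartition as $X=\{x_i:i\in[1,s]\}$ and $Y=\{y_j:j\in[1,t]\}$ with $s+t=p$, one has $f(x_i)=i-1$, $f(y_j)=s+j-1$, and $f(x_iy_j)=f(y_j)-f(x_i)=s+j-i$ for every edge, so that $f(V(G))=[0,p-1]$ and $f(E(G))=\{s+j-i:x_iy_j\in E(G)\}=[1,q]$; in particular $q\le p-1$, hence $2q\le p+q$, and it is this inequality that will license the marker. Next I would define $g$ on the vertices by $g(x_i)=2(i-1)$ and $g(y_j)=2(s+j-1)-1$. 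A direct check gives $g(X)=\{0,2,\dots,2s-2\}$ and $g(Y)=\{2s-1,2s+1,\dots,2p-3\}\subseteq[0,2q-1]$, so $g_{\max}(X)<g_{\min}(Y)$, while $g(x_iy_j)=g(y_j)-g(x_i)=2(s+j-i)-1$ makes $g(E(G))=[1,2q-1]^o$; thus $g$ is a set-ordered odd-graceful labelling of $G$ in the sense of Definition~\ref{defn:set-ordered-odd-graceful-labelling}. I would also record the three identities $g(x_i)=2f(x_i)$, $g(y_j)=2f(y_j)-1$, $g(x_iy_j)=2f(x_iy_j)-1$, on which the injectivity check will run.

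Then I would set $F(w)=\{f(w),g(w),2q\}$ for $w\in V(G)$ and $F(uv)=\{f(uv),g(uv)\}$ for $uv\in E(G)$; every entry is $\le 2q\le p+q$, so $F:V(G)\cup E(G)\to[0,p+q]^2$, and it remains to verify that $F$ is injective, i.e. a total set-labelling, in three cases. Vertex-versus-edge is immediate: $f(uv)\le q<2q$ and $g(uv)\le 2q-1<2q$, so $2q$ lies in every $F(w)$ and in no $F(uv)$. Edge-versus-edge holds because the graceful edge labels $f(uv)$ are pairwise distinct, so the $F(uv)$ already differ in their smaller entry. Vertex-versus-vertex is the only delicate branch: since $f(w),g(w)<2q$, the element $2q$ is the strict maximum of each $F(w)$, so $F(u)=F(v)$ would force $\{f(u),g(u)\}=\{f(v),g(v)\}$; substituting the three identities above and splitting on whether $u,v$ lie in $X$ or in $Y$, each branch reduces to a $2\times2$ parity/linear system whose only solution forces $f(u)=f(v)$, hence $u=v$ because $f$ is injective on $V(G)$. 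Finally, reading off the representative $f(z)\in F(z)$ for every $z$ recovers the set-ordered graceful labelling $f$ of $G$, and reading off $g(z)\in F(z)$ recovers the set-ordered odd-graceful labelling $g$; so $F$ is the desired set-ordered graceful/odd-graceful total set-labelling.

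The step I expect to be the real obstacle is not any single computation but the reason the marker is needed at all: a graceful labelling (and equally an odd-graceful one) routinely assigns the same integer to some vertex and to some edge, so $f$ and $g$ cannot simply be bundled ``vertexwise and edgewise'' without a clash between a vertex-set and an edge-set — throwing in the fresh value $2q$ is what breaks this, and it is admissible precisely because the set-ordered hypothesis forces $q\le p-1$. After that device is in place, the vertex-versus-vertex subcase is the last unavoidable piece of bookkeeping, and it is disposed of by the relations $g|_X=2f|_X$ and $g|_Y=2f|_Y-1$.
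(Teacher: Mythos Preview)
Your construction is a valid total set-labelling and the injectivity bookkeeping is sound, but you have read the target notion differently from the paper, and as a result you prove something other than what the authors mean. The phrase ``set-ordered graceful/odd-graceful total set-labelling'' is not defined before the theorem; the paper fixes its meaning only inside the proof, and there it is \emph{not} ``a total set-labelling from which one can extract a graceful and an odd-graceful labelling''. Instead the paper labels each vertex by an \emph{interval}: $F(x_i)=[0,i-1]$, $F(y_j)=[0,s+j-1]$, and each edge by the set difference $F(x_iy_j)=F(y_j)\setminus F(x_i)=[i,s+j-1]$. ``Graceful'' then refers to the fact that the edge-set \emph{cardinalities} satisfy $\{|F(x_iy_j)|:x_iy_j\in E(G)\}=[1,q]$, and ``set-ordered'' means the literal containment $F(x_i)\subset F(y_j)$ with $|F(x_i)|<|F(y_j)|$. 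A second, separate map $F'$ with $F'(x_i)=[0,2(i-1)]$, $F'(y_j)=[0,2(s+j)-3]$ gives the odd-graceful version with $\{|F'(x_iy_j)|\}=[1,2q-1]^o$. So the slash in the theorem means ``and (separately)''.

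Under that reading your $F$ fails both clauses: your edge sets all have cardinality $1$ or $2$, so $\{|F(uv)|\}\neq[1,q]$, and there is no containment $F(x_i)\subset F(y_j)$ (indeed $2q\in F(x_i)\cap F(y_j)$ but, say, $0\in F(x_1)\setminus F(y_j)$). Your marker trick and the parity case-split are neat, and your $F$ is a legitimate object in the sense of Definition~\ref{defn:set-labelling}(i); it just is not the ``set-ordered graceful total set-labelling'' the paper is after. If you want to match the paper, drop the fusion idea entirely and use nested intervals whose sizes replay the original labels; the graceful and odd-graceful cases are then two independent one-line constructions with no injectivity subtleties and no need for a marker.
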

\begin{proof} Suppose that $G$ has its own vertex bipartition $(X,Y)$ with $X=\{x_{i}:i\in [1,s]\}$ and $Y=\{y_{j}:j\in [1,t]\}$ holding $s+t=|V(G)|=p$ true. By the hypothesis of this theorem, $G$ admits a set-ordered graceful labelling $f$ defined by $f(x_i)=i-1$ for $i\in [1,s]$, $f(y_j)=s+j-1$ for $j\in [1,t]$ and
$$f(x_iy_j)=f(y_j)-f(x_i)=s+j-i
$$ for each edge $x_iy_j\in E(G)$, such that $f(E(G))=[1,q]$.

We define a total  set-labelling $F: V(G)\cup E(G)\rightarrow [0, p+q]^2$  as follows: $F(x_i)=[0,i-1]$ with $i\in [1,s]$, $F(y_j)=[0,s+j-1]$ with $j\in [1,t]$ and $F(x_iy_j)=F(y_j)\setminus F(x_i)=[i,s+j-1]$ for each edge $x_iy_j\in E(G)$. Clearly,  $F(u)\neq F(w)$ for distinct elements $u,w\in V(G)\cup E(G)$. Moreover, we have
\begin{equation}\label{eqa:c3xxxxx}
{
\begin{split}
&\quad \{|F(x_iy_j)|:i\in [1,s],j\in [1,t]\}\\
&=\{s+j-i:i\in [1,s],j\in [1,t]\}\\
&=\{f(x_iy_j):x_iy_j\in E(G)\}\\
&=f(E(G))=[1,q],
\end{split}}
\end{equation}
and $F(x_i)\subset F(y_j)$ with $i=|F(x_i)|<|F(y_j)|=s+j$. Thereby, $F$ is really  a set-ordered graceful total set-labelling of $G$.

For proving that $G$ admits a set-ordered odd-graceful total set-labelling, we set a total  set-labelling $F': V(G)\cup E(G)\rightarrow [0, p+q]^2$  in the following way: $F'(x_i)=[0,2(i-1)]$ with $i\in [1,s]$, $F'(y_j)=[0,2(s+j)-3]$ with $j\in [1,t]$ and
$$
F'(x_iy_j)=F'(y_j)\setminus F'(x_i)=[2(i-1)+1,2(s+j)-3]
$$ for each edge $x_iy_j\in E(G)$. Furthermore, $F'(x_i)\subset F'(y_j)$ with
$$2i-1=|F'(x_i)|<|F'(y_j)|=2(s+j)-2$$
and
$${
\begin{split}
\{|F'(x_iy_j)|:i\in [1,s],j\in [1,t]\}&=f(E(G))\\
&=[1,2q-1]^o.
\end{split}}
$$

We are done for the proof of the theorem.
\end{proof}

By the results appeared in \cite{Yao-Liu-Yao-2017}, we can show other type of  total set-labelling on bipartite graphs admitting set-ordered graceful labellings.

The graceful graph $G$ shown in Fig.\ref{fig:K5-v-set-e-proper}(b) is a \emph{graceful matching} of $K_5$ admitting a v-set e-proper graceful labelling, and an odd-graceful graph $H$ shown in Fig.\ref{fig:K5-v-set-e-proper}(d) is an \emph{odd-graceful matching} of $K_5$ having a v-set e-proper odd-graceful labelling. The graceful graph $G$ and the odd-graceful graph $H$ can be obtained from $K_5$ by the vertex-split operation introduced in Section I. Two graphs shown in Fig.\ref{fig:K5-v-set-e-proper}(b) and (d) can be shrunk back to $K_5$. Identifying two non-adjacent vertices $u,v$ of a graph $H$ into one $w=u\circ v$ if $N(u)\cap N(v)=\emptyset $ until, any pair of vertices $x,y$ of the last graph $H^*$ hold $|N(u)\cap N(v)|\geq 1$ true. Clearly, $|E(H)|=|E(H^*)|$. We call $H^*$ a non-contracted graph, $H$ has a \emph{non-contracted $H^*$-kernel}. Two graphs shown in Fig.\ref{fig:K5-v-set-e-proper}(b) and (d) both have a non-contracted $K_5$-kernel.

\begin{lem}\label{thm:non-contracted-H-kernel-0}
Each complete graph $K_n$ admits a v-set e-proper (odd-)graceful labelling.
\end{lem}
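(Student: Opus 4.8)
The plan is to exhibit $K_n$ as a vertex-identification quotient of a caterpillar carrying a graceful (respectively odd-graceful) labelling, and then to push that labelling down to a set-labelling on $K_n$. Put $q=\binom n2=|E(K_n)|$, write the vertices of $K_n$ as $v_1,\dots,v_n$, regard $v_1v_2\cdots v_n$ as a Hamiltonian path, and call the remaining $\binom{n-1}2$ edges $v_iv_j$ (with $j\ge i+2$) the \emph{chords}.

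First I would construct the caterpillar $T$ whose spine is a path $s_1s_2\cdots s_n$ and which carries, for each chord $v_iv_j$ with $i<j$, exactly one leaf $\ell_{ij}$ attached to $s_i$; so $T$ has $q$ edges and $q+1$ vertices. The map $\pi\colon V(T)\to V(K_n)$ given by $\pi(s_i)=v_i$ and $\pi(\ell_{ij})=v_j$ sends the $n-1$ spine edges bijectively onto the Hamiltonian-path edges and the $\binom{n-1}2$ leaf edges $s_i\ell_{ij}$ bijectively onto the chords; since these two families partition $E(K_n)$, the map $\pi$ is a bijection from $E(T)$ onto $E(K_n)$. Consequently, identifying the vertices of $T$ with a common $\pi$-image — which is exactly the vertex-identifying operation of the paper, and here requires no removal of multiple edges — returns precisely $K_n$.

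Next I would invoke that every caterpillar admits a graceful labelling (Rosa; see \cite{Gallian2016}), giving $h\colon V(T)\to[0,q]$ with $h(E(T))=[1,q]$; for the odd-graceful case I would instead use the set-ordered odd-graceful labelling of an arbitrary caterpillar produced in the proof of Theorem \ref{thm:Topsnut-matching-team}, giving $h\colon V(T)\to[0,2q-1]$ with $h(E(T))=[1,2q-1]^o$. Then I would define $F\colon V(K_n)\to[0,q]^2$ (resp. $[0,2q-1]^2$) by $F(v)=\{h(x):\pi(x)=v\}$ and $g\colon E(K_n)\to[1,q]$ (resp. $[1,2q-1]^o$) by $g(e)=h(\pi^{-1}(e))$, the latter being well defined because $\pi$ is injective on edges. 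One checks readily that $F(u)\cap F(w)=\emptyset$ for $u\ne w$ (since $h$ is injective on $V(T)$ and $\pi$ is a function), that $g$ is injective with $g(E(K_n))=h(E(T))$ equal to $[1,q]$ (resp. $[1,2q-1]^o$), and that each edge $v_iv_j=\pi(xy)$ satisfies $g(v_iv_j)=h(xy)=|h(x)-h(y)|$ with $h(x)\in F(v_i)$ and $h(y)\in F(v_j)$; hence $(F,g)$ is a v-set e-proper (odd-)graceful labelling of $K_n$.

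The step that carries the weight — and where a careless attempt would fail — is the choice of the auxiliary tree. It must fold onto $K_n$ along a map that is a \emph{bijection} on edges, so that $g$ inherits the entire graceful/odd-graceful edge set and no multiple-edge clean-up interferes, and it must lie in a class for which (odd-)graceful labellings are known unconditionally. Caterpillars meet both demands, which is the reason for the Hamiltonian-path-plus-chords shape of $T$; an arbitrary spanning tree would also do if the Graceful Tree Conjecture were settled, but that is not needed here.
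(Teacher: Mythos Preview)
Your argument is correct, and it is genuinely different from the paper's. The paper proceeds by induction on $n$: starting from an explicit v-set e-proper graceful labelling of $K_5$, it assumes one for $K_n$ with $\bigcup_i F(x_i)\subseteq[0,M_n]$ and $f(E(K_n))=[1,M_n]$, adjoins a new vertex $x_{n+1}$ labelled $M_n+n$, and then argues that one can augment the sets $F(x_i)$ by at most one new element each so that the $n$ new edges receive the labels $M_n+1,\dots,M_n+n$. Your route instead realises $K_n$ in one shot as the vertex-identification image of a single caterpillar and transports the caterpillar's (odd-)graceful labelling through the quotient. This is cleaner and makes the mechanism transparent: it is exactly the converse of the paper's Lemma~\ref{thm:graph-split-tree}, with the key extra observation that the splitting tree for $K_n$ can be taken to be a caterpillar, so the (odd-)graceful input is available unconditionally rather than modulo the Graceful Tree Conjecture. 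The inductive proof, by contrast, is self-contained and does not appeal to any external labelling result, but its extension step (distributing the $n$ new labels among the $F(x_i)$ while keeping the sets disjoint) is more delicate to verify than your fibrewise push-forward.
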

\begin{proof} We have known that $K_5$ admits a v-set e-proper graceful labelling. Assume that $K_n$ admits a v-set e-proper graceful labelling $(F,f)$ such that $F:V(K_n)\rightarrow [0,M_n]$ with $F(x_i)\cap F(x_j)=\emptyset$ for distinct $x_i,x_j\in V(K_n)$ and $f(E(K_n))=[1,M_n]$, where $M_n=\frac{1}{2}n(n-1)$. We add a new vertex $x_{n+1}$ to $K_n$ by joining $x_{n+1}$ with each vertex of $K_n$, and label it with the number $M_n+n$. If $x_i\in V(K_n)$ such that $M_n+n-a_{i,k_i}=M_n+k_i$ for $a_{i,k_i}\in F(x_i)=\{a_{i,1},a_{i,2},\dots ,a_{i,m_i}\}$ with $1\leq m_i\leq M_n$, we put $x_i$ into a set $S$ such that each $F(x_j)$ with $x_j\in \overline{S}=V(K_n)\setminus S$ has no element of $F(x_j)=\{a_{j,1},a_{j,2},\dots ,a_{j,m_j}\}$ with $1\leq m_j\leq M_n$ holds $M_n+n-a_{j,l}=M_n+l$ true. We add an number $b_j\in [M_n+1,M_n+n]\setminus \{n-k_i:x_i\in S\}$ to $F(x_j)$ for $x_j\in \overline{S}$, one-vs-one, thus, we get a v-set e-proper graceful labelling $(F',f')$ of $K_{n+1}$ holding $F':V(K_{n+1})\rightarrow [0,\frac{1}{2}n(n+1)]$ true with $F'(x)\cap F'(y)=\emptyset$ for distinct $x,y\in V(K_{n+1})$ and $f'(E(K_{n+1}))=[1,\frac{1}{2}n(n+1)]$.

By the induction of hypothesis, we claim that each complete graph admits a v-set e-proper graceful labelling, and furthermore this proof way can be used to show each complete graph admits a v-set e-proper odd-graceful labelling.
\end{proof}

Lemma \ref{thm:non-contracted-H-kernel-0} enables us to obtain the following result:

\begin{thm}\label{thm:non-contracted-H-kernel}
A $(p,q)$-graph $G$ with a non-contracted $H$-kernel admits a proper $\varepsilon$-labelling if and only if $H$ admits a v-set e-proper $\varepsilon$-labelling.
\end{thm}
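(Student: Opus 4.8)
The plan is to use the structural meaning of ``$G$ has a non-contracted $H$-kernel'': $H$ is obtained from $G$ by a finite sequence of vertex-identifying operations (Fig.~\ref{fig:two-operations}), and $G$ is recovered from $H$ by the reverse sequence of vertex-split operations. Two consequences of this will be used repeatedly. Since only non-adjacent vertices with disjoint neighbour sets are ever identified, no two edges are ever merged, so there is a natural bijection $\psi:E(G)\to E(H)$; in particular $|E(G)|=|E(H)|=q$. Moreover the identification induces a partition $V(G)=\bigcup_{w\in V(H)}S_w$ into fibres, where $S_w$ collects the vertices of $G$ collapsed onto $w$ and $N_G(v)\cap N_G(v')=\emptyset$ for distinct $v,v'\in S_w$. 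I also record the (schematic) meaning of ``$\varepsilon$-labelling'': it carries a fixed vertex-to-edge rule $\varphi_\varepsilon$ and a prescribed edge-label image $\mathcal{E}_\varepsilon$ (for instance $\varphi_\varepsilon(a,b)=|a-b|$ and $\mathcal{E}_\varepsilon=[1,q]$ in the graceful case, $\mathcal{E}_\varepsilon=[1,2q-1]^o$ in the odd-graceful case), while a v-set e-proper $\varepsilon$-labelling $(F,g)$ keeps $g$ a proper edge labelling with $g(E)=\mathcal{E}_\varepsilon$, makes $F$ a vertex set-labelling with $F(x)\cap F(y)=\emptyset$ for $x\neq y$, and requires every $g(ww')$ to equal $\varphi_\varepsilon(a,b)$ for some $a\in F(w)$, $b\in F(w')$.

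For the ``only if'' direction I would push a proper $\varepsilon$-labelling $h$ of $G$ down the fibres: set $F(w):=h(S_w)$ and $g(e'):=h(\psi^{-1}(e'))$, the latter being the $\varepsilon$-induced label of the matching edge of $G$. Then injectivity of $h$ makes the sets $F(w)$ pairwise disjoint; $g$ is a proper edge labelling with $g(E(H))=h(E(G))=\mathcal{E}_\varepsilon$ because $\psi$ is a bijection and $h$ is an $\varepsilon$-labelling; and if $\psi^{-1}(ww')=vv'$ with $v\in S_w$ and $v'\in S_{w'}$, then $g(ww')=\varphi_\varepsilon(h(v),h(v'))$ realizes $g(ww')$ from elements of $F(w)$ and of $F(w')$. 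Hence $(F,g)$ is a v-set e-proper $\varepsilon$-labelling of $H$; this half is pure bookkeeping.

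The ``if'' direction is where the work lies, and I expect it to be the main obstacle. Given a v-set e-proper $\varepsilon$-labelling $(F,g)$ of $H$, I would rebuild $G$ by the reverse vertex-split sequence while distributing, at each vertex $w$, the set $F(w)$ among the members of the fibre $S_w$. The natural device is to fix for every edge $ww'$ of $H$ a witness pair $(a_{ww'},b_{ww'})\in F(w)\times F(w')$ with $\varphi_\varepsilon(a_{ww'},b_{ww'})=g(ww')$, and then to let a fibre vertex $v\in S_w$ receive the label $h(v)=a_{ww'}$ shared by all edges of $H$ at $w$ that land on $v$ after the splitting. One must check that the witnesses can be chosen so that distinct fibre vertices of $S_w$ use distinct elements of $F(w)$ (this forces $|F(w)|\ge|S_w|$ and is exactly what keeps $h$ injective on $V(G)$) and so that the choices at the two ends of each edge agree; I would do this by induction on the number of split steps, handling a single split of $w$ into $w',w''$, for which it suffices to partition $F(w)$ into two nonempty parts realizing, respectively, the labels of the edges inherited by $w'$ and of those inherited by $w''$. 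The condition $N(w')\cap N(w'')=\emptyset$ is precisely what decouples these two requirements, since the opposite endpoints of the $w'$-edges and of the $w''$-edges form disjoint sets, so v-set e-properness of the current labelling lets them be met independently. The delicate point, which must be arranged with care, is the precise induction hypothesis on the intermediate partially-split graphs: it has to carry a refined vertex set-labelling that is still v-set e-proper $\varepsilon$ and whose fully refined instance on $G$ consists of singletons, i.e.\ an ordinary vertex $\varepsilon$-labelling inducing, via $\psi^{-1}$, exactly the edge labels $g$. Granting this, the final split delivers a proper $\varepsilon$-labelling of $G$, completing the proof; for intuition one may take $H=K_n$, where Lemma~\ref{thm:non-contracted-H-kernel-0} supplies a v-set e-proper (odd-)graceful labelling and the unfolding then produces an ordinary (odd-)graceful labelling of any $G$ that contracts to $K_n$.
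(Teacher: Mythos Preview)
The paper gives no proof of this theorem at all; it only writes ``Lemma~\ref{thm:non-contracted-H-kernel-0} enables us to obtain the following result'' and then states the claim. So there is nothing to compare your argument against except the surrounding intent.

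Your ``only if'' direction is correct and is exactly the intended push-forward: collect labels along fibres and transport edge labels through the edge bijection $\psi$. That part is fine.

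Your ``if'' direction, however, has a genuine gap, and in fact exposes an imprecision in the theorem as written. You try to lift an \emph{arbitrary} v-set e-proper $\varepsilon$-labelling $(F,g)$ of $H$ to the \emph{given} $G$. This cannot work in general. Take $\varepsilon=$ graceful and $G=H=K_n$ with $n\ge 5$: here every fibre $S_w$ is a singleton, yet by Lemma~\ref{thm:non-contracted-H-kernel-0} and Fig.~\ref{fig:K5-v-set-e-proper}(a) the graph $K_n$ admits a v-set e-proper graceful labelling in which some $F(w)$ has several elements, different ones witnessing different incident edges. No single value $h(w)\in F(w)$ can realise all those edge labels simultaneously, so your lifting fails --- consistently with the fact that $K_n$ is not graceful for $n\ge 5$. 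Your inductive scheme does not rescue this, because the obstruction already appears at the base step where no split has yet been performed; the inequality $|F(w)|\ge |S_w|$ that you single out is necessary but far from sufficient.

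What the paper evidently intends (compare the sentence after Lemma~\ref{thm:graph-split-tree}: ``\dots then it can be split into a (an odd-)graceful tree'') is the existential reading: $H$ admits a v-set e-proper $\varepsilon$-labelling iff \emph{some} $G$ with non-contracted $H$-kernel admits a proper $\varepsilon$-labelling. Under that reading the ``if'' direction is immediate --- split each vertex $w$ of $H$ into $|F(w)|$ copies, one per element of $F(w)$, with each copy carrying exactly the edges whose witness at $w$ is that element --- and your induction becomes unnecessary. If you keep $G$ fixed, you must add the compatibility hypothesis that $(F,g)$ is induced by \emph{this particular} splitting (equivalently, that $|F(w)|=|S_w|$ and the witnesses respect the fibre partition of the edges), at which point both directions are again tautological.
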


\begin{figure}[h]
\centering
\includegraphics[height=7.6cm]{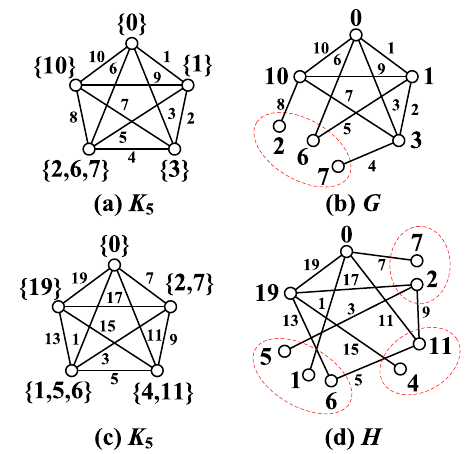}\\
\caption{\label{fig:K5-v-set-e-proper} {\small (a) $K_5$ admits a v-set e-proper graceful labelling; (b) $G$ admits a proper graceful labelling; (c) $K_5$ admits a v-set e-proper odd-graceful labelling; (d) $H$ admits a proper odd-graceful labelling.}}
\end{figure}

\begin{figure}[h]
\centering
\includegraphics[height=7.8cm]{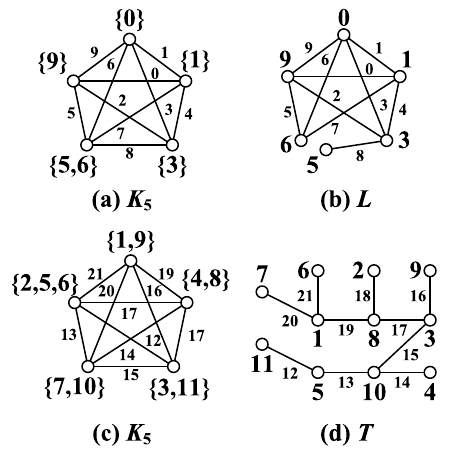}\\
\caption{\label{fig:K5-v-set-felicitous-magic} {\small (a) $K_5$ admits a v-set e-proper felicitous labelling; (b) $L$ admits a felicitous labelling $f$ holding $f(uv)=f(u)+f(v)~(\bmod ~10)$ true; (c) $K_5$ admits a v-set e-proper edge-magic total labelling; (d) $T$ admits an edge-magic total labelling.}}
\end{figure}

\begin{lem}\label{thm:graph-split-tree}
Any connected $(p,q)$-graph can be split into a tree of $q+1$ vertices.
\end{lem}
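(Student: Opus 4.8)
The plan is to peel off cycles one at a time by vertex-split operations (Fig.~\ref{fig:two-operations}). Recall that a vertex-split operation leaves the edge set unchanged and increases the vertex count by $1$; hence, writing $k=q-p+1$ for the cyclomatic number of the connected $(p,q)$-graph $G$, it suffices to show that $k$ successive vertex-splits can be performed so as to keep the graph connected, after which we reach a connected graph on $p+k=q+1$ vertices with $q$ edges, which is necessarily a tree. I would argue by induction on $k$. If $k=0$ then $q=p-1$ and $G$, being connected with $|E(G)|=|V(G)|-1$, is already a tree on $q+1$ vertices, so nothing needs to be done.

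For the inductive step, suppose $k\ge 1$, so $G$ is not a tree and therefore contains a cycle $C=v_1v_2\cdots v_mv_1$; working with simple graphs we have $m\ge 3$, so in particular $v_2\ne v_m$. Partition the neighbor set $N_G(v_1)$ into two parts $A$ and $B$ with $v_2\in A$ and $v_m\in B$, the remaining neighbors of $v_1$ being assigned to either part arbitrarily. Because $A\cap B=\emptyset$, replacing $v_1$ by two vertices $v_1'$ and $v_1''$ with $N(v_1')=A$ and $N(v_1'')=B$ is a legitimate vertex-split operation; it yields a simple graph $G'$ with $p+1$ vertices and the same $q$ edges. The key point is that $G'$ is still connected: $v_1'$ and $v_1''$ are joined in $G'$ by the sub-path $v_2v_3\cdots v_m$ of $C$, which does not pass through $v_1$, and every vertex of $G'$ other than $v_1',v_1''$ is a vertex of $G$ that was adjacent to $v_1$, hence is adjacent in $G'$ to $v_1'$ or to $v_1''$. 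Thus $G'$ is connected with cyclomatic number $q-(p+1)+1=k-1$, and by the induction hypothesis $G'$ — and therefore $G$ — can be split into a tree on $q+1$ vertices.

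The step I expect to be the main obstacle is precisely the verification that connectivity survives each split, which is what dictates the choice of split: one must always split at a vertex lying on some cycle and separate its two cycle-neighbors, so that the two halves of the split remain tied together through the rest of that cycle. A secondary technical point is the standing assumption that $G$ is simple, which guarantees $m\ge 3$ and hence that the split is admissible under the constraint $N(v_1')\cap N(v_1'')=\emptyset$; if parallel edges were allowed, cycles of length $2$ would have to be handled separately.
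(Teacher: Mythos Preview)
Your argument is correct in substance, though one sentence is mis-worded: you write that ``every vertex of $G'$ other than $v_1',v_1''$ is a vertex of $G$ that was adjacent to $v_1$,'' which is of course false in general. What you need (and what your setup gives) is that every vertex $w\neq v_1',v_1''$ of $G'$ lies in $G-v_1$, and since $G$ is connected there is a $w$--$v_1$ path in $G$ whose last edge lands in either $A$ or $B$, hence $w$ reaches $v_1'$ or $v_1''$ in $G'$; together with the path $v_1' v_2 \cdots v_m v_1''$ this yields connectivity. With that fix the induction on the cyclomatic number $k=q-p+1$ goes through cleanly.

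Your route differs from the paper's. The paper inducts on $p$: it picks a leaf $x$ of a spanning tree of $G$, removes $x$ (so $G-x$ stays connected, with $q-|N(x)|$ edges), applies the induction hypothesis to split $G-x$ into a tree, and then reattaches $|N(x)|$ pendant vertices, one to each split copy of a neighbor of $x$; these pendants together constitute the split pieces of $x$. Your induction on $k$ is arguably the more direct formalisation of the intuition, since each vertex-split drops $k$ by exactly one and the base case $k=0$ is immediate; the paper's approach has the mild advantage that it never has to locate a cycle explicitly, only a spanning tree, but at the cost of a slightly more elaborate reattachment step.
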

\begin{proof} Our proof is based on induction of vertex number. As $p=2$ and $q=1$, the lemma is obvious. Assume that a connected $(p,q)$-graph can be split into a tree of $q+1$ vertices. We consider any connected $(p+1,q')$-graph $G$. There exists a spanning tree $T$ in $G$, since $G$ is connected. We take a leaf $x$ of $T$, so the graph $G-x$ is a connected $(p,q'-m)$-graph, where $m=|N(x)|$, and the neighbor set $N(x)$ collects all neighbors $x_1,x_2,\dots x_m$ of the vertex $x$. By the hypothesis of induction, the connected $(p,q'-m)$-graph $G-x$ can be split into a tree $H$ of $q'-m$ vertices. Suppose that each vertex $x_i\in N(x)$ was split into $x_{i,1},x_{i,2}, \dots ,x_{i,m_i}$ in $H$. We add new vertices $x'_i$ to $x_{i,1}$ by an edge $x'_ix_{i,1}$ with $i\in [1,m]$, the result graph is just a tree $H'$ of $q'-m+m$ vertices. Thereby, $H'$ is the desired tree split from $G$, and the leaves $x'_1,x'_2,\dots ,x'_m$ of $H'$ are the result of splitting the vertex $x$ of $G$.
\end{proof}

By Lemma \ref{thm:graph-split-tree} we can see: ``If every tree is (odd-)graceful, then any connected $(p,q)$-graph admits a v-set e-proper (odd-)graceful labelling. Conversely, if a connected $(p,q)$-graph $G$ admits a v-set e-proper (odd-)graceful labelling $(F,f)$ holds $f(E(G))=[1,q]$ (resp. $f(E(G))=[1,2q-1]$) true and $\bigcup_{u\in V(G)}F(u)=[0,q]$ (resp. $\bigcup_{u\in V(G)}F(u)\subset [0,2q-1]$), then it can be split into a (an odd-)graceful tree''. For a connected $(p,q)$-graph $G$ holding $q\equiv 0,3~(\bmod~4)$ true, we can say $G$ admits a v-set e-proper (odd-)graceful labelling. Thereby, we conjecture:
\begin{conj}\label{conj:v-set-e-proper}
Each connected graph with no multiple edges and self-loops admits a v-set e-proper (odd-)graceful labelling.
\end{conj}

In \cite{Hui-Sun-Bing-Yao2018}, the authors show some Euler graphs admit v-set e-proper $X$-labellings with $X \in \{$graceful, odd-graceful, harmonious, $k$-graceful, odd sequential, elegant, odd-elegant, felicitous, odd-harmonious, edge-magic total$\}$. Thereby, we can generalize Conjecture \ref{conj:v-set-e-proper} to other labellings. Zhou \emph{et al.} in\cite{Zhou-Yao-Chen-Tao2012} have proven: Lobsters admit odd-graceful labellings, then we have

\begin{thm}\label{thm:caterpillar-lobster-v-set-e-proper}
If  a connected graph can be split into a tree admitting a v-set e-proper $X$-labelling, where $X$ is a graph labelling, then it admits a v-set e-proper $X$-labelling.
\end{thm}

Notice that a  connected $(p,q)$-graph $G$ admitting a graceful labelling $f$ can be split into a tree $T$ of $q$ edges, then $T$ admits a \emph{splitting  graceful labelling} $g$ induced by $f$, here, there are at least two vertices $u,v$ holding $g(u)=g(v)$ true under $g(V(T))\subseteq [0,q]$, but $g(E(T))=[1,q]$.

\begin{conj}\label{conj:splitting-graceful-trees}
Each tree with diameter less than three admits a splitting graceful labelling.
\end{conj}

\subsection{Magic type of matching labellings}

\begin{defn}\label{defn:relaxed-Emt-labelling}
$^*$ Let $f:V(G)\cup E(G)\rightarrow [1,p+q]$ be a total labelling of a $(p,q)$-graph $G$. If there is a constant $k$ such that $$f(u)+f(uv)+f(v)=k,$$ and each edge $uv$ corresponds another edge $xy$ holding $f(uv)=|f(x)-f(y)|$ true, then we call $f$ a \emph{relaxed edge-magic total labelling} (relaxed Emt-labelling) of $G$ (called a \emph{relaxed Emt-graph}) (see Fig.\ref{fig:some-examples}(a)).\qqed
\end{defn}

\begin{defn}\label{defn:Oemt-labelling}
$^*$ Suppose that a $(p,q)$-graph $G$ admits a vertex labelling $f:V(G) \rightarrow [0,2q-1]$ and an edge labelling $g:E(G)\rightarrow [1,2p-1]^o$. If there is a constant $k$ such that $$f(u)+g(uv)+f(v)=k$$ for each edge $uv\in E(G)$, and $g(E(G))=[1,2p-1]^o$, then we call $(f,g)$ an \emph{odd-edge-magic matching labelling} (Oemm-labelling) of $G$ (called an \emph{Oemm-graph}). See Fig.\ref{fig:odd-magic-2-labellings}(c),Fig.\ref{fig:odd-graceful-2-magic}(a) and Fig.\ref{fig:odd-graceful-2-magic} (b).\qqed
\end{defn}

\begin{defn}\label{defn:relaxed-Oemt-labelling}
$^*$ Suppose that a $(p,q)$-graph $G$ admits a vertex labelling $f:V(G)\rightarrow [0,2q-1]$ and an edge labelling $g:E(G)\rightarrow [1,2q-1]^o$, and let $s(uv)=|f(u)-f(v)|-g(uv)$ for $uv\in E(G)$. If

(i) each edge $uv$ corresponds an edge $u'v'$ such that $g(uv)=|f(u')-f(v')|$;

(ii) and there exists a constant $k'$ such that each edge $xy$ has a matching edge $x'y'$ holding $s(xy)+s(x'y')=k'$ true;

(iii) there exists a constant $k$ such that $f(uv)+|f(u)-f(v)|=k$ for each edge $uv\in E(G)$.

Then we call $(f,g)$ an \emph{ee-difference odd-edge-magic matching labelling} (Eedoemm-labelling) of $G$ (called a \emph{Eedoemm-graph}). (see Fig.\ref{fig:odd-graceful-2-magic}(a) and (b))\qqed
\end{defn}

\begin{figure}[h]
\centering
\includegraphics[height=6cm]{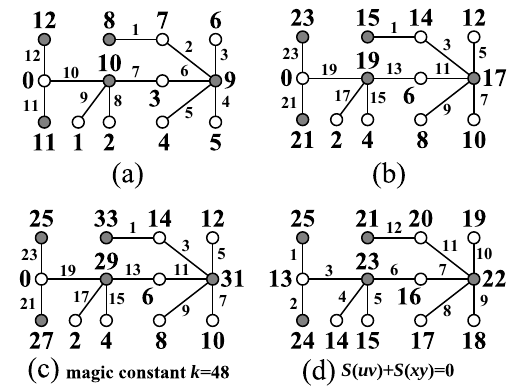}\\
\caption{\label{fig:odd-magic-2-labellings} {\small A caterpillar admits: (a) a set-ordered graceful labellings; (b) a set-ordered odd-graceful labellings; (c) an odd-edge-magic matching labelling $(f,g)$; (d) an ee-difference odd-magic matching labelling $(f,g)$.}}
\end{figure}

We, again, define a new labelling with more restrictive conditions as follows:

\begin{defn}\label{defn:6C-labelling}
$^*$ A total labelling $f:V(G)\cup E(G)\rightarrow [1,p+q]$ for a bipartite $(p,q)$-graph $G$ is a bijection and holds:

(i) (e-magic) $f(uv)+|f(u)-f(v)|=k$ for a constant $k$ and each edge $uv\in E(G)$;

(ii) (ee-difference) each edge $uv\in E(G)$ matches with another edge $xy\in E(G)$ holding $f(uv)=|f(x)-f(y)|$ (or $f(uv)=2(p+q)-|f(x)-f(y)|$);

(iii) (ee-balanced) let $s(uv)=|f(u)-f(v)|-f(uv)$ for $uv\in E(G)$, then there exists a constant $k'$ such that each edge $uv$ matches with another edge $u'v'$ holding $s(uv)+s(u'v')=k'$ (or $2(p+q)+s(uv)+s(u'v')=k'$) true;

(iv) (EV-ordered) $f_{\min}(V(G))>f_{\max}(E(G))$ (or $f_{\max}(V(G))<f_{\min}(E(G))$, or $f(V(G))\subseteq f(E(G))$, or $f(E(G))\subseteq f(V(G))$, or $f(V(G))$ is an odd-set and $f(E(G))$ is an even-set);

(v) (ve-matching) there exists a constant $k''$ such that each edge $uv$ matches with one vertex $w$ such that $f(uv)+f(w)=k''$, and each vertex $z$ matches with one edge $xy$ such that $f(z)+f(xy)=k''$, except the \emph{singularity} $f(x_0)=\lfloor \frac{p+q+1}{2}\rfloor $;

(vi) (set-ordered) $f_{\max}(X)<f_{\min}(Y)$ (or $f_{\min}(X)>f_{\max}(Y)$) for the bipartition $(X,Y)$ of $V(G)$.

We then call $f$ a \emph{6C-labelling}.\qqed
\end{defn}

For a given $(p,q)$-tree $G$ admitting a 6C-labelling $f$, if another $(p,q)$-tree $H$ admits a 6C-labelling $g$ such that
$$
f(V(G))\setminus X^*=g(E(H)),~f(E(G))=g(V(H))\setminus X^*
$$ and $f(V(G))\cap g(V(H))=X^*$, where $X^*=\{\lfloor \frac{p+q+1}{2}\rfloor \}$, we identify the vertex $x_0$ of $G$ having $f(x_0)=\lfloor \frac{p+q+1}{2}\rfloor $ with the vertex $w_0$ of $H$ having $g(w_0)=\lfloor \frac{p+q+1}{2}\rfloor $ into one to form a graph $\odot_1\langle G,H \rangle $, called a \emph{6C-complementary matching}. See examples shown in Fig.\ref{fig:ev-set-exchanged} and Fig.\ref{fig:odd-even-separable}.

\begin{figure}[h]
\centering
\includegraphics[height=5.6cm]{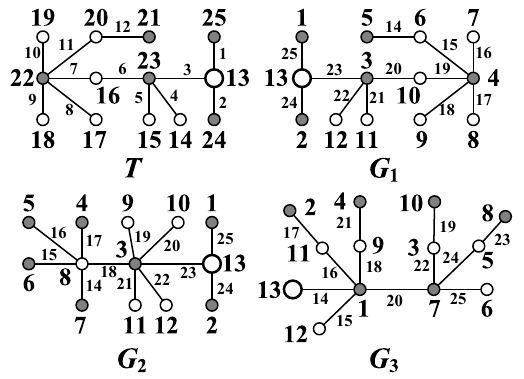}\\
\caption{\label{fig:ev-set-exchanged} {\small A caterpillar $T$ admitting a 6C-labelling has three 6C-complementary matchings $\odot_1\langle T,G_i\rangle$ with $i=1,2,3$.}}
\end{figure}

\begin{thm}\label{thm:6C-complementary-matching}
If a tree $T$ admits a 6C-labelling, then there exists another tree $H$ admitting a 6C-labelling such that $\odot_1\langle T,H \rangle $ is a 6C-complementary matching.
\end{thm}
\begin{proof}
Suppose that a $(p,q)$-tree $T$ admits a 6C-labelling $f$, we define another labelling $h$ of $T$ by $h(x)=2(p+q+1)-f(x)$ for $x\in V(T)$, so $h(uv)=(p+q+1)-f(uv)$ for $uv\in E(T)$. Next, we set $E(T)=\{u_iv_i:i\in [1,q]\}$ holding $h(u_jv_j)<u_{j+1}v_{j+1}$ true with $j\in [1,q-1]$. Now, we can defined a labelling $g$ of a copy $T'$ of $T$ in the way: $g(x)=h(x)$ for $x\in V(T')=V(T)$, and $g(u_iv_i)=2p+q+1-h(u_iv_i)$ for $u_iv_i\in E(T')=E(T)$. Notice that
$${
\begin{split}
g(u_iv_i)&=2p+q+1-h(u_iv_i)\\
&=2p+q+1-[(p+q+1)-f(u_iv_i)]\\
&=f(u_iv_i)+p
\end{split}}$$ for $u_iv_i\in E(T')=E(T)$. We claim that $g$ is a 6C-labelling of  $T'$ too, which means that $\odot_1\langle T,T' \rangle $ is a 6C-complementary matching.
\end{proof}

\begin{thm}\label{thm:set-ordered-vs-6C-labelling}
A tree admits a set-ordered graceful labelling if and only if it admits a 6C-labelling.
\end{thm}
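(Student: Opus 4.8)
The statement to prove is an ``if and only if''. The plan is to treat the two directions separately: ``set-ordered graceful $\Rightarrow$ 6C-labelling'' is constructive and carries essentially all of the work, while the converse is a short rescaling argument. For the converse, suppose a tree $T$ on $p$ vertices (so $q=p-1$) admits a 6C-labelling $f:V(T)\cup E(T)\to[1,p+q]=[1,2p-1]$ in the sense of Definition~\ref{defn:6C-labelling}. Since $f$ is injective on $V(T)\cup E(T)$ and $|f(V(T))|=p>p-1=|f(E(T))|$, the ``set-containment'' alternatives in clause (iv) are impossible, so clause (iv) leaves only three configurations: (A) $f(E(T))=[1,p-1]$ and $f(V(T))=[p,2p-1]$; (B) $f(V(T))=[1,p]$ and $f(E(T))=[p+1,2p-1]$; or (E) $f(V(T))=[1,2p-1]^o$ and $f(E(T))=\{2,4,\dots,2p-2\}$. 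In each case I would use the e-magic clause (i), $f(uv)+|f(u)-f(v)|=k$, together with the fact that $f$ restricted to $E(T)$ is a bijection onto $p-1$ consecutive (respectively $p-1$ consecutive even) integers, to pin down $k$ and to force $\{|f(u)-f(v)|:uv\in E(T)\}$ to be exactly $[1,p-1]$ in cases (A),(B) and $\{2,4,\dots,2p-2\}$ in case (E). Then the map $g$ on $V(T)$ given by $g(w)=f(w)-p$ (case A), $g(w)=f(w)-1$ (case B), or $g(w)=\tfrac12(f(w)-1)$ (case E) is a bijection onto $[0,p-1]$ whose induced edge labels $|g(u)-g(v)|$ sweep $[1,p-1]$ bijectively, i.e.\ a graceful labelling; and clause (vi), $f_{\max}(X)<f_{\min}(Y)$, transfers verbatim to $g$, so $g$ is set-ordered graceful.

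For the forward direction I would argue in the spirit of the proofs of Theorem~\ref{thm:e-set-v-proper-labellings} and Theorem~\ref{thm:set-ordered-matchings-10-labellings}. Let $(X,Y)$ be the bipartition of $T$, $X=\{x_i:i\in[1,s]\}$, $Y=\{y_j:j\in[1,t]\}$, $s+t=p$, and take the canonical set-ordered graceful labelling $f$ with $f(x_i)=i-1$, $f(y_j)=s+j-1$, so $f(x_iy_j)=s+j-i$ and $f(E(T))=[1,p-1]$. Define a total labelling $F$ by $F(x_iy_j)=f(x_iy_j)$ on every edge, $F(x_i)=p+s-i$ for $i\in[1,s]$, and $F(y_j)=2p-j$ for $j\in[1,t]$ (keep the edge labels, push the vertex labels into $[p,2p-1]$, reflecting within each part). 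Then $F:V(T)\cup E(T)\to[1,2p-1]=[1,p+q]$ is a bijection, since the edge labels fill $[1,p-1]$, the $x$-labels fill $[p,p+s-1]$, and the $y$-labels fill $[p+s,2p-1]$. The one identity driving the whole verification is $F(x_i)-F(y_j)=j-i-t<0$, hence $|F(x_i)-F(y_j)|=t+i-j=p-f(x_iy_j)$. From this: clause (i) (e-magic) holds with constant $k=p$, as $F(x_iy_j)+|F(x_i)-F(y_j)|=(s+j-i)+(t+i-j)=p$; clause (iv) (EV-ordered) holds since $\min F(V(T))=p>p-1=\max F(E(T))$; clause (vi) (set-ordered) holds since $\max F(X)=p+s-1<p+s=\min F(Y)$; clause (v) (ve-matching) holds with $k''=2p$, pairing the edge labelled $c\in[1,p-1]$ with the vertex labelled $2p-c\in[p+1,2p-1]$ and conversely, the unique unpaired vertex being the one labelled $p=\lfloor\frac{p+q+1}{2}\rfloor$, i.e.\ the singularity; and clauses (ii) (ee-difference) and (iii) (ee-balanced) hold because $c\mapsto p-c$ is an involution of $[1,p-1]$: the edge carrying label $c$ is matched with the edge $x_iy_j$ for which $p-f(x_iy_j)=c$ (so $|F(x_i)-F(y_j)|=c$), and $s(x_iy_j)=|F(x_i)-F(y_j)|-F(x_iy_j)=p-2f(x_iy_j)$ sweeps a set symmetric about $0$, so matched edges have $s$-values summing to $k'=0$. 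Hence $F$ is a 6C-labelling of $T$.

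The main obstacle is not any single deep idea but the simultaneous bookkeeping: one must choose the vertex assignment $x_i\mapsto p+s-i$, $y_j\mapsto 2p-j$ so precisely that all six magic and ordering constraints of Definition~\ref{defn:6C-labelling} hold at once, and in particular so that the forced singularity lands exactly on the vertex labelled $\lfloor\frac{p+q+1}{2}\rfloor$. The remaining technical nuisance is the parity-induced fixed point $c=p/2$ of the involution $c\mapsto p-c$ when $p$ is even, which must be absorbed in clauses (ii) and (iii): either by invoking their alternative forms ``$f(uv)=2(p+q)-|f(x)-f(y)|$'' and ``$2(p+q)+s(uv)+s(u'v')=k'$'', or by noting that on that one edge the ``matching edge'' may be taken to be the edge itself. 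I expect the converse to be routine once the three configurations from clause (iv) are isolated.
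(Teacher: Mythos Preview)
Your proof is correct and follows essentially the same strategy as the paper: for the forward direction, shift the graceful vertex labels into $[p,2p-1]$ so that the e-magic constant is $k=p$ and all six clauses fall out of the bijection $c\leftrightarrow p-c$ on $[1,p-1]$; for the converse, undo the shift. Two minor differences are worth noting. First, in the forward construction the paper sets $f(w)=p+g(w)$ on vertices and $f(x_iy_j)=p-g(x_iy_j)$ on edges (so it reflects the edge labels), whereas you keep the edge labels and reflect the vertex labels within each part via $F(x_i)=p+s-i$, $F(y_j)=2p-j$; both yield $F(uv)+|F(u)-F(v)|=p$ and the verifications of (ii)--(vi) are interchangeable. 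Second, your converse is actually more careful than the paper's: the paper tacitly assumes only the first alternative of clause~(iv), reading off $h(E(T))=[1,p-1]$ and $h(V(T))=[p,2p-1]$ directly, while you correctly observe that the disjunction in clause~(iv) leaves three live configurations (A), (B), (E) and handle each with the appropriate affine rescaling. Your treatment of the parity fixed point $c=p/2$ matches the paper's (it also singles out the edge with $s(e)=0$ when $p$ is even).
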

\begin{proof} Suppose that $(X,Y)$ is the bipartition of vertex set of a tree $T$, where $X=\{x_{i}:i\in [1,s]\}$ and $Y=\{y_{j}:j\in [1,t]\}$ with vertex number $|V(T)|=p=s+t$ and edge number $|E(T)|=p-1$.

\emph{The proof of ``if''}. Notice that $T$ admitting a set-ordered graceful labelling $g$, so each vertex is labeled as $g(x_i)=i-1$ for $i\in [1,s]$ and $g(y_j)=s+j-1$ for $j\in [1,t]$, and furthermore each edge $x_iy_j\in E(T)$ has its label $g(x_iy_j)=g(y_j)-g(x_i)=s+j-i$.

We define another labelling $f$ for the tree $T$ in the way: $f(w)=p+g(w)$ for $w\in V(T)$, and $$f(x_iy_j)=p-g(x_iy_j)=p-|g(x_i)-g(y_j)|$$ for $x_iy_j\in E(T)$. Clearly,
\begin{equation}\label{eqa:f(V(T))-f(E(T))}
f(V(T))=[p,2p-1],\quad f(E(T))=[1,p-1].
\end{equation}

(i) (e-magic) Each edge $x_iy_j\in E(T)$ holds $f(x_iy_j)+|f(x_i)-f(y_j)|=p-g(x_iy_j)+g(x_iy_j)=p$ true.

(ii) (ee-difference) Each edge $x_iy_j\in E(T)$ matches with another edge $x'_iy'_j\in E(T)$ holding $p-g(x_iy_j)=g(x'_iy'_j)$ such that

$${
\begin{split}
f(x_iy_j)&=p-g(x_iy_j)=g(x'_iy'_j)\\
&=|g(x'_i)-g(y'_j)|\\
&=|p+g(x'_i)-[p+g(y'_j)]|\\
&=|f(x'_i)-f(y'_j)|.
\end{split}}
$$

(iii) (ee-balanced) Let $s(x_iy_j)=|f(x_i)-f(y_j)|-f(x_iy_j)$ for $x_iy_j\in E(T)$, so
$${
\begin{split}
s(x_iy_j)&=|f(x_i)-f(y_j)|-f(x_iy_j)\\
&=|g(x'_i)-g(y'_j)|-p+g(x_iy_j)\\
&=2g(x_iy_j)-p,
\end{split}}
$$ which distributes $\{2-p,4-p, \dots ,-2,0,2,4,\dots ,p-2\}$ if $p$ is even, or $\{2-p,4-p, \dots ,-3,-1,1,3,\dots ,p-2\}$ if $p$ is odd. Thereby, each edge $x_iy_j\in E(T)$ matches with another edge $x''_iy''_j\in E(T)$ such that $s(x_iy_j)+s(x''_iy''_j)=0$, except that edge $e$ golding $s(e)=0$ as $p$ is even.

(iv) (EV-ordered) $f_{\max}(E(T))<f_{\min}(V(T))$ from (\ref{eqa:f(V(T))-f(E(T))}).

(v) (ve-matching) The form (\ref{eqa:f(V(T))-f(E(T))}) tells us: Each edge $uv$ matches with one vertex $w$ such that $f(uv)+f(w)=2p$, and each vertex $z$ matches with one edge $xy$ such that $f(z)+f(xy)=2p$, except the \emph{singularity} $f(w')=p$.

(vi) (set-ordered) $f_{\max}(X)<f_{\min}(Y)$ for the bipartition $(X,Y)$ of $V(G)$ according to (\ref{eqa:f(V(T))-f(E(T))}).

Hence, we claim that the labelling $f$ admits really a 6C-labelling defined in Definition \ref{defn:6C-labelling}.

\emph{The proof of ``only if''}. Suppose that $T$ admits a 6C-labelling $h$. By the property (iv) and $h(V(T)\cup E(T))=[1,2p-1]$, we get $h(E(T))=[1,p-1]$ and $h(V(T))=[p,2p-1]$. We define a labelling $h^*$ as: $h^*(w)=h(w)-p$ for $w\in V(T)$, which gives $h^*(V(T))=[0,p-1]$; and $h^*(x_iy_j)=p-h(x_iy_j)$ for each edge $x_iy_j\in E(T)$, so $h^*(E(T))=[1,p-1]$. The property (i) enables us to compute

\begin{equation}\label{eqa:c3xxxxx}
{
\begin{split}
h^*(x_iy_j)&=p-h(x_iy_j)\\
&=p-[p-|h(x_i)-h(y_j)|]\\
&=|[h(x_i)-p]-[h(y_j)-p]|\\
&=|h^*(x_i)-h^*(y_j)|,
\end{split}}
\end{equation}
that is, $h^*$ is graceful. The property (vi) means that $h^*$ is set-ordered.
\end{proof}

\begin{figure}[h]
\centering
\includegraphics[height=5.8cm]{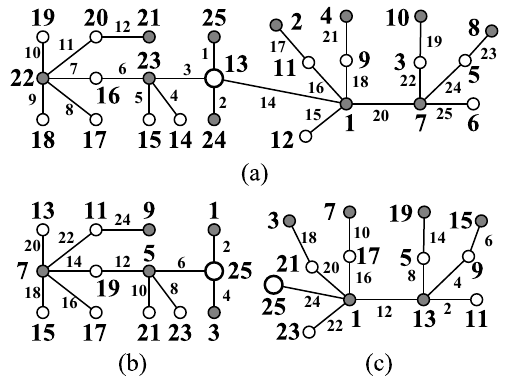}\\
\caption{\label{fig:odd-even-separable} {\small (a) $G=\odot_1\langle T,G_3\rangle $ admits a 6C-labelling defined in Definition \ref{defn:6C-labelling}, where $T$ and $G_3$ are shown in Fig.\ref{fig:ev-set-exchanged}; (b) an \emph{odd-even separable 6C-labelling}; (c) an odd-even separable 6C-labelling.}}
\end{figure}

In Fig.\ref{fig:odd-even-separable}(a), $G=\odot_1\langle T,G_3\rangle $ is obtained by identifying two singularities $13$ of $T$ and $G_3$ shown in Fig.\ref{fig:ev-set-exchanged} into one, where the 6C-labelling $f_{G_3}$ of $G_3$ is the \emph{reciprocal-inverse labelling} of the 6C-labelling $f_T$ of $T$, so we say $f_{G_3}$ and $f_T$ are matching to each other, and $(f_{G_3},f_T)$  is a 6C-complementary matching. Observe the 6C-labelling $\theta$ of $G=\odot_1\langle T,G_3\rangle $, we can see such properties: $\theta(E(G))\subset \theta(V(G))$; $13~(=p)$ is the common singularity of two trees $T$ and $G_3$; and $\theta(uv)+|\theta(u)-\theta(v)|=13~(=p)$ for each edge $uv\in E(T)$, $\theta(xy)-|\theta(x)-\theta(y)|=13~(=p)$ for each edge $xy\in E(G_3)$. The particular properties of the 6C-labelling $\theta$ of $G=\odot_1\langle T,G_3\rangle $ enables us to define a new labelling. Fig.\ref{fig:odd-even-separable}(b) and Fig.\ref{fig:odd-even-separable}(c) show two \emph{odd-even separable 6C-labellings}. Thereby, we can have the following results (the proofs of these two results are similar to that in the proof of Theorem \ref{thm:set-ordered-vs-6C-labelling}):

\begin{cor}\label{thm:odd-even-separated-labelling}
A tree admits a set-ordered graceful labelling if and only if it admits an odd-even separable 6C-labelling defined in Definition \ref{defn:6C-labelling}.
\end{cor}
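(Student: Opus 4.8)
The plan is to imitate the proof of Theorem~\ref{thm:set-ordered-vs-6C-labelling}, replacing the uniform shift $f(w)=p+g(w)$ used there by a \emph{parity-separating} rescaling that pushes the $p$ vertex labels onto the odd integers of $[1,2p-1]$ and the $p-1$ edge labels onto the even integers. This split is in fact forced: a $(p,p-1)$-tree must receive a bijective total labelling onto $[1,p+q]=[1,2p-1]$, and $[1,2p-1]$ has exactly $p$ odd and $p-1$ even entries, so ``odd-even separable'' can only mean vertices $\mapsto$ odd, edges $\mapsto$ even.

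\emph{The ``if'' direction.} Starting from a tree $T$ with bipartition $(X,Y)$, $X=\{x_i:i\in[1,s]\}$, $Y=\{y_j:j\in[1,t]\}$, $s+t=p$, carrying a set-ordered graceful labelling $g$ with $g(x_i)=i-1$, $g(y_j)=s+j-1$, $g(x_iy_j)=s+j-i$, I would set $f(x_i)=2i-1$, $f(y_j)=2(s+j)-1$ and $f(x_iy_j)=2p-2g(x_iy_j)$. Then $f(V(T))=[1,2p-1]^o$, $f(E(T))=\{2,4,\dots,2p-2\}$, $f$ is a bijection onto $[1,2p-1]$, and $|f(x_i)-f(y_j)|=2g(x_iy_j)$, so $f(x_iy_j)+|f(x_i)-f(y_j)|=2p$; this fixes the e-magic constant $k=2p$. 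The remaining clauses of Definition~\ref{defn:6C-labelling} are then checked verbatim as in Theorem~\ref{thm:set-ordered-vs-6C-labelling}: ee-difference holds because each even value $2p-2g(x_iy_j)$ equals $|f(x'_i)-f(y'_j)|$ for the edge with $g$-label $p-g(x_iy_j)$; ee-balanced holds because $s(x_iy_j)=|f(x_i)-f(y_j)|-f(x_iy_j)=4g(x_iy_j)-2p$ is symmetric about $0$, with the usual single leftover edge when $p$ is even; EV-ordered and set-ordered are immediate from the explicit ranges; and ve-matching follows because the complement map $z\mapsto 2p-z$ permutes $[1,2p-1]$ and preserves parity, so it pairs vertices with vertices and edges with edges, leaving only the self-complementary value $p=\lfloor\frac{p+q+1}{2}\rfloor$ as the singularity.

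\emph{The ``only if'' direction.} Conversely, let $h$ be an odd-even separable 6C-labelling of $T$. Since $h$ is a bijection onto $[1,2p-1]$ with $h(V(T))$ the odd set and $h(E(T))$ the even set, necessarily $h(V(T))=[1,2p-1]^o$, $h(E(T))=\{2,\dots,2p-2\}$, and the e-magic constant is $2p$. Put $h^*(w)=\tfrac12(h(w)-1)$ for $w\in V(T)$, so $h^*(V(T))=[0,p-1]$; then $|h^*(x_i)-h^*(y_j)|=\tfrac12|h(x_i)-h(y_j)|=\tfrac12(2p-h(x_iy_j))=p-\tfrac12 h(x_iy_j)$, which runs over $[1,p-1]$ as $h(x_iy_j)$ runs over $\{2,\dots,2p-2\}$. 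Hence $h^*$ is graceful, and clause (vi) of $h$ transfers to make $h^*$ set-ordered.

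The template of Theorem~\ref{thm:set-ordered-vs-6C-labelling} does almost all of the work; the one point needing genuine care is the ve-matching clause~(v) with its singularity $\lfloor\frac{p+q+1}{2}\rfloor=p$. Since $p$ lies in the odd set only when $p$ is odd, I expect to split on the parity of $p$: when $p$ is odd the exceptional (self-paired) element is a vertex, and when $p$ is even it is an edge, which dovetails with the $s(uv)+s(u'v')=0$ pairing already having a leftover edge for even $p$. This bookkeeping, rather than any new idea, is the main obstacle.
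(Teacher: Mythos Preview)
Your overall plan---mimic Theorem~\ref{thm:set-ordered-vs-6C-labelling} with the parity-splitting rescaling $f(w)=2g(w)+1$ on vertices and $f(x_iy_j)=2p-2g(x_iy_j)$ on edges---is exactly what the paper intends; the paper itself gives no separate argument and simply remarks that the proof is ``similar with that in the proof of Theorem~\ref{thm:set-ordered-vs-6C-labelling}''. Your verifications of (i) e-magic with $k=2p$, (ii) ee-difference, (iii) ee-balanced with $s(x_iy_j)=4g(x_iy_j)-2p$, (iv) EV-ordered in its odd/even form, and (vi) set-ordered are all correct, and your ``only if'' direction via $h^*(w)=\tfrac12(h(w)-1)$ is clean.

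There is, however, a genuine slip in your treatment of clause~(v). You write that ``the complement map $z\mapsto 2p-z$ \dots\ pairs vertices with vertices and edges with edges'', but that is precisely the \emph{opposite} of what ve-matching demands: clause~(v) asks that each \emph{edge} label be paired with a \emph{vertex} label summing to $k''$. Since your edge labels are even and your vertex labels are odd, the constant $k''$ must be odd, so $k''=2p$ cannot work. Taking $k''=2p-1$ sends the edge set $\{2,4,\dots,2p-2\}$ bijectively onto the vertex set $\{1,3,\dots,2p-3\}$, leaving the vertex labelled $2p-1$ unmatched; taking $k''=2p+1$ leaves the vertex labelled $1$ unmatched. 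In either case the singular vertex is not the one carrying the label $p=\lfloor (p+q+1)/2\rfloor$ prescribed in Definition~\ref{defn:6C-labelling}. You should therefore either (a) note explicitly that in the odd-even separable variant the singularity necessarily migrates to an extreme odd label (the paper's definition is loose enough to accommodate this, and the figure examples are consistent with it), or (b) adjust the vertex relabelling so that the unmatched vertex really does carry the label $p$ when $p$ is odd. Your closing paragraph correctly flags (v) as the delicate point, but the specific mechanism you propose there (the parity-preserving involution $z\mapsto 2p-z$) does not deliver a vertex--edge pairing and needs to be replaced by the odd constant $k''=2p\pm 1$.
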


\begin{cor}\label{thm:reciprocal-inverse-in-one}
Suppose that two trees $T$ and $H$ of $p$ vertices admit set-ordered graceful labellings. Then $G=\odot_1\langle T,H\rangle $ admits a 6C-labelling $\theta$ with
$$
\theta(uv)+|\theta(u)-\theta(v)|=p
$$
for each edge $uv\in E(T)$,
$$
\theta(xy)-|\theta(x)-\theta(y)|=p
$$ for each edge $xy\in E(H)$.
\end{cor}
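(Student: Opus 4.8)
The plan is to reduce everything to two ingredients that are already in hand: Theorem~\ref{thm:set-ordered-vs-6C-labelling}, which converts each set-ordered graceful tree into a 6C-labelled tree, and the dual-labelling operation $h'(z)=\max h(S)+\min h(S)-h(z)$ from the preliminaries, which will serve as the ``reciprocal-inverse''. First I would apply the ``if'' direction of Theorem~\ref{thm:set-ordered-vs-6C-labelling} to $T$: writing $g_T$ for the given set-ordered graceful labelling with bipartition $(X,Y)$, $|X|=s$, $|Y|=t$, $s+t=p$, the rule $\theta(w)=p+g_T(w)$ on $V(T)$ and $\theta(uv)=p-g_T(uv)$ on $E(T)$ is a 6C-labelling of $T$ with $\theta(V(T))=[p,2p-1]$, $\theta(E(T))=[1,p-1]$, and $\theta(uv)+|\theta(u)-\theta(v)|=p$; its singularity is the vertex carrying $\theta$-value $p$, namely the $X$-vertex with $g_T$-label $0$.

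For $H$ I would take the reciprocal-inverse route. Let $g_H$ be a set-ordered graceful labelling of $H$ with bipartition $(X',Y')$ and set $\theta(w)=p-g_H(w)$ on $V(H)$, $\theta(xy)=p+g_H(xy)$ on $E(H)$; this is exactly the dual of the 6C-labelling of $H$ supplied by Theorem~\ref{thm:set-ordered-vs-6C-labelling}, hence again a 6C-labelling (with the two sides of the bipartition interchanged), with $\theta(V(H))=[1,p]$, $\theta(E(H))=[p+1,2p-1]$, and $\theta(xy)-|\theta(x)-\theta(y)|=p$; its singularity is again the vertex with $\theta$-value $p$, the $X'$-vertex with $g_H$-label $0$. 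The decisive observation is that $\theta(V(T))\cap\theta(V(H))=\{p\}$, $\theta(V(T))\setminus\{p\}=\theta(E(H))$ and $\theta(E(T))=\theta(V(H))\setminus\{p\}$, so $(\theta|_T,\theta|_H)$ is a $6C$-complementary pair in the sense of the paragraph preceding Corollary~\ref{thm:reciprocal-inverse-in-one}. Identifying the two singularity vertices then yields $G=\odot_1\langle T,H\rangle$ together with a well-defined total labelling $\theta$ of $G$, the two formulas agreeing on the shared vertex (both give $p$).

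It remains to verify that $\theta$ is a 6C-labelling of $G$. The two displayed identities are immediate: for $uv\in E(T)$, $\theta(uv)+|\theta(u)-\theta(v)|=(p-g_T(uv))+g_T(uv)=p$; for $xy\in E(H)$, $\theta(xy)-|\theta(x)-\theta(y)|=(p+g_H(xy))-g_H(xy)=p$. The remaining clauses of Definition~\ref{defn:6C-labelling} are then inherited from the two pages: $\theta(E(G))=[1,2p-1]\setminus\{p\}\subseteq[1,2p-1]=\theta(V(G))$ is the EV-ordered clause in its containment form; the ee-difference and ee-balanced clauses hold on each page since a set-ordered graceful tree of $p$ vertices realises every difference in $[1,p-1]$ among its edges; and the ve-matching clause holds with the single constant $2p$ and with the shared vertex (label $p$) as the unique global singularity, because the involution $\ell\mapsto 2p-\ell$ pairs the edge labels with the vertex labels on the $T$-page exactly as it does on the $H$-page. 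I would run these checks in the same bookkeeping style as the proof of Theorem~\ref{thm:set-ordered-vs-6C-labelling}.

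The main obstacle is the gluing itself: one must line up the two 6C-labellings so that they meet along a single vertex and so that the various magic constants and the (necessarily unique) singularity coincide rather than clash. This is precisely what forces $H$ to be given the dual/reciprocal-inverse labelling — two independently chosen 6C-labellings would in general share too many labels and carry incompatible singularities and ve-matching constants — and it is why the statement is formulated for $\odot_1\langle T,H\rangle$. A secondary point to watch is that $\theta$ is no longer injective on $V(G)\cup E(G)$ (each value in $[1,p-1]\cup[p+1,2p-1]$ occurs once on $E(G)$ and once on $V(G)$), so one works with the relaxed ``odd--even separable'' reading of the 6C conditions, as in Corollary~\ref{thm:odd-even-separated-labelling} and Fig.~\ref{fig:odd-even-separable}, checking injectivity on $V(G)$ and on $E(G)$ separately.
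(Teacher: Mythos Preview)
Your proposal is correct and matches the paper's intended approach: the paper gives no detailed proof of this corollary, stating only that it is ``similar with that in the proof of Theorem~\ref{thm:set-ordered-vs-6C-labelling}'', and its discussion of the example $G=\odot_1\langle T,G_3\rangle$ in Fig.~\ref{fig:odd-even-separable}(a) exhibits exactly the construction you describe --- the 6C-labelling of Theorem~\ref{thm:set-ordered-vs-6C-labelling} on the $T$-page and its reciprocal-inverse (dual) on the $H$-page, glued at the common singularity $p$. Your explicit observation that the resulting $\theta$ is not a bijection on $V(G)\cup E(G)$ and must be read under the containment form of clause~(iv) (as the paper itself does when it records $\theta(E(G))\subset\theta(V(G))$) is a point the paper glosses over; you are right to flag it, and the set-ordered clause~(vi) on the glued bipartition deserves the same caveat.
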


Similarly with Definitions \ref{defn:Oemt-labelling} and \ref{defn:relaxed-Oemt-labelling}, we can define a graceful-magic matching labelling (e.g. the edge-magic total labelling) and an ee-difference graceful-magic matching labelling $(f,g)$ (see Fig.\ref{fig:odd-graceful-2-magic}(c) and (d)).

\begin{defn}\label{defn:Dgemm-labelling}
$^*$ Suppose that a $(p,q)$-graph $G$ admits a vertex labelling $f:V(G)\rightarrow [0,p-1]$ and an edge labelling $g:E(G)\rightarrow [1,q]$, and let $s(uv)=|f(u)-f(v)|-g(uv)$ for $uv\in E(G)$. If

(i) each edge $uv$ corresponds an edge $u'v'$ such that $g(uv)=|f(u')-f(v')|$ (or $g(uv)=p-|f(u')-f(v')|$);

(ii) and there exists a constant $k''$ such that each edge $xy$ has a matching edge $x'y'$ holding $s(xy)+s(x'y')=k''$ true;

(iii) there exists a constant $k$ such that $|f(u)-f(v)|+f(uv)=k$ for each edge $uv\in E(G)$;

(iv) there exists a constant $k'$ such that each edge $uv$ matches with one vertex $w$ such that $f(uv)+f(w)=k'$, and each vertex $z$ matches with one edge $xy$ such that $f(z)+f(xy)=k'$, except the \emph{singularity} $f(x_0)=0$.

Then we call $(f,g)$ an \emph{ee-difference graceful-magic matching labelling} (Dgemm-labelling) of $G$ (called a \emph{Dgemm-graph}). (see examples shown in Fig.\ref{fig:odd-graceful-2-magic}(c) and (d))\qqed
\end{defn}

\begin{figure}[h]
\centering
\includegraphics[height=6.2cm]{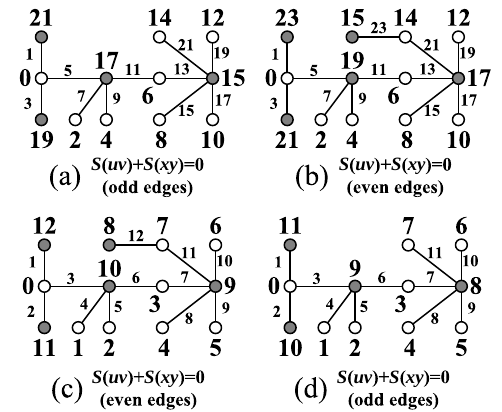}\\
\caption{\label{fig:odd-graceful-2-magic} {\small (a) An ee-difference odd-magic matching labelling; (b) an ee-difference odd-magic matching labelling; (c) an ee-difference graceful-magic matching labelling; (d) an ee-difference graceful-magic matching labelling.}}
\end{figure}

\subsection{Inverse matchings}

\begin{defn}\label{defn:edge-magic-graceful-labelling}
\cite{Marumuthu-G-2015} If there exists a constant $k\geq 0$, such that a $(p, q)$-graph $G$ admits a total labelling $f:V(G)\cup E(G)\rightarrow [1, p+q]$, each edge $uv\in E(G)$ holds
$$|f(u)+f(v)-f(uv)|=k$$ and $f(V(G)\cup E(G))=[1, p+q]$ true, we call $f$ an \emph{edge-magic graceful labelling} of $G$, and $k$ a \emph{magic constant}. Moreover, $f$ is called a \emph{super edge-magic graceful labelling} if $f(V(G))=[1, p]$.\qqed
\end{defn}

\begin{defn}\label{defn:ve-exchanged-labelling}
\cite{Yao-Mu-Sun-Zhang-Wang-Su-Ma-2018} A \emph{ve-exchanged matching labelling} $h$ of an edge-magic graceful labelling $f$ of a $(p, q)$-graph $G$ is defined as: $h:V(G)\cup E(G)\rightarrow [1, p+q]$, each edge $uv\in E(G)$ holds $h(uv)=|h(u)-h(v)|$ true (or $h(uv)+|h(u)-h(v)|=k$, or $h(uv)=h(u)+h(v)~(\bmod~q)$, or $|h(u)+h(v)-h(uv)|=k$, or $h(u)+h(uv)+h(v)=k$), such that $h(V(G)\cup E(G))=[1, p+q]$, $h(V(G))\setminus \{a_0\}=f(E(G))$ and $h(E(G))=f(V(G))\setminus \{a_0\}$, where $a_0=\lfloor \frac{p+q+1}{2}\rfloor $ is the singularity of two labellings $f$ and $h$. (see Fig.\ref{fig:some-examples}(b) and (c)).\qqed
\end{defn}

\begin{figure}[h]
\centering
\includegraphics[height=2.4cm]{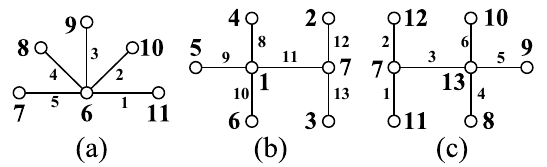}\\
\caption{\label{fig:some-examples} {\small (a) A relaxed Emt-graph: (b) an edge-magic graceful labelling $f$; (c) a ve-exchanged matching labelling of $f$ with the singularity 7.}}
\end{figure}

By Definition \ref{defn:ve-exchanged-labelling}, we propose the concept of ``\emph{reciprocal-inverse matching labelling}'': Suppose that a $(p,q)$-graph $G$ admits an edge-magic graceful labelling $f$, and a $(q,p)$-graph $H$ admits an edge-magic graceful labelling $g$. If
$$
f(E(G))=g(V(H))\setminus X^*,~f(V(G))\setminus X^*=g(E(H))
$$ for $X^*=f(V(G))\cap g(V(H))$, we say that $f$ and $g$ are \emph{reciprocal-inverse} (or \emph{reciprocal complementary}), moreover $H$ (or $G$) is an \emph{inverse matching} of $G$ (or $H$).

Observe Fig.\ref{fig:some-examples}(b) and (c), we have the \emph{total-magic matching labelling} $(f,f')$ of a $(p,q)$-graph $G$ defined as: Two total labellings $f:V(G)\cup E(G)\rightarrow [1,p+q]$ and  $f':V(G)\cup E(G)\rightarrow [1,p+q]$ such that $$f(x)+f'(x)=f(uv)+f'(uv)$$ for any vertex $x\in V(G)$ and edge $uv\in E(G)$. We can ask for $f$ (or $f'$) being an edge-magic total labelling, or an edge-magic graceful labelling, or other labellings defined on $V(G)\cup E(G)$ and $[1,p+q]$.

\subsection{Self-matchings}

For a partition $K_n=G\cup \overline{G}$ with the same vertex set $V(K_n)=V(G)=V(\overline{G})$ and edge-disjoint sets $E(G)\cap E(\overline{G})=\emptyset $, we say that $G$ and $\overline{G}$ are \emph{complementary} to each other, and we say $G$ is \emph{self-complementary} if $G$ is isomorphic to $\overline{G}$. So, we can consider this case as a \emph{self-matching}. Motivated from complete graph $K_n=G\cup \overline{G}$, we propose
\begin{defn}\label{defn:self-matchings}
$^*$ Let $W$ be a universal graph, and two graphs $G,H$ of $W$ hold $E(G)\cap E(H)=\emptyset $ and $V(W)=V(G)\cup V(H)$ true. If $E(W)=E(G)\cup E(H)$, we say $G$ and $H$ to be \emph{$W$-complementary} to each other, and moreover we call $G$ to be \emph{self-matching} (also, self-complementary) if $G$ is isomorphic to $H$, that is, $G\cong H$.\qqed
\end{defn}

For example, if $T$ is isomorphic to $T'$ in a universal graph $\odot\langle T,T'\rangle $, then we say $T$ a \emph{self-matching}.

If a connected $(p,q)$-graph $G$ admits an edge-magic total labelling $f$, then there exists a connected graph $H$ admitting a ve-matching labelling $g$ such that $f(E(G))=g(V(H))$ and $f(V(G))=g(E(H))$ for $G$ and $H$ being not trees. We prove this proposition as: We take $H$ as a copy of $G$, and define the \emph{dual labelling} of $f$ for $H$ as: $g(x)=\max f(V(G)\cup E(G))+\min f(V(G)\cup E(G))-f(x)$ for $x\in V(G)\cup E(G)$. So, $G$ is self-matching with $H$ under the ve-matching labelling.

If a connected $(p,q)$-graph $G$ admits an odd-graceful labelling $f:V(G)\rightarrow [0,2q-1]$, then there exists another connected graph $H$ admitting a pan-odd-graceful labelling $g:V(H)\rightarrow [1,2q]$ such that $\odot\langle G,H\rangle $ admits a \emph{twin odd-graceful labelling}. For showing this claim, we let $H\cong G$, and define $g(x)=f(x)+1$ for $x\in V(H)=V(G)$, and then identify the vertices of $G$ and $H$ with the same labels into one for obtaining $\odot\langle G,H\rangle $. Here, $G$ is a self-matching.

If a connected $(p,q)$-graph $G$ admits a 6C-labelling $f$, $G$ has its own reciprocal complementary $H$ admitting a 6C-labelling $g$, and $f$ and $g$ are pairwise reciprocal-inverse labellings, so $\odot\langle G,H\rangle $ is a \emph{self-matching} when $G\cong H$ (see for examples $T$ and $G_1$ shown in Fig.\ref{fig:ev-set-exchanged}).

\begin{cor}\label{thm:self-matching-6C-labelling}
If a tree $T$ admits a set-ordered graceful labelling, then we have a self-matching $\odot_1\langle T,T\rangle $ admitting a 6C-labelling.
\end{cor}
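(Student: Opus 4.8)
The plan is to read off $\odot_1\langle T,T\rangle$ as the special case $H=T$ of the 6C-complementary matching produced in Corollary \ref{thm:reciprocal-inverse-in-one}. Since $T$ admits a set-ordered graceful labelling, so does any copy $H$ of $T$; applying Corollary \ref{thm:reciprocal-inverse-in-one} to the pair $(T,H)$ yields a graph $\odot_1\langle T,H\rangle$ carrying a 6C-labelling $\theta$ with $\theta(uv)+|\theta(u)-\theta(v)|=p$ on $E(T)$ and $\theta(xy)-|\theta(x)-\theta(y)|=p$ on $E(H)$. As $H$ is a copy of $T$ we have $T\cong H$, so by Definition \ref{defn:self-matchings} the graph $\odot_1\langle T,H\rangle=\odot_1\langle T,T\rangle$ is a self-matching, and it carries the desired 6C-labelling. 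In this sense the whole statement is a one-line consequence of the preceding corollary together with the trivial isomorphism $T\cong H$; the rest of the proof is just unwinding why the construction behind Corollary \ref{thm:reciprocal-inverse-in-one} goes through.

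To unwind it, let $(X,Y)$ be the bipartition of $T$ with $X=\{x_i:i\in[1,s]\}$, $Y=\{y_j:j\in[1,t]\}$ and $s+t=p$, and let $g$ be a set-ordered graceful labelling, so $g(x_i)=i-1$, $g(y_j)=s+j-1$ and $g(x_iy_j)=s+j-i$. On the first copy of $T$ take the 6C-labelling $f_T$ from the ``if'' direction of Theorem \ref{thm:set-ordered-vs-6C-labelling}: $f_T(w)=p+g(w)$ on $V(T)$ and $f_T(x_iy_j)=p-g(x_iy_j)$ on $E(T)$, so that $f_T(V(T))=[p,2p-1]$, $f_T(E(T))=[1,p-1]$ with singularity $p$ attained at $x_1$. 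On the second copy $H$ take the \emph{reciprocal-inverse labelling} $g_H(w)=p-g(w)$ on $V(H)$ and $g_H(x_iy_j)=p+g(x_iy_j)$ on $E(H)$, so that $g_H(V(H))=[1,p]$, $g_H(E(H))=[p+1,2p-1]$, again with singularity $p$ at $x_1$. Then $f_T(E(T))=g_H(V(H))\setminus\{p\}$, $f_T(V(T))\setminus\{p\}=g_H(E(H))$ and $f_T(V(T))\cap g_H(V(H))=\{p\}$, so identifying the two singularity vertices produces $\odot_1\langle T,T\rangle$, and the map $\theta$ equal to $f_T$ on the first copy and $g_H$ on the second satisfies $\theta(E(G))\subset\theta(V(G))$ together with the two magic identities above, exactly the picture in Fig. \ref{fig:odd-even-separable}(a).

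What then remains is book-keeping of the kind carried out in the proof of Theorem \ref{thm:set-ordered-vs-6C-labelling}: one checks that $g_H$ obeys every clause of Definition \ref{defn:6C-labelling}. The e-magic clause is immediate, since $g_H(x_iy_j)-|g_H(x_i)-g_H(y_j)|=\big(p+g(x_iy_j)\big)-g(x_iy_j)=p$; the EV-ordered and set-ordered clauses are read off from $g_H(V(H))=[1,p]$ lying below $g_H(E(H))=[p+1,2p-1]$ and from $g_H(Y)$ lying below $g_H(X)$; and the ee-difference, ee-balance and ve-matching clauses come down, after substituting the formulas for $g_H$, to edge- and vertex-pairings of the same shape as those used for $f_T$ in Theorem \ref{thm:set-ordered-vs-6C-labelling}. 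The step I expect to need the most care is the compatibility at the glued vertex: one must verify that the ve-matching pairings inherited from the two copies dovetail across $x_1$, so that in $\odot_1\langle T,T\rangle$ each edge is still matched to exactly one vertex (and each vertex to exactly one edge) with a common constant sum, the only unmatched element being the shared singularity $p$. This is precisely the phenomenon recorded by Corollary \ref{thm:reciprocal-inverse-in-one} and Fig. \ref{fig:odd-even-separable}(a); once it is checked, the isomorphism $T\cong H$ delivers the self-matching and finishes the proof.
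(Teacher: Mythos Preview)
Your proposal is correct and follows essentially the same approach as the paper. The paper does not give an explicit proof of this corollary; its argument is the sentence immediately preceding it (take the reciprocal-inverse 6C-labelling on a copy $H$ of $G$, identify the singularities, and observe that $\odot_k\langle G,H\rangle$ is a self-matching when $G\cong H$, as in the pair $T,G_1$ of Fig.~\ref{fig:ev-set-exchanged}), which is exactly your reduction to Corollary~\ref{thm:reciprocal-inverse-in-one} with $H=T$.
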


\begin{cor}\label{thm:self-matching-twin-odd-graceful}
If a tree $T$ admits a set-ordered odd-graceful/odd-elegant labelling, then there exists a self-matching $\odot_1\langle T,T\rangle $ admitting a twin odd-graceful/odd-elegant labelling.
\end{cor}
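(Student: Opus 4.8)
\emph{Strategy.} The plan is to re-use, in a sharpened form, the construction sketched in the paragraph before Corollary~\ref{thm:self-matching-6C-labelling}: take a second copy $T'$ of $T$, translate every label up by one, and glue $T$ and $T'$ along the \emph{single} vertex at which the two label sets still meet. Concretely, if $f$ is the given set-ordered odd-graceful labelling of $T$ with bipartition $(X,Y)$ and $q=|E(T)|$, put $g(z)=f(z)+1$ for $z\in V(T')$. Since $T\cong T'$, the resulting glued graph will automatically be a self-matching in the sense of Definition~\ref{defn:self-matchings}; the real work is to show $|f(V(T))\cap g(V(T'))|=1$, so that the identification produces $\odot_1\langle T,T\rangle$ and not $\odot_k$ with larger $k$, and then to check the twin-labelling axioms.

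\emph{The singleton intersection.} First I would pin down the parity structure: every edge label of $f$ is odd and $2q$ is even, so along each edge $uv$ the integers $f(u),f(v)$ have opposite parity; as $T$ is connected, the two parity classes coincide with the two colour classes, hence (after renaming if necessary) $f(X)$ consists of even numbers and $f(Y)$ of odd numbers, and since $0\in f(V(T))$ the even class is the lower one, i.e.\ the one containing $X$. The set-ordered hypothesis $f_{\max}(X)<f_{\min}(Y)$ together with the edge carrying label $1$ (its endpoints get consecutive labels, the smaller in $X$, the larger in $Y$) forces $f_{\min}(Y)=f_{\max}(X)+1$. Now $g(X)=f(X)+1$ is a set of odd integers $\le f_{\max}(X)+1$ and $g(Y)=f(Y)+1$ a set of even integers $\ge f_{\max}(X)+3$, and a quick parity-and-range inspection of the four pairs $f(X)\cap g(X)$, $f(X)\cap g(Y)$, $f(Y)\cap g(X)$, $f(Y)\cap g(Y)$ shows that only the value $f_{\max}(X)+1$ survives: it lies in $f(Y)$ (it is $f_{\min}(Y)$) and in $g(X)$ (it is $f_{\max}(X)+1$). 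Identifying the vertex of $Y(T)$ labelled $f_{\min}(Y)$ with the vertex of $X(T')$ labelled $f_{\max}(X)+1$ yields $G=\odot_1\langle T,T'\rangle=\odot_1\langle T,T\rangle$ carrying the well-defined labelling $h$ that restricts to $f$ on $T$ and to $g$ on $T'$.

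\emph{Verifying the axioms.} It remains to check Definition~\ref{defn:odd-elegant-labelling}. Condition (i), that $h$ is odd-graceful on $G_1=T$, is the hypothesis. For (ii), each edge $x'y'$ of $T'$ has $|h(x')-h(y')|=|(f(x)+1)-(f(y)+1)|=|f(x)-f(y)|$, so $h(E(G_2))=f(E(T))=[1,2q-1]^o$. For (iii), we have arranged $|h(V(G_1))\cap h(V(G_2))|=1=k$, and $h(V(G_1))\cup h(V(G_2))=f(V(T))\cup g(V(T'))\subseteq[0,2q]$; the possible appearance of $2q$ (when $2q-1\in f(V(T))$) is precisely why the partner is recorded as pan-odd-graceful, in line with the remark preceding Corollary~\ref{thm:self-matching-6C-labelling}. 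Hence $h$ is a twin odd-graceful labelling of $\odot_1\langle T,T\rangle$.

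\emph{The odd-elegant case, and the main obstacle.} The scheme is identical here, with the edge label of $x'y'$ in $T'$ now equal to $h(x')+h(y')=f(x)+f(y)+2\pmod{2q}$; since $m\mapsto m+2\pmod{2q}$ cyclically permutes $[1,2q-1]^o$, we again get $h(E(G_2))=[1,2q-1]^o$, and the same parity remark (an odd residue modulo the even number $2q$ still forces $f(u)+f(v)$ odd) supplies the parity split of $f(X),f(Y)$ that drives the singleton computation; one then checks Definition~\ref{defn:twin-odd-elegant-graph}. The step I expect to cost the most care — the main obstacle — is making this clean structure available for an \emph{arbitrary} given set-ordered odd-elegant labelling, since, unlike the odd-graceful case, here $0$ need not be a label and the edge of label $1$ may read $f(u)+f(v)=2q+1$ rather than $1$, so the identity $f_{\min}(Y)=f_{\max}(X)+1$ is no longer immediate; and there is the boundary nuisance that for the star $K_{1,t}$ the translate $g$ attains $2q$, so the partner is pan-odd-elegant rather than strictly odd-elegant. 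I would handle this either by first replacing $f$ with the canonical set-ordered form (as in the proof of Theorem~\ref{thm:e-set-v-proper-labellings}), or by deriving $f_{\min}(Y)=f_{\max}(X)+1$ directly from the odd-elegant edge equations, and by stating the conclusion for non-star trees (or in the pan- version) whenever the range $[0,2q]$ is needed. Modulo this bookkeeping the argument runs exactly as in the odd-graceful case.
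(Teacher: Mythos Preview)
Your approach and the paper's share the same core idea: take a second copy of $T$, shift every vertex label by~$+1$, and identify the single overlapping label. The difference is in the starting point. The paper's proof actually begins from a set-ordered \emph{graceful} labelling $f$ (not matching the corollary's own hypothesis), builds from it the canonical set-ordered odd-graceful labelling $g(x'_i)=2(i-1)$, $g(y'_j)=2(s+j)-3$, and only then shifts by~$+1$; with this explicit form the intersection $\{2s-1\}$ and the union $[0,2p-2]$ fall out by direct inspection. You instead work from an \emph{arbitrary} set-ordered odd-graceful labelling as the corollary states, and supply the missing structural facts yourself: the parity split of $f(X)$ and $f(Y)$, the presence of $0$ forcing $f(X)$ to be the even class, and the edge of label~$1$ forcing $f_{\min}(Y)=f_{\max}(X)+1$. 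Your four-case intersection check then gives the singleton cleanly. This is a genuine (if modest) strengthening: it shows the construction works for any set-ordered odd-graceful labelling, not only the canonical one, and it is more faithful to the stated hypothesis. For the odd-elegant half, both the paper (``very similar, take $\bmod~2p-2$'') and you leave the details as bookkeeping; your remarks about the possible failure of $f_{\min}(Y)=f_{\max}(X)+1$ and the pan- caveat at the boundary are honest and point exactly where care is needed.
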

\begin{proof} By the hypothesis of the corollary, a tree $T$ has its own vertex bipartition $(X,Y)$ with $X=\{x_{i}:i\in [1,s]\}$ and $Y=\{y_{j}:j\in [1,t]\}$ with vertex number $|V(T)|=p=s+t$ and edge number $|E(T)|=p-1$. Since $T$ admits a set-ordered graceful labelling $f$, so we get $f(x_i)=i-1$ for $i\in [1,s]$ and $f(y_j)=s+j-1$ for $j\in [1,t]$, and $f(x_iy_j)=f(y_j)-f(x_i)=s+j-i$ for each edge $x_iy_j\in E(T)$. Clearly, $f(X)<f(Y)$.

(1) We define a labelling $g$ of a copy $T'$ of $T$ with $(X',Y')=(X,Y)$ as: $g(x'_i)=2f(x_i)=2(i-1)$ for $i\in [1,s]$ and $g(y'_j)=2f(y_j)-1=2(s+j)-3$ for $j\in [1,t]$, immediately,

$${
\begin{split}
g(x'_iy'_j)&=|g(y'_j)-g(x'_i)|\\
&=|2(s+j)-3-2(i-1)|\\
&=2s+2(j-i)-1\\
&=2(s+j-i)-1\\
&=2f(x_iy_j)-1.
\end{split}}$$

So, $g$ is an odd-graceful labelling of $T'$, since $g(X')=\{0,2,\dots, 2(s-1)\}$ is an even-set, $g(Y')=\{2s-1,2s+1,\dots, 2p-3\}$ is an odd-set, and $g(E(T))=[1,2p-3]^o$ is an odd-set too. Now, we take another copy $T''$ of $T$ with $(X'',Y'')=(X,Y)$, and make a \emph{complementary labelling} $g''$ of the odd-graceful labelling $g'$ by setting $g''(w)=g'(w)+1$ for $w\in V(T)$, clearly, $g''(E(T))=g'(E(T))$. Moreover, $g''(X)=[1, 2s-1]^o$, $g''(Y)=\{2s,2s+2,\dots, 2p-2\}$, we can see
$$g''(V(T))\cap g'(V(T))=\{2s-1\}$$
and
$$g''(V(T))\cup g'(V(T))=[0,2p-2].$$ $T''$ is the \emph{complementary matching} of $T'$. Thereby, $\odot_1\langle T',T''\rangle $ admits a twin odd-graceful labelling and it is a self-matching.

(2) The proof of $\odot_1\langle T',T''\rangle $ admitting a twin odd-elegant labelling is very similar to that of the above (1), here, it takes ``$\bmod~2p-2$''.

The proof of this corollary is finished.
\end{proof}

\subsection{Set-ordered matchings}

Suppose that a $(p,q)$-graph $G$ admits an $\varepsilon$-labelling $f: V(G)\rightarrow [0,p-1]$~(or $[1,p+q]$, or $[0,2q-1]$, or $[1,2q]$), $G$ is bipartite with its own bipartition $(X,Y)$. The symbol $f_{\max}(X)<f_{\min}(Y)$ is defined by $$\max\{f(x):~x\in X\}< \min\{f(y):~y\in Y\}$$ and we call $f$ a \emph{set-ordered $\varepsilon$-labelling} of $G$. As known, many set-ordered $\varepsilon$-labellings have good properties, and have been connected with other labellins equivalently (\cite{SH-ZXH-YB-2017-2, SH-ZXH-YB-2017-3,SH-ZXH-YB-2017-4, SH-ZXH-YB-2018-5, Yao-Liu-Yao-2017}). However, determining a graph whether admits a set-ordered $\varepsilon$-labelling seems to be not easy.

\begin{thm}\label{thm:rational-set-ordered}
Suppose that a bipartite $(p,q)$-graph $G$ admits an $\varepsilon$-labelling, then there exists another bipartite $(p,q)$-graph $H$ such that a bipartite graph $G\ominus H$ obtained by using an edge to join a vertex of $G$ with a vertex of $H$ admits a set-ordered $\varepsilon$-labelling.
\end{thm}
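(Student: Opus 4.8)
The plan is to take $H$ to be a copy of $G$ in which the two colour classes are interchanged, to form $G\ominus H$ by a single new edge joining a vertex of one class of $G$ to its copy in $H$, and then to build the required labelling of $G\ominus H$ by lifting one colour class of each of the two factors by a fixed amount depending on $\varepsilon$. I carry this out for $\varepsilon$ graceful; the other types of $\varepsilon$-labelling are to be handled by the same device, with the size of the lift and the position of the glueing edge adjusted to the edge-rule in question.

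Concretely, let $(X,Y)$ be the bipartition of $G$ and let $f\colon V(G)\to[0,q]$ be a graceful labelling, so $f(E(G))=[1,q]$. Write $v\mapsto v'$ for the copy map onto $V(H)$, and regard $H$ as bipartite with first class $\{y'\colon y\in Y\}$ and second class $\{x'\colon x\in X\}$. Fix $y_0\in Y$ and let $G\ominus H$ be obtained by adding the edge $y_0y_0'$; it is bipartite with classes $\mathcal X=X\cup\{y'\colon y\in Y\}$ and $\mathcal Y=Y\cup\{x'\colon x\in X\}$. Define $F$ by $F(x)=f(x)$, $F(x')=f(x)+q+1$ for $x\in X$, and $F(y)=f(y)+q+1$, $F(y')=f(y)$ for $y\in Y$. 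Since $f$ is injective on $V(G)$, $F$ is injective on $V(G\ominus H)$; moreover $F(\mathcal X)=f(V(G))\subseteq[0,q]$ and $F(\mathcal Y)=f(V(G))+q+1\subseteq[q+1,2q+1]$, so $F_{\max}(\mathcal X)\le q<q+1\le F_{\min}(\mathcal Y)$, which is exactly the set-ordered condition.

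It remains to see that the $2q+1$ edge labels of $G\ominus H$ are precisely $[1,2q+1]$. For an edge $uv$ of $G$ with $u\in X$, $v\in Y$, put $\delta(uv)=f(v)-f(u)$; a one-line computation then gives $|F(u)-F(v)|=q+1+\delta(uv)$, $|F(u')-F(v')|=q+1-\delta(uv)$, and $|F(y_0)-F(y_0')|=q+1$. Since $f$ is graceful the numbers $\delta(e)$ are pairwise distinct with $\{|\delta(e)|\colon e\in E(G)\}=[1,q]$, so for each $k\in[1,q]$ exactly one of $k,-k$ occurs among them; hence $\{\delta(e)\}\cup\{-\delta(e)\}\cup\{0\}=[-q,q]$, and the edge-label set of $G\ominus H$ is $(q+1)+[-q,q]=[1,2q+1]$, one value for each of the $2q+1$ edges. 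Thus $F$ is a set-ordered graceful labelling of $G\ominus H$.

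The delicate point is a uniform treatment of the glueing edge for the parity-restricted types (odd-graceful, odd-elegant): there the natural glueing edge $y_0y_0'$ realizes the central value $q+1$, which is forced to the wrong parity---indeed, when $G$ is connected the parity structure of $f$ makes every edge joining $G$ to $H$ connect two vertices whose labels have the same parity---so the clean mirror copy must be replaced by a slightly asymmetric variant (gluing non-corresponding vertices, perturbing one extremal label of $H$, or giving $H$ a labelling on a shifted value-range) chosen so that the single edge label left uncovered by $G$ and its mirror falls in range with the correct parity. Verifying that such a variant always exists, and that the vertex-label blocks remain pairwise disjoint in each case, is the only real obstacle; everything else is the bookkeeping above.
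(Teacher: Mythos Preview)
Your approach is essentially the paper's: take $H$ to be a copy of $G$ with the colour classes swapped, glue a vertex $u$ to its copy $u'$ by a single edge, and shift the labels on one bipartition class by a constant to force $F_{\max}(\mathcal X)<F_{\min}(\mathcal Y)$. The paper uses a shift of $p$ rather than your $q+1$ and simply asserts set-orderedness without ever checking the resulting edge-label set or mentioning the parity obstruction; your graceful verification is therefore strictly more complete than the paper's own argument, and the difficulty you flag for the odd-graceful/odd-elegant types is genuine and left entirely unaddressed there.
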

\begin{proof} Let $(X,Y)$ be the bipartition of vertices of $G$ such that each edge $uv\in E(G)$ satisfies $u\in X$ and $v\in Y$. We take a copy of $G$, denoted as $H$ with its bipartition $(X',Y')=(X,Y)$. Suppose that $G$ admits an $\varepsilon$-labelling $f$, so $H$ admits an $\varepsilon$-labelling $f'$ which is a copy of $f$. Now, we use an edge to join any vertex $u$ of $G$ with its isomorphic vertex $u'$ of $H$ for producing the desired graph $G\ominus H$. Clearly, $G\ominus H$ is a bipartite $(2p,2q+1)$-graph with bipartition $(X\cup Y',X'\cup Y)$. We define a labelling $g$ as: $g(x)=f(x)$ for $x\in X$, $g(y')=f(y')$ for $y'\in Y'$, $g(w)=f(w)+p$ for $w\in X'$, and $g(z)=f(z)+p$ for $z\in Y$. Obviously,
$$g_{\max}(X\cup Y')<g_{\min}(X'\cup Y),$$ so $g$ is a set-ordered $\varepsilon$-labelling of $G\ominus H$.
\end{proof}

If the graph $G\ominus H$ based on two disjoint graphs $G$ and $H$ admits a set-ordered $\varepsilon$-labelling, we say $G$ to be a \emph{set-ordered matching} of $H$, and vice versa. It may be interesting to look for $G\not\cong H$ in the set-ordered matching $G\ominus H$.

\subsection{Labellings with extremal conditions}

\begin{defn}\label{defn:difference-sum-labelling}
$^*$ Let $f:V(G)\rightarrow [0,q]$ be a vertex labelling of a $(p,q)$-graph $G$, and let
$$S_{um}(G,f)=\sum_{uv\in E(G)}|f(u)-f(v)|,$$ we call $f$ a \emph{difference-sum labelling}. Find two extremum $\max_f S_{um}(G,f)$ (profit) and $\min_f S_{um}(G,f)$ (cost) over all difference-sum labellings of $G$.\qqed
\end{defn}

A tree $T$ shown in Fig.\ref{fig:extremum-1} has $\max_f S_{um}(T,f)=S_{um}(T,h_1)=84$ and $\min_f S_{um}(T,f)=S_{um}(T,h_4)=20$. We will show some properties of difference-sum labellings of graphs with necessary proofs.

\begin{asparaenum}[(\textbf{Extr}-1) ]
\item Each complete graph $K_n$ holds
$${
\begin{split}
&\quad \min_g S_{um}(K_n,g)=\max_f S_{um}(K_n,f)\\
&=\sum^{n-1}_{i=1}\sum^{n-i}_{k=1}k=\frac{1}{2}\sum^{n-1}_{i=1}(n-i+1)(n-i).
\end{split}}$$

\item Let $G$ be a caterpillar, then

\quad (a) Adding a leaf to $G$ produces another caterpillar $G+e$, we have
$${
\begin{split}
\min_f S_{um}(G,f)&\leq \min_gS_{um}(G+e,g)\\
&\leq \min_f S_{um}(G,f)+3.
\end{split}}$$

\quad (b) Suppose $H$ is another caterpillar with $|V(G)|=|V(H)|$. If two diameters $D(H)>D(G)$, then
$$\min_g S_{um}(H,g)\leq \min_f S_{um}(G,f).$$

\item If a disconnected graph $G$ has its components $G_1,\dots ,G_m$, then $$\max_fS_{um}(G,f)=\sum^m_{i=1}\max_gS_{um}(G_i,g),$$
 $$\min_fS_{um}(G,f)=\sum^m_{i=1}\min_gS_{um}(G_i,g).$$
\item Adding an edge $uv$ to join two non-adjacent vertices $u,v$ of $G$ produces a new graph $G+uv$, then $\max_f S_{um}(G,f)\leq \max_h S_{um}(G+uv,h)$.
\item If $f$ and $f^*$ are dual difference-sum labellings to each other, then $S_{um}(G,f)=S_{um}(G,f^*)$.

\item For a tree $T$ of $p$ vertices, a \emph{path} $P_{p}$ of $p$ vertices and a \emph{star} $K_{1,p-1}$, we have
\begin{equation}\label{eqa:c3xxxxx}
{
\begin{split}
p-1=&\min_f S_{um}(P_{p},f)\leq \min_f S_{um}(T,f)\\
&\leq \min_f S_{um}(K_{1,p-1},f)=\frac{p^2}{4},\frac{p^2-1}{4}.
\end{split}}
\end{equation}

\item A difference-sum labelling $h$ of a tree $T$ holds $\max_f S_{um}(T,f)=S_{um}(T,h)$ true if and only if $ h_{\max}(X)< h_{\min}(Y)$ with the partition $(X,Y)$ of $T$.

\begin{proof} Let $V(T)=X\cup Y$ with $X\cap Y=\emptyset$, where $X=\{x_1,x_2,\dots ,x_a\}$, $Y=\{y_1,y_2,\dots ,y_b\}$. Suppose $G$ admits a vertex labelling $f:V(T)\rightarrow [0,p-1]$ such that
\begin{equation}\label{eqa:set-ordered}
{
\begin{split}
0=&f(x_1)<f(x_2)<\cdots f(x_i)<f(x_{i+1})<\cdots \\
&<f(x_a)<f(y_1)<f(y_2)<\cdots f(y_j)<\\
&f(y_{j+1})<\cdots <f(y_b)=p-1
\end{split}}
\end{equation}
We exchange two labels $f(x_i)$ and $f(y_j)$ for some $i\neq j$. In other word, we define another vertex labelling $g:V(T)\rightarrow [0,p-1]$ such that $g(x_i)=f(y_j)$, $g(y_j)=f(x_i)$ and $g(w)=f(w)$ for $w\in V(T)\setminus \{x_i,y_j\}$. Clearly, there is no $ g_{\max}(X)< g_{\min}(Y)$. We claim

\begin{equation}\label{eqa:assertion}
S_{um}(T,f)>S_{um}(T,g).
\end{equation}

Let the set of neighbors of the vertex $x_i$ is denoted as $N(x_i)=\{y_{i_1},y_{i_2},\dots y_{i_m}\}$, the set of neighbors of the vertex $y_j$ is written as $N(y_j)=\{x_{j_1},x_{j_2},\dots x_{j_n}\}$. We compute
\begin{equation}\label{eqa:compute1}
{
\begin{split}
&|g(x_i)-g(y_{i_s})|=|f(y_j)-f(y_{i_s})|\leq f(y_{i_s})-\\
&f(y_1)<f(y_{i_s})-f(x_a)\leq f(y_{i_s})-f(x_i),
\end{split}}
\end{equation}
and
\begin{equation}\label{eqa:compute2}
{
\begin{split}
&|g(y_j)-g(x_{j_t})|=|f(x_i)-f(x_{j_t})|\leq f(x_a)-\\
&f(x_{j_t})<f(y_{1})-f(x_{j_t})\leq f(y_{1})-f(x_{j_t}).
\end{split}}
\end{equation}
Thereby, our assertion (\ref{eqa:assertion}) holds true.
\end{proof}

\item \textbf{Max-min-sum Algorithm} for computing $\max_f S_{um}(T,f)$.

\quad \textbf{Initiation.} Take a labelling $f_0:V(T)\rightarrow [0,p-1]$.

\quad \textbf{Iteration.} For an optimal labelling $f_k:V(T)\rightarrow [0,p-1]$, find a pair of vertices $x,y$, and check whether
\begin{equation}\label{eqa:max-min-1}
\sum_{x_i\in N(x)}|f_k(y)-f(x_i)|\geq (\leq)\sum_{x_i\in N(x)}|f_k(x)-f(x_i)|
\end{equation}
and
\begin{equation}\label{eqa:max-min-2}
\sum_{y_j\in N(y)}|f_k(x)-f(y_j)|\geq (\leq)\sum_{y_j\in N(y)}|f_k(y)-f(y_j)|.
\end{equation}
If it is so, we define a new labelling $f_{k+1}$ as: $f_{k+1}(x):=f_k(y)$, $f_{k+1}(y):=f_k(x)$, and $f_{k+1}(w):=f_k(w)$ for $w\in V(T)\setminus \{x,y\}$.

\quad \textbf{Termination.} If no two vertices $x,y$ hold the forms (\ref{eqa:max-min-1}) and (\ref{eqa:max-min-2}) true, output the labelling $f_k$ with $S_{um}(T,f_k)=\max_fS_{um}(T,f)$.

\quad Similarly, we can use $(\leq )$ in (\ref{eqa:max-min-1}) and (\ref{eqa:max-min-2}) to deal with the case $\min_fS_{um}(T,f)$.

\item If $T$ is a caterpillar, then we can compute the exact value of $\min_f S_{um}(T,f)$.
\begin{proof} We show an algorithmic proof here. A caterpillar $T$ shown in Fig.\ref{fig:caterpillar-general} contains a path $P=u_1u_2\cdots u_n$, and each set of leaves $v_{i,j}$ adjacent to a vertex $u_i$ is denoted as $L(u_i)=\{v_{i,j}:j\in [1,m_i]\}$ with $m_i\geq 0$ and $i\in [1,n]$. So, each vertex $u_i$ has its own degree $\textrm{deg}(u_1)=|L(u_1)|+1=m_1+1$, $\textrm{deg}(u_j)=|L(u_j)|+2=m_j+2$ with $j\in [2,n-1]$, $\textrm{deg}(u_n)=|L(u_n)|+1=m_n+1$. We define a labelling $f$ of $T$ as follows. Let $N(x)$ be the set of neighbors of a vertex $x$.

\quad \textbf{Step 1}. For the vertices of $N(u_1)=L(u_1)\cup \{u_2\}$, we set $f(v_{1,j})=j-1$ with $j\in [1,m_1]$, $f(u_1)=m_1$, $f(u_2)=m_1+1$. Compute the sum
\begin{equation}\label{eqa:sum-1}
{
\begin{split}
&\quad \sum ^{m_1}_{j=1}|f(v_{1,j})-f(u_1)|=\sum ^{m_1}_{j=1}[m_1-f(v_{1,j})]\\
&=1+2+\cdots +m_1=\frac{1}{2}(m_1+1)m_1,
\end{split}}
\end{equation}
and $|f(u_1)-f(u_2)|=m_1+1-m_1=1$.

\quad \textbf{Step 2}. Notices that $u_i\in N(u_{i-1})$ with $i\in [2,n]$, so $f(u_i)$ has been defined well. For the vertices of $N(u_i)=L(u_i)\cup \{u_{i-1},u_{i+1}\}$ with $i\in [2,n-1]$, we set $f(v_{i,j})=f(u_i)+j$ with $j\in [1,m_i]$, $f(u_{i+1})=f(u_i)+m_i+1$. Thereby, we have
\begin{equation}\label{eqa:sum-2}
{
\begin{split}
&\quad \sum ^{m_i}_{j=1}|f(v_{i,j})-f(u_i)|=\sum ^{m_i}_{j=1}[f(u_i)+j-f(u_i)]\\
&=1+2+\cdots +m_i=\frac{1}{2}(m_i+1)m_i,
\end{split}}
\end{equation}
and
$$|f(u_i)-f(u_{i+1})|=f(u_i)+m_i+1-f(u_i)=m_i+1.$$

\quad \textbf{Step 3}. For the vertices of $N(u_n)=L(u_n)\cup \{u_{n-1}\}$, we set $f(v_{n,j})=f(u_n)+j$ with $j\in [1,m_n]$. We get
\begin{equation}\label{eqa:sum-3}
{
\begin{split}
&\quad \sum ^{m_n}_{j=1}|f(v_{n,j})-f(u_n)|=\sum ^{m_n}_{j=1}[f(u_n)+j-f(u_n)]\\
&=1+2+\cdots +m_n=\frac{1}{2}(m_n+1)m_n.
\end{split}}
\end{equation}
Therefore, we summarize the above sub-sums as
\begin{equation}\label{eqa:sum-4}
{
\begin{split}
S_{um}(T,f)&=\frac{1}{2}(m_1+1)m_1+1+\frac{1}{2}(m_n+1)m_n\\
&\quad \sum^{n-1}_{k=2}\frac{1}{2}(m_k+1)m_k+\sum^{n}_{k=2}(m_k+1)\\
&=n-m_1+\frac{1}{2}\sum^{n}_{k=1}m_k(m_k+3)
\end{split}}
\end{equation}

We, now, optimize the sum $S_{um}(T,f)$. According to the definition of the labelling $f$, $|f(u_{i})-f(u_{i+1})|=f(u_{i+1})-f(u_{i})=m_{i}+1$, so $|f(u_{i})-f(u_{i-1})|=f(u_{i})-f(u_{i-1})=m_{i-1}+1$. We select a vertex $v_{i,j}$ for some $j\in [1,m_i]$, and define a new labelling $g$ as: $g(u_i)=f(v_{i,j})=f(u_i)+j$, $g(v_{i,j})=f(u_i)$, and $g(w)=f(w)$ for $w\in V(T)\setminus \{v_{i,j},u_i\}$. Now, we inspect the sum $S_{um}(T,g)$. From

$${
\begin{split}
&\quad |g(u_{i})-g(u_{i-1})|=f(v_{i,j})-f(u_{i-1})\\
&=f(u_i)+j-f(u_{i-1})\\
&=m_{i-1}+1+j
\end{split}}
$$
and
$${
\begin{split}
&\quad |g(u_{i})-g(u_{i+1})|=f(u_{i+1})-f(u_i)\\
&=f(u_{i+1})-f(v_{i,j})\\
&=f(u_i)+1+m_i-[f(u_i)+j]\\
&=m_i+1-j.
\end{split}}
$$

we have
$$
{
\begin{split}
&\quad |f(u_{i})-f(u_{i-1})|+|f(u_{i})-f(u_{i+1})|\\
&=|g(u_{i})-g(u_{i-1})|+|g(u_{i})-g(u_{i+1})|.
\end{split}}$$
and
\begin{equation}\label{eqa:new-vs-old}
{
\begin{split}
&\quad \sum ^{m_i}_{k=1}|g(u_{i})-g(v_{i,k})|\\
&=|f(v_{i,j})-f(u_{i})|+\sum ^{m_i}_{k=1, k\neq j}|f(v_{i,j})-f(v_{i,k})|\\
&<\sum ^{m_i}_{k=1}|f(u_{i})-f(v_{i,k})|
\end{split}}
\end{equation}
We select $a_{i,0}=\lfloor \frac{m_i+1}{2}\rfloor $ with $i\in [1,n]$, thus, each sum
$${
\begin{split}
\sum ^{m_i}_{k=1}|g(u_{i})-g(v_{i,k})|&=\sum ^{m_i}_{k=1, k\neq a_{i,0}}|f(v_{i,a_{i,0}})-f(v_{i,k})|\\
&=2(1+2+\cdots +a_{i,0})\\
&=a_{i,0}(1+a_{i,0})
\end{split}}$$
Clearly, $$a_{i,0}(1+a_{i,0})<\sum ^{m_i}_{j=1}|f(v_{i,j})-f(u_i)|$$ from in (\ref{eqa:sum-2}). Furthermore, using (\ref{eqa:max-min-1}) and (\ref{eqa:max-min-2}) check the sum $S_{um}(T,g)$, we can see $S_{um}(T,g)=\min_fS_{um}(T,f)$ holds true. This algorithm is correct and has the complex of polynomial time.
\end{proof}
\end{asparaenum}

\begin{figure}[h]
\centering
\includegraphics[height=5.2cm]{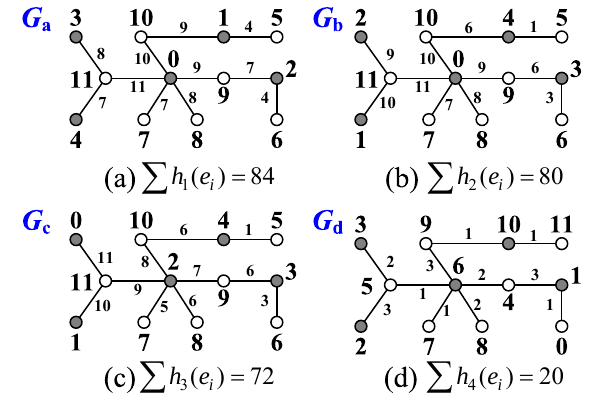}\\
\caption{\label{fig:extremum-1} {\small A lobster admits four difference-sum labellings with the maximum sum 84, sum 80, sum 72 and the minimum sum 20.}}
\end{figure}

When considering network passwords, we have theoretical guarantee for using Topsnut-gpws made by caterpillars. A spider with three legs of length 2, also, is called an aster, denoted as $A_{2,2,2}$. Or, the deletion of all leaves of $A_{2,2,2}$ results in $K_{1,3}$. If each spanning tree of a graph $G$ is a caterpillar, we call $G$ to be \emph{caterpillar-pure}. Jamison \emph{et al.} \cite{Jamison-McMorris-Mulder-2003} have shown: A connected graph is caterpillar-pure if and only if it does not contain any aster $A_{2,2,2}$ as a (not necessarily induced) subgraph.

We present a new extremal labelling, called felicitous-sum labelling (see an example in Fig.\ref{fig:extremum-felicitous}), as follow:

\begin{defn}\label{defn:felicitous-sum-labelling}
$^*$ Let $f:V(G)\rightarrow [0,q]$ be a labelling of a $(p,q)$-graph $G$, and let $$F_{um}(G,f)=\sum_{uv\in E(G)}[f(u)+f(v)]~(\bmod ~q+1),$$ we call $f$ a \emph{felicitous-sum labelling}. Find two extremum $\max_f F_{um}(G,f)$ and $\min_f F_{um}(G,f)$ over all felicitous-sum labellings of $G$. (See examples shown in Fig.\ref{fig:extremum-felicitous})\qqed
\end{defn}

\begin{figure}[h]
\centering
\includegraphics[height=5.2cm]{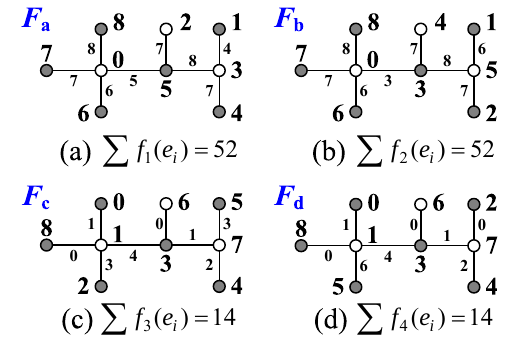}\\
\caption{\label{fig:extremum-felicitous} {\small As understanding Definition \ref{defn:felicitous-sum-labelling}, a tree admits four felicitous-sum labellings with the maximum sum 52, sum 52, the minimum sum 14 and the minimum sum 14.}}
\end{figure}

Observe Fig.\ref{fig:extremum-1} and Fig.\ref{fig:extremum-felicitous}, we can get two particular matching graphs $A=\odot_{12}\langle G_a,G_d\rangle $ with $G_a,G_d$ shown in Fig.\ref{fig:extremum-1} and $B=\odot_{9}\langle F_b,F_c\rangle $ with $F_b,F_c$ shown in Fig.\ref{fig:extremum-felicitous}. These two particular matching graphs enable us to define two new concepts in the following  Definition \ref{defn:difference-sum-matching} and Definition \ref{defn:felicitous-sum-matching}, respectively.

\begin{defn}\label{defn:difference-sum-matching}
$^*$ Suppose that $G_M$ and $G_m$ are two copies of a $(p,q)$-graph $G$, and $G_M$ admits a difference-sum labelling $f_M$ holding $S_{um}(G_M,f_M)=\max_f S_{um}(G,f)$ true, $G_m$ admits another difference-sum labelling $f_m$ holding $S_{um}(G_m,f_m)=\min_f S_{um}(G,f)$ true. The identifying graph $H=\odot\langle G_M,G_m\rangle $ is called a \emph{Max-min difference-sum matching partition}, moreover if $H=\odot_p\langle G_M,G_m\rangle $ holds $E(G_M)\cap E(G_m)=\emptyset$ true, we call $H$ a \emph{perfect Max-min difference-sum matching partition}. (See a perfect Max-min difference-sum matching partition shown in \ref{fig:one-conjecture}(a))\qqed
\end{defn}

\begin{defn}\label{defn:felicitous-sum-matching}
$^*$ Suppose that $H_M$ and $H_m$ are two copies of a $(p,q)$-graph $H$, and $H_M$ admits a felicitous-sum labelling $h_M$ holding $F_{um}(H_M,h_M)=\max_f F_{um}(G,f)$ true, $H_m$ admits another felicitous-sum labelling $h_m$ holding $F_{um}(H_m,h_m)=\min_f F_{um}(G,f)$ true. The identifying graph $G=\odot\langle H_M,H_m\rangle $ is called a \emph{Max-min felicitous-sum matching partition}, and furthermore we call $G$ a \emph{perfect Max-min felicitous-sum matching partition} if $G=\odot_p\langle H_M,H_m\rangle $ holds $E(H_M)\cap E(H_m)=\emptyset$ true. (see a perfect Max-min felicitous-sum matching partition shown in \ref{fig:one-conjecture}(b))\qqed
\end{defn}

We guess: \emph{Each tree  induces a perfect Max-min difference-sum matching partition and a perfect Max-min felicitous-sum matching partition}.

\begin{figure}[h]
\centering
\includegraphics[height=3.8cm]{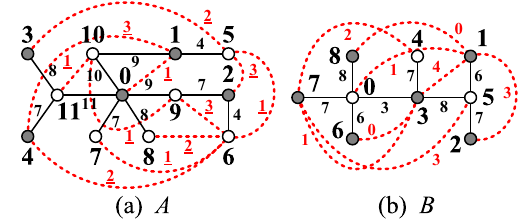}\\
\caption{\label{fig:one-conjecture} {\small (a) A perfect Max-min difference-sum matching partition $A=\odot_{12}\langle G_a,G_d\rangle $, where $G_a,G_d$ are shown in Fig.\ref{fig:extremum-1}; (b) a perfect Max-min felicitous-sum matching partition $B=\odot_{9}\langle F_b,F_c\rangle $, where $F_b,F_c$ are shown in Fig.\ref{fig:extremum-felicitous}.}}
\end{figure}

\section{Matchings from proper total colorings}

Graph coloring is an important branch of graph theory, since there are many unsolved problems and conjectures in various graph colorings. For example, the famous \emph{four color conjecture} on planar graphs was proved in 1976 by Kenneth Appel and Wolfgang Haken, but only after many false proofs and counterexamples. It was the first major theorem to be proved using a computer. A  simpler proof using the same ideas and still relying on computers was published in 1997 by Robertson, Sanders, Seymour, and Thomas. Additionally, in 2005, the theorem was proved by Georges Gonthier with general-purpose theorem-proving software (Ref. Wikipedia). Unfortunately, there is no mathematical proof of the four color theorem up to now. Another example is the \emph{total coloring conjecture} proposed by Behzad in 1965 and Vizing in 1964.

\subsection{Edge-magic and equitable proper total colorings}

Coloring each of vertices and edges of a graph $G$ with a number in $[1,k]$ makes no two adjacent vertices/edges or incident edge/vertex having the same label, we call this coloring a \emph{proper total coloring}, the minimum number of $k$ for which $G$ admits a proper total $k$-colorings is denoted as $\chi ''(G)$. Let $f:V(G)\cup E(G)\rightarrow [1,\chi ''(G)]$ be a proper total coloring of $G$ with $\chi ''(G)=|\{f(x):x\in V(G)\cup E(G)\}|$, and we call such coloring $f$ to be \emph{total chromatic number pure} (tcn-pure). Let $d_f(uv)=f(u)+f(uv)+f(v)$ for each edge $uv\in E(G)$, and set
$$B_{tol}(G,f)=\max_{uv \in E(G)}\{d_f(uv)\}-\min_{uv \in E(G)}\{d_f(uv)\}.$$
Determine a new parameter $\min_fB_{tol}(G,f)$ over all tcn-pure colorings of $G$, and call $\min_fB_{tol}(G,f)$ by the \emph{pan-bandwidth total chromatic number}. Especially, a coloring $h$ is called an \emph{edge-magic proper total coloring} if $B_{tol}(G,h)=0$, or an \emph{equitably proper total coloring} if $B_{tol}(G,h)=1$.

\vskip 0.4cm

In Fig.\ref{fig:new-total-coloring}, three complete bipartite graphs $K_{2,3}$, $K_{2,4}$ and $K_{3,3}$ admit six tcn-pure colorings: each $h_i$ is the \emph{dually total coloring} (also, \emph{matching total coloring}) of $f_i$ with $i\in [1,3]$. Moreover, $B_{tol}(K_{2,3},f_1)=2$, $B_{tol}(K_{2,4},f_2)=3$ and  $B_{tol}(K_{3,3},f_3)=4$; $B_{tol}(K_{2,3},h_1)=2$,  $B_{tol}(K_{2,4},h_2)=3$, $B_{tol}(K_{3,3},h_3)=3$.

\begin{figure}[h]
\centering
\includegraphics[height=5cm]{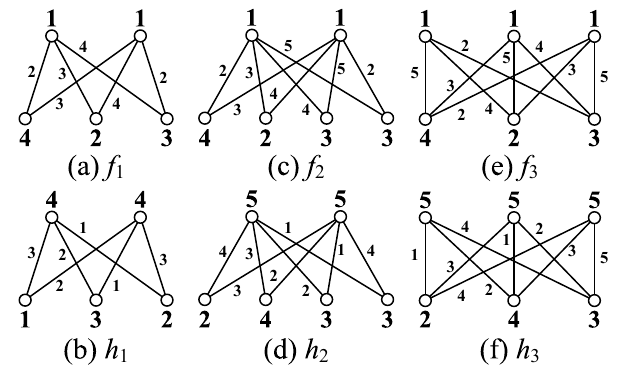}\\
\caption{\label{fig:new-total-coloring} {\small Six tcn-pure colorings of three complete bipartite graphs, in which $h_i$ is the matching coloring of the tcn-pure coloring $f_i$ with $i=1,2,3$.}}
\end{figure}

There are some obvious facts:
\begin{asparaenum}[Tot-1. ]
\item Any path $P_n$ of $n$ vertices admits an edge-magic proper total coloring, that is, $\min_fB_{tol}(P_n,f)=0$.

\item A cycle $C_n$ of $n$ vertices admits an edge-magic proper total coloring if $n\equiv 0~(\bmod~3)$, and admits an equitably total coloring, otherwise.

\item If $h$ is the \emph{matching proper total coloring} of a tcn-pure coloring $f$ defined as: $h(w)=[\max f(V(G)\cup E(G))+\min f(V(G)\cup E(G))]-f(w)$ for $w\in V(G)\cup E(G)$, then,
$${
\begin{split}
d_h(uv)&=h(u)+h(uv)+h(v)\\
&=3(\max f+\min f)-[f(u)+f(uv)+f(v)]\\
&=3(\max f+\min f)-d_f(uv),
\end{split}}$$ thus,
$${
\begin{split}
B_{tol}(G,h)&=\max_{uv \in E(G)}\{d_h(uv)\}-\min_{uv \in E(G)}\{d_h(uv)\}\\
&=\max_{uv \in E(G)}\{d_f(uv)\}-\min_{uv \in E(G)}\{d_f(uv)\}\\
&=B_{tol}(G,f).
\end{split}}$$
\end{asparaenum}

\vskip 0.4cm

It seems to be not easy to determine the exact value of the pan-bandwidth total chromatic number $\min_fB_{tol}(G,f)$ for a given simple graph $G$, since $\min_fB_{tol}(G,f)$ is related with the total chromatic number $\chi ''(G)$, which lets us recall the long standing total coloring conjecture: ``\emph{For any simple graph $G$, the elements of $V (G)\cup E(G)$ can be colored with at most $\Delta (G)+2$ colors so that no two adjacent or incident elements receive the same color, also $\chi ''(G)\leq \Delta (G)+2$.}'' Unfortunately, there are less significant results on the total labelling conjecture so far. The exact values of some special families of graphs have been obtained, such as $K_n$, complete bipartite graph $K_{m,n}$, complete $k$-partite graph $K_{m_1,m_2,\dots ,m_k}$, and join graph $G+H$ and so on. We, as exercise, verify a simple result.

\begin{lem} \label{thm:tree-total-number}
Let $T$ be a tree, then the total chromatic number $\chi\,''(T)=\Delta(T)+1$.
\end{lem}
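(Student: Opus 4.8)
The plan is to establish the two bounds $\chi''(T)\ge \Delta(T)+1$ and $\chi''(T)\le \Delta(T)+1$ separately. The lower bound is immediate: if $v$ is a vertex with $\ud_T(v)=\Delta(T)$, then $v$ together with the $\Delta(T)$ edges incident with it form a set of $\Delta(T)+1$ elements of $V(T)\cup E(T)$ that are pairwise adjacent or incident, so they must receive $\Delta(T)+1$ distinct colours in any proper total colouring. Throughout the upper bound I will tacitly assume $\Delta(T)\ge 2$: the one-vertex tree trivially has $\chi''=1=\Delta+1$, whereas $K_2$ is a genuine exception with $\chi''(K_2)=3=\Delta+2$ and must be excluded from the statement.

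For the upper bound I would use induction on $p=|V(T)|$, verifying the small cases $p\le 3$ directly (the only tree on $3$ vertices is $P_3=K_{1,2}$, and one exhibits a proper total $3$-colouring of it by hand). For the inductive step with $p\ge 4$, choose a leaf $v$ of $T$ with unique neighbour $u$ and apply the inductive hypothesis to the tree $T-v$, which has $p-1\ge 3$ vertices and maximum degree $\Delta(T-v)\le \Delta(T)=:\Delta$; this gives a proper total colouring $c$ of $T-v$ with colours from $[1,\Delta+1]$ (enlarging the palette harmlessly if $T-v$ needs fewer colours). It remains to colour the edge $uv$ and then the vertex $v$, \emph{in that order}. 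For $uv$ the forbidden colours are those of the at most $\ud_T(u)-1\le \Delta-1$ other edges at $u$ together with $c(u)$, hence at most $\Delta$ colours, so a colour in $[1,\Delta+1]$ is free. Once $c(uv)$ is chosen, the leaf $v$ is constrained only by $c(v)\ne c(u)$ and $c(v)\ne c(uv)$, at most two forbidden colours, and since $\Delta+1\ge 3$ a colour remains. This yields a proper total $(\Delta+1)$-colouring of $T$, which with the lower bound gives $\chi''(T)=\Delta(T)+1$.

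This is a clean greedy induction, which is exactly why trees sidestep the difficulty of the general Behzad--Vizing Total Colouring Conjecture cited above, so I expect no real obstacle. The two points that genuinely require care are: (i) colouring the incident edge $uv$ \emph{before} the leaf $v$, so that $v$'s not-yet-assigned colour does not inflate the ``$\le\Delta$ forbidden colours'' bound for the edge; and (ii) keeping track of the degenerate trees, since the induction needs $\Delta\ge 2$ to guarantee the third colour at the leaf, making $K_2$ the lone exceptional case. If a careless write-up were to go wrong, it would be at point (i).
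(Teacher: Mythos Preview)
Your proof is correct and follows essentially the same approach as the paper: induction on $|V(T)|$ by deleting a leaf and extending the total colouring of the smaller tree. Your version is in fact cleaner---you delete an arbitrary leaf and use a uniform counting argument for the extension, whereas the paper selects a specific peripheral leaf and splits into cases according to the degree of its neighbour and whether $\Delta$ drops on deletion---and your explicit flagging of the $K_2$ exception is a point the paper's statement overlooks.
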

\begin{proof} The assertion is true if $T$ is a star $K_{1,n-1}$ or a double star $S_{s,t}$, so assume $T\neq K_{1,n-1}$ and $T\neq S_{s,t}$. The following part of the proof is by induction on orders of trees and constrained by diameter $D(T)\geq 4$.

\emph{Case 1.} If there is a vertex $u$ of $T$ such that $\textrm{deg}_T(u)=2$, which is adjacent to a leaf $v$ with  $\textrm{deg}_T(v)=1$ in $T$. Let $T_1=T-v$, then $\Delta(T_1)=\Delta(T)$. We take a proper $\chi\,''(T_1)$-total coloring $f$ of $T_1$, hence, set $f\,'(v)=f(uw)$, $f\,'(uv)=f(w)$ and $f\,'(x)=f(x)$ for $x\in [E(T)\cup V(T)]\setminus N(u)$. It is clear that the coloring $f\,'$ is a proper $\chi\,''(T_1)$-total coloring of $T$ such that
$$\chi\,''(T)=\chi\,''(T_1)=\Delta(T_1)+1=\Delta(T)+1.$$

\emph{Case 2.} For the case $\textrm{deg}_T(u)\geq 3$, we take a vertex $u$ of $T$ with its neighbor set $N(u)=\{u_1,u_2,\dots ,u_k,u_0\}$, where $k=\textrm{deg}_T(u)-1$, $u_i$ is a leaf of $T$ with $i\in [1, k]$ and $\textrm{deg} (u_0)\geq 2$, and let $T_2=T-u_1$.

Now, we take a proper $\chi\,''(T_2)$-total coloring $g$ of $T_2$ having $\chi\,''(T_2)=\Delta(T_2)+1$, and then build a proper $\chi\,''(T)$-total coloring $g'$ of $T$ as: Let $C=[1,\chi\,''(T_2)]$ be the color set of $T_2$ under $g$ and $C(g,w)=\{g(xw): xw\in E(T_2)\}\cup \{g(w)\}$ for $w\in (T_2)$.

\emph{Case 2.1.} If $\Delta(T_2)=\Delta(T)$, then $|C(g,u)|\leq \Delta(T_2)$, so there is a color $\alpha \in C$ but $\alpha \notin C(g,u)$. We have that $g'(v)=g(u_0)$, $g'(uv)=\alpha $ and $g'(x)=g(x)$ for $x\in [E(T)\cup V(T)]\setminus \{v,uv\}$. Hence,
$$\chi\,''(T)=\chi\,''(T_2)=\Delta(T_2)+1=\Delta(T)+1.$$

\emph{Case 2.2.} If $\Delta(T_2)=\Delta(T)-1$, so $\chi\,''(T_2)=\Delta(T_2)+1=\Delta(T)$, and then we set $g'(v)=g(u_0)$, $g'(uv)=\chi\,''(T_2)+1$ and $g'(x)=g(x)$ for $x\in (E(T)\cup V(T))\setminus \{v,uv\}$. Therefore,
\begin{equation}\label{eqa:c3xxxxx}
\chi\,''(T)=\chi\,''(T_2)+1=\Delta(T)+1.
\end{equation}

The proof of this lemma is finished.
\end{proof}

\begin{lem} \label{thm:star-total-edge-magic}
A star also is a complete bipartite graph $K_{1,n}$. Then this star $K_{1,n}$ admits an edge-magic proper total coloring for even $n$, that is, $\min_fB_{tol}(K_{1,n},f)=0$, or an equitably proper total coloring for odd $n$.
\end{lem}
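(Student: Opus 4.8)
The plan is to exploit the extremely simple structure of a star. By Lemma~\ref{thm:tree-total-number} we have $\chi''(K_{1,n})=\Delta(K_{1,n})+1=n+1$ (assume $n\ge 2$), so in any tcn-pure coloring $f$ the $n$ pairwise-adjacent edges $x_ix_0$ receive $n$ distinct colors, all different from $f(x_0)$. Writing $c=f(x_0)$, this forces the edge colors to be exactly the set $S=[1,n+1]\setminus\{c\}$, and then each leaf $x_i$ is free to take any value in $[1,n+1]$ except $c$ (it is adjacent to $x_0$) and $f(x_ix_0)$ (it is incident to that edge). So the whole task is to choose $c$, a bijection from the edges onto $S$, and the leaf colors so as to make the spread of the numbers $d_i:=f(x_i)+f(x_ix_0)+c$ as small as possible.

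For even $n$ I would simply write down an explicit edge-magic coloring: put $f(x_0)=n+1$, $f(x_ix_0)=i$, and $f(x_i)=n+1-i$ for $i\in[1,n]$. This is proper — the only nonobvious point, $f(x_i)\neq f(x_ix_0)$, i.e.\ $n+1-i\neq i$, holds for every $i$ precisely because $(n+1)/2$ is not an integer — and it is tcn-pure since all of $1,\dots,n+1$ occur. Since $d_i=(n+1-i)+i+(n+1)=2n+2$ is constant, $B_{tol}(K_{1,n},f)=0$, so $\min_f B_{tol}(K_{1,n},f)=0$.

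For odd $n$ there are two things to prove. For the inequality $\min_f B_{tol}\le 1$ I would perturb the previous construction at the single bad index $i_0=(n+1)/2$, where $n+1-i_0=i_0$ clashes with the edge color: keep $f(x_0)=n+1$, $f(x_ix_0)=i$, $f(x_i)=n+1-i$ for $i\neq i_0$, and set $f(x_{i_0})=i_0-1$. This is still proper and tcn-pure, and now $d_{i_0}=2n+1$ while $d_i=2n+2$ otherwise, so $B_{tol}=1$. For the reverse inequality — no tcn-pure coloring attains $B_{tol}=0$ — I would argue by contradiction: if $d_i=K$ for all $i$ then $f(x_i)=K-c-f(x_ix_0)$ is forced, so letting $f(x_ix_0)$ run over $S$, the requirement $f(x_i)\in[1,n+1]$ pins $K$ down to at most two candidates given $c$; the requirement $f(x_i)\neq c$ eliminates a candidate in the two boundary cases $c\in\{1,n+1\}$; and in every case one of the forbidden values $K-2c$ (which would force a leaf to equal $c$) or $\tfrac{K-c}{2}$ (which would force a leaf to equal its own edge's color) is forced to lie in $S$ — and it does so, as an integer, exactly because $n$ is odd. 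Running this through the three cases $c=1$, $c=n+1$, and $2\le c\le n$ yields a contradiction in each, so $\min_f B_{tol}(K_{1,n},f)\ge 1$, and with the construction above this is an equality, i.e.\ $K_{1,n}$ admits an equitably proper total coloring for odd $n$.

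The routine part is the three-way case analysis; the one place that genuinely needs care — and the reason the parity of $n$ enters at all — is to remember that a leaf is adjacent to the center \emph{in addition to} being incident to its own edge, so its color must avoid both $c$ and $f(x_ix_0)$. Dropping the first of these constraints would make the odd case collapse into the even one, so this is the crux of the impossibility argument.
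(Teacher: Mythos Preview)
Your constructions are essentially the paper's: the paper places the center at color $1$ (with edges $2,\dots,n+1$ and leaves $n+2-i$), while you place it at color $n+1$ (with edges $1,\dots,n$ and leaves $n+1-i$); these are related by the dual transformation $x\mapsto n+2-x$, and both give $d_i$ constant for even $n$ and a single perturbed leaf for odd $n$.

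Where you differ is that you do strictly more than the paper. For odd $n$ the paper simply exhibits a coloring with $B_{tol}=1$ and stops, taking the statement at face value (``admits an equitably proper total coloring''). You additionally prove the impossibility half, $\min_f B_{tol}(K_{1,n},f)\ge 1$, via your case analysis on $c=f(x_0)$ and the two forbidden values $K-2c$ and $(K-c)/2$. That argument is correct: in the interior case $2\le c\le n$ the single admissible $K=c+n+2$ forces $K-2c=n+2-c\in S$ (and $n+2-c\neq c$ precisely because $n$ is odd), while in the boundary cases $c\in\{1,n+1\}$ one of the two candidate values of $K$ is killed by the $f(x_i)\neq c$ constraint and the other by $(K-c)/2\in S$ being an integer, again using $n$ odd. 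So your proof not only covers the lemma but upgrades the odd-$n$ conclusion to $\min_f B_{tol}(K_{1,n},f)=1$, which the paper's proof does not address.
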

\begin{proof} Clearly, $\Delta(K_{1,n})=n$. Let this star $K_{1,n}$ have its vertex set $V(K_{1,n})=\{x_0,x_1, x_2,\dots ,x_n\}$ and edge set $E(K_{1,n})=\{x_ix_0:i\in [1,n]\}$. We define a proper total coloring $f$ of $K_{1,n}$ as: If $n$ is even, we set $f(x_0)=1$, $f(x_0x_i)=1+i$ and $f(x_i)=n+2-i$ with $i\in [1,n]$. We can compute $\chi''(K_{1,n})=\Delta(K_{1,n})+1=n+1$, and
$${
\begin{split}
d_f(u_0u_i)&=f(x_0)+f(x_0x_i)+f(x_i)\\
&=1+1+i+n+2-i\\
&=n+3,
\end{split}}
$$ which shows $B_{tol}(K_{1,n},f)=0$.

\quad If $n$ is odd, we label $f(x_0)=1$, $f(x_0x_i)=1+i$ with $i\in [1,n]$, $f(x_i)=n+2-i$ with $i\in [1,n]$ and $i\neq (n+1)/2$, and $f(x_{(n+1)/2})=n+1-(n+1)/2$. Clearly, $\chi''(K_{1,n})=\Delta(K_{1,n})+1=n+1$, $d_f(u_0u_i)=f(x_0)+f(x_0x_i)+f(x_i)=1+1+i+n-i=n+2$ with $i\in [1,n-1]$ and $i\neq (n+1)/2$, and
$${
\begin{split}
&\quad d_f(u_0u_{(n+1)/2})=f(x_0)+f(x_0x_{(n+1)/2})+f(x_{(n+1)/2})\\
&=1+[n+2-(n+1)/2]+[n+1-(n+1)/2]\\
&=n+3.
\end{split}}$$

Thereby, we claim $B_{tol}(K_{1,n},f)=1$, in other words, $K_{1,n}$ admits an equitably proper total coloring.
\end{proof}

\begin{lem} \label{thm:bi-star-total-edge-magic}
Each bi-star $S_{m,n}$ admits an equitably proper total coloring.
\end{lem}
\begin{proof} By the definition of a bi-star $S_{m,n}$, it has its own vertex set $V(S_{m,n})=\{x_0,x_1, x_2,\dots ,x_m\}\cup \{y_0,y_1, y_2,\dots ,y_n\}$ and edge set $E(S_{m,n})=\{x_ix_0:i\in [1,m]\}\cup \{y_jy_0:j\in [1,n]\}\cup \{x_0y_0\}$. Without loss of generality, assume that $m\geq n$. We consider the case $m=2a$ and $n=2b$ and set a proper total coloring $g$ of $S_{m,n}$ as follows: $g(x_0)=1$, $g(x_0x_i)=1+i$ with $i\in [1,2a]$, $g(x_0y_0)=2a+2$, $g(y_0)=2$, $g(x_i)=2a+3-i$ with $i\in [1,2a]$ and $i\neq a+1$, $g(x_{a+1})=g(x_{a+2})$; if $a\neq b$, $g(y_0y_j)=2+j$ with $j\in [1,2b]$, $g(y_j)=2a+1-j$ with $i\in [1,2b]$; if $a=b$, $g(y_0y_j)=2+i$ with $j\in [1,2b-1]$, $g(y_0y_{2b})=1$, $g(y_{2b})=2a+2$, $g(y_j)=2a+1-j$ with $j\in [1,2b-2]$, $g(y_{2b-1})=3$. It is not hard to verify $B_{tol}(S_{2a,2b},g)=1$. Notice that
$$g:V(S_{2a,2b})\rightarrow [1,2a+2]=[1,\Delta(S_{2a,2b})+1].$$

\quad For other three cases $m=2a$ and $n=2b-1$, $m=2a-1$ and $n=2b-1$, $m=2a-1$ and $n=2b$, the proof ways are similar to that of the case $m=2a$ and $n=2b$. We claim that $S_{m,n}$ admits an equitably proper total coloring.
\end{proof}

\begin{thm} \label{thm:spider-edge-magic-total-coloring}
There exist infinite trees admitting edge-magic proper total colorings.
\end{thm}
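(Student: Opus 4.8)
The plan is to prove the theorem by exhibiting one explicit infinite family of trees, each carrying an explicit edge-magic proper total colouring. The guiding fact is Lemma~\ref{thm:tree-total-number}: for any tree $T$ we have $\chi''(T)=\Delta(T)+1$, so an edge-magic proper total colouring of $T$ is forced to use the whole palette $[1,\Delta(T)+1]$ while simultaneously making $d_f(uv)=f(u)+f(uv)+f(v)$ constant. A first, essentially free, family is the paths, since Tot-1 already gives $\min_f B_{tol}(P_n,f)=0$ for every $n\ge 2$, so $\{P_n:n\ge 2\}$ would already settle the statement. For a genuinely branching family I would instead use, for every even $k\ge 2$, the spider $A_k$ with $k$ legs of length $2$, namely the tree with centre $c$, middle vertices $m_1,\dots ,m_k$, leaves $\ell_1,\dots ,\ell_k$ and edge set $\{cm_i:i\in[1,k]\}\cup\{m_i\ell_i:i\in[1,k]\}$; here $\Delta(A_k)=k$ and hence $\chi''(A_k)=k+1$.

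On $A_k$ I would set $f(c)=1$, $f(cm_i)=i+1$, $f(m_i)=k+2-i$, $f(m_i\ell_i)=1$ and $f(\ell_i)=i+1$ for $i\in[1,k]$. A one-line computation gives $f(c)+f(cm_i)+f(m_i)=k+4=f(m_i)+f(m_i\ell_i)+f(\ell_i)$ for every $i$, so every edge has the same $d_f$-value and $B_{tol}(A_k,f)=0$; moreover the colours $1,2,\dots ,k+1$ all occur (colour $1$ on $c$, colour $i+1$ on $cm_i$) and no colour outside $[1,k+1]$ is used, so $f$ is tcn-pure. Hence $f$ is an edge-magic proper total colouring the moment it is checked to be a proper total colouring, and since there are infinitely many even $k$ the theorem follows (paths give an even cheaper infinite family).

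The remaining work is the properness check, and this is the only place where I expect any real, though mild, difficulty: inside the tight palette $[1,k+1]$ the demands of properness, tcn-purity and a fixed magic constant leave little slack. At $c$ one has $f(c)=1\ne i+1=f(cm_i)$ and the $f(cm_i)$ pairwise distinct; at a middle vertex $m_i$ one needs $f(m_i)\ne f(c)=f(m_i\ell_i)=1$ (true since $f(m_i)=k+2-i\ge 2$ for $i\le k$), $f(m_i)\ne f(cm_i)$, that is $k+2-i\ne i+1$, equivalently $2i\ne k+1$, and $f(m_i)\ne f(\ell_i)$, which is the same inequality $2i\ne k+1$; the edge-incidence $cm_i\sim m_i\ell_i$ and the leaf incidence $\ell_i\sim m_i\ell_i$ both reduce to $i\ge 1$. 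The inequality $2i\ne k+1$ holds for all $i$ precisely because $k$ is even, which is exactly why I restrict to even $k$; this parity phenomenon mirrors Lemma~\ref{thm:star-total-edge-magic} and Lemma~\ref{thm:bi-star-total-edge-magic}, where the odd cases yield only an equitable (not edge-magic) proper total colouring, and I would expect the same dichotomy for $A_k$ with $k$ odd — but that negative direction is not needed here.
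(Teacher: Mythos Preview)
Your proof is correct and follows essentially the same approach as the paper: both construct edge-magic proper total colourings on spiders with an even number of legs by starting from the star colouring of Lemma~\ref{thm:star-total-edge-magic} ($f(c)=1$, $f(cm_i)=i+1$, $f(m_i)=k+2-i$) and extending outward along each leg so that the magic constant is preserved; your $A_k$ is exactly the paper's $S_{2,2,\dots,2}$ with the same colouring. The only difference is that the paper allows arbitrary leg lengths $m_1,\dots,m_n$ (continuing the periodic pattern $1,\,i{+}1,\,k{+}2{-}i,\,1,\dots$ along each leg), whereas you stop at legs of length~$2$ --- but either subfamily is already infinite, so the theorem is proved in both cases.
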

\begin{proof} A spider $S_{m_1,m_2,\dots, m_n}$ has $n$ paths (called \emph{legs}) $P_i=u_{i,1}u_{i,2}\cdots u_{i,m_i}$ with $m_i\geq 1$ for $i\in [1,n]$, and a body $u_0$ is joined with $u_{i,1}$ by an edge $u_0u_{i,1}$ with $i\in [1,n]$. As $n$ is even, the star $K_{1,n}$ having its vertex set $\{u_0,u_{1,1},u_{2,1},\dots ,u_{n,1}\}$ and edge set $\{u_0u_{i,1}:i\in [1,n]\}$ admits an edge-magic proper total coloring $f$ by Lemma \ref{thm:star-total-edge-magic}. So, $f(u_0)=1$, $f(u_0u_{i,1})=1+i$ and $f(u_{i,1})=n+2-i$ with $i\in [1,n]$. Notice that
\begin{equation}\label{eqa:spider-formula}
{
\begin{split}
d_f(u_0u_{i,1})&=f(u_0)+f(u_0u_{i,1})+f(u_{i,1})\\
&=1+n+2-i+1+i\\
&=n+3
\end{split}}
\end{equation} with $i\in [1,n]$. Clearly, $\chi''(K_{1,n})=n+1=\Delta(K_{1,n})+1$.

We expend the coloring $f$ to the spider $S_{m_1,m_2,\dots, m_n}$ by coloring a leg $P_i=u_{i,1}u_{i,2}\cdots u_{i,m_i}$ as: $f(u_{i,1}u_{i,2})=1$ and $f(u_{i,2})=f(u_0u_{i,1})=1+i$, which shows $d_f(u_{i,1}u_{i,2})$ obeys (\ref{eqa:spider-formula}); next, $f(u_{i,2}u_{i,3})=f(u_{i,1})=n+2-i$ and $f(u_{i,3})=f(u_0)=1$, so $d_f(u_{i,2}u_{i,3})$ obeys (\ref{eqa:spider-formula}) too. Go on in this way, we have shown that the spider $S_{m_1,m_2,\dots, m_n}$ admits $f$ as its a proper total coloring such that
$$\chi''(S_{m_1,m_2,\dots, m_n})=n+1=\Delta(S_{m_1,m_2,\dots, m_n})+1$$
and
$$B_{tol}(S_{m_1,m_2,\dots, m_n},f)=0.$$
The proof of the theorem is complete.
\end{proof}

An example is shown in Fig.\ref{fig:spider-1} for understanding the proof of Theorem \ref{thm:spider-edge-magic-total-coloring}. Motivated from the technique in the proof of Theorem \ref{thm:spider-edge-magic-total-coloring}, we have:

\begin{lem} \label{thm:add-leaves-edge-magic-total}
For any tree $T$, we can add a leaf $u$ to $T$ with $u\not\in V(T)$ such that the tree $T+uv$ with $v \in V(T)$ holds $\min_fB_{tol}(T,f)=\min_gB_{tol}(T+uv,g)$ true.
\end{lem}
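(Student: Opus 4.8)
The plan is to attach the new leaf $u$ at a vertex $v$ of $T$ that does \emph{not} have maximum degree, so that the total chromatic number is left unchanged, and then to compare the two extremal quantities directly. We may assume $|V(T)|\ge 3$, the cases $T=K_1,K_2$ being trivial (then $T$ and $T+uv$ are short paths, and every path $P_n$ has $\min_f B_{tol}(P_n,f)=0$). Take $v$ to be a leaf of $T$ with neighbor $v'$, let $u\notin V(T)$, and set $T'=T+uv$. Since $\deg_T(v)=1\le\Delta(T)-1$, we have $\Delta(T')=\Delta(T)$, so Lemma \ref{thm:tree-total-number} gives $\chi''(T')=\Delta(T')+1=\Delta(T)+1=\chi''(T)=:k$. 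This coincidence of total chromatic numbers is the only delicate point of the argument, and it is precisely what forces the leaf to be placed at a non-maximum-degree vertex.

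For the inequality $\min_g B_{tol}(T',g)\le\min_f B_{tol}(T,f)$, I would extend an optimal coloring by the rotation trick used in the proof of Theorem \ref{thm:spider-edge-magic-total-coloring}. Let $f$ be a tcn-pure coloring of $T$ attaining $\min_f B_{tol}(T,f)$, and write $m=\min_{e\in E(T)}d_f(e)$ and $M=\max_{e\in E(T)}d_f(e)$. Define $g$ on $T'$ by $g=f$ on $V(T)\cup E(T)$, $g(vu)=f(v')$, and $g(u)=f(vv')$. The only new adjacency/incidence constraints sit at $v$ and at $u$, and they all hold: $f(v')\ne f(v)$ and $f(vv')\ne f(v)$ because $vv'\in E(T)$, while $f(v')\ne f(vv')$ because $v'$ and $vv'$ are incident. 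No new color is introduced, since $f(v')$ and $f(vv')$ lie in $[1,k]$, and $f$ already uses all of $[1,k]$; hence $g$ is a tcn-pure coloring of $T'$. Finally $d_g(vu)=f(v)+f(v')+f(vv')=d_f(vv')\in[m,M]$, so $\max_{e\in E(T')}d_g(e)=M$, $\min_{e\in E(T')}d_g(e)=m$, and $B_{tol}(T',g)=M-m=\min_f B_{tol}(T,f)$.

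For the reverse inequality $\min_f B_{tol}(T,f)\le\min_g B_{tol}(T',g)$, the key observation is that here restriction automatically yields a tcn-pure coloring: if $g$ is any tcn-pure coloring of $T'$, then $g|_T$ is a proper total coloring of $T$ with colors taken from $[1,k]$, and since every proper total coloring of $T$ uses at least $\chi''(T)=k$ colors, $g|_T$ uses all $k$, i.e.\ it is tcn-pure. Therefore $\min_f B_{tol}(T,f)\le B_{tol}(T,g|_T)=\max_{e\in E(T)}d_g(e)-\min_{e\in E(T)}d_g(e)\le\max_{e\in E(T')}d_g(e)-\min_{e\in E(T')}d_g(e)=B_{tol}(T',g)$; taking $g$ optimal for $T'$ gives the claim, and combining the two inequalities yields $\min_f B_{tol}(T,f)=\min_g B_{tol}(T+uv,g)$. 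I do not expect a serious obstacle beyond the bookkeeping around $\chi''$ already flagged; one should, however, verify the exceptional trees $K_1,K_2$ (where $\chi''\ne\Delta+1$) directly, and one then sees for free that an optimal coloring of $T$ may be turned into an optimal coloring of $T+uv$ simply by coloring the pendant edge $uv$ and the leaf $u$, and conversely.
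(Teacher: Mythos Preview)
Your proof is correct and fleshes out exactly what the paper leaves implicit: the paper gives no proof of this lemma at all, merely stating that it is ``motivated from the technique in the proof of Theorem~\ref{thm:spider-edge-magic-total-coloring}'', i.e.\ the rotation trick of setting $g(vu)=f(v')$, $g(u)=f(vv')$. You use precisely that trick for the direction $\min_g B_{tol}(T',g)\le\min_f B_{tol}(T,f)$, and you add the reverse inequality via restriction together with the observation that $\chi''(T')=\chi''(T)$ forces the restricted coloring to remain tcn-pure --- a point the paper does not address but which is needed for the equality as stated.
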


See a generalized spider admitting an edge-magic proper total coloring shown in Fig.\ref{fig:spider-2} for understanding Lemma \ref{thm:add-leaves-edge-magic-total}, which can help us to design random Topsnut-gpws or \emph{rooted Topsnut-gpw}.

\begin{figure}[h]
\centering
\includegraphics[height=2.4cm]{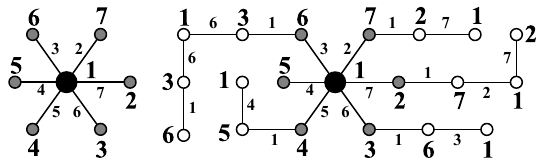}\\
\caption{\label{fig:spider-1} {\small Left is $K_{1,6}$, Right is a spider $S_{1,3,3,4,3,5}$.}}
\end{figure}

\begin{figure}[h]
\centering
\includegraphics[height=4.8cm]{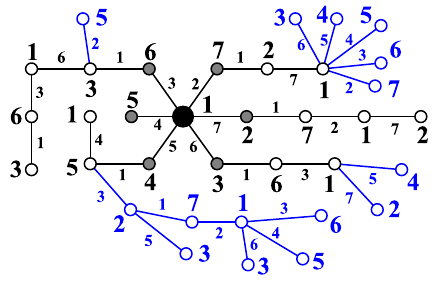}\\
\caption{\label{fig:spider-2} {\small A generalized spider $H$ admits an edge-magic proper total coloring, where $K_{1,6}$ is the root, so $H$ is called a rooted Topsnut-gpw.}}
\end{figure}

\begin{thm} \label{thm:tree-equitable-total-coloring}
Each tree admits an equitable proper total coloring.
\end{thm}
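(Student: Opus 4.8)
The plan is to prove the unified statement that every tree $T$ admits a proper total coloring $f$ using exactly $\chi''(T)=\Delta(T)+1$ colors (a tcn-pure coloring, legitimate by Lemma \ref{thm:tree-total-number}) whose edge-sum bandwidth satisfies $B_{tol}(T,f)\le 1$; equivalently, $f$ is either an edge-magic or an equitably proper total coloring. The argument is by induction on $|V(T)|$, taking the stars of Lemma \ref{thm:star-total-edge-magic} and the bi-stars of Lemma \ref{thm:bi-star-total-edge-magic} (which already realize $B_{tol}\le 1$) as base cases, and using the leg-extension device behind Theorem \ref{thm:spider-edge-magic-total-coloring} together with Lemma \ref{thm:add-leaves-edge-magic-total} as the engine of the inductive step.

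For the inductive step I would locate a \emph{penultimate} vertex $u$, i.e. an internal vertex all of whose neighbours are leaves except for a single vertex $u_0$; such a $u$ exists in any tree that is neither a star nor a bi-star (root $T$ at a centroid and take a deepest branch vertex). Write $L(u)=\{v_1,\dots,v_k\}$ for the leaves at $u$ and set $T'=T-L(u)$, so that in $T'$ the vertex $u$ has become a leaf adjacent to $u_0$. By the induction hypothesis $T'$ carries a tcn-pure coloring $f'$ with edge-sum window $[m,M]$, $M-m\le 1$; in particular the colour $f'(u)$ and the edge-colour $f'(uu_0)$ are already fixed. I would then re-attach the deleted leaves by applying the star-balancing pattern of Lemma \ref{thm:star-total-edge-magic} \emph{locally} to the star centred at $u$: choose distinct edge-colours $f(uv_j)=a_j$ from $[1,\Delta(T)+1]\setminus\{f'(u),f'(uu_0)\}$ and leaf-colours $f(v_j)=s-f'(u)-a_j$ for a single target sum $s$, so that every new edge obeys $d_f(uv_j)=f'(u)+a_j+f(v_j)=s$. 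Choosing $s$ inside $[m,M]$ (or, when $T'$ is edge-magic, allowing $s=m$ or $s=m+1$) keeps $B_{tol}(T,f)\le 1$, while the colours on $E(T')$ and all their sums are untouched.

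The key step---and the one I expect to be the main obstacle---is guaranteeing that this local star-balancing is \emph{feasible}: one must produce $k=\deg_T(u)-1$ distinct admissible edge-colours $a_j$ whose partners $s-f'(u)-a_j$ are again colours in $[1,\Delta(T)+1]$, distinct from the $a_j$, pairwise distinct, and avoiding the forbidden colours. Since the admissible band for each $a_j$ is the intersection of $[1,\Delta(T)+1]$ with $[\,s-f'(u)-(\Delta(T)+1),\,s-f'(u)-1\,]$, this band---and hence the whole extension---can collapse when the inherited colour $f'(u)$ is extreme relative to the narrow magic value $s$ (which is only on the order of $\Delta(T)$ in the star and bi-star computations). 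The difficulty is therefore to prevent the branch vertex from ever inheriting an inconvenient colour.

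To overcome this I would strengthen the induction hypothesis rather than prove the bare statement: carry along the extra conclusion that the tcn-pure coloring can be arranged with its magic value $s$ near the centre $\tfrac{3}{2}(\Delta(T)+1)$ of the palette and with the colour reserved for the vertex destined to become the next branch point lying in the middle range, so that the admissible band above always has width at least $\deg_T(u)-1$. The bookkeeping for this reservation is precisely the cyclic colour-reuse seen along the legs in the proof of Theorem \ref{thm:spider-edge-magic-total-coloring}, where three recurring colours keep every triple-sum constant; extending that cyclic scheme from a path to the branching structure of a general tree, while respecting the reserved colour at each future attachment vertex, is the technical heart of the proof, after which the bound $B_{tol}(T,f)\le 1$ follows immediately and yields the desired equitable proper total coloring.
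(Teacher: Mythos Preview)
Your setup (induction on $|V(T)|$, stars and bi-stars as base cases, a penultimate vertex $u$ for the step) matches the paper. The divergence is that you remove \emph{all} leaves of $u$ at once and then re-attach them via the closed formula $f(v_j)=s-f'(u)-a_j$, whereas the paper removes and re-inserts a \emph{single} leaf at a time.

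That single-leaf choice is what lets the paper's argument close without any strengthened hypothesis. With one new leaf $x$, one needs only one edge-colour and one leaf-colour landing $d_f(ux)$ in the existing window $[m,M]$, and a short three-case analysis suffices: either $f(u_0)$ or some $f(u_j)$ is not already an edge-colour at $u$ (then reuse the matching edge-colour as the leaf-colour, giving $|d_f(ux)-d_f(u_0u)|=0$ or $|d_f(ux)-d_f(u_ju)|=0$), or else the unique unused colour $c\in[1,\Delta(T)+1]\setminus S$ is sandwiched between two consecutive edge-colours at $u$ and can be paired with one of them to stay within bandwidth $1$.

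Your block re-attachment genuinely needs the extra control you describe, and the gap is that you never state it precisely. ``Magic value near $\tfrac32(\Delta(T)+1)$'' and ``colour at the next branch point in the middle range'' are not invariants one can check survive the inductive step for a general tree with many branch points at different depths, and the spider leg-extension you invoke only propagates a three-colour cycle along a path emanating from a single centre; it does not tell you how to reserve a prescribed colour at an interior vertex that will later receive a large star. There is a further issue you do not mention: stripping all of $L(u)$ can make $\Delta(T')$ much smaller than $\Delta(T)$ (take $u$ the unique maximum-degree vertex), so the inherited coloring is tcn-pure for $\Delta(T')+1$ colours and its edge-sum window sits too low for any choice of $s$ to accommodate $k=\Delta(T)-1$ new edges with partners in $[1,\Delta(T)+1]$. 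The paper's one-leaf step, together with the hypothesis $k+2\le\Delta(T)$ ensuring a spare colour at $u$, avoids both difficulties.
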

\begin{proof} We consider a tree $T$ having diameter $D_{ia}(T)\geq 4$ since $T$ with $D_{ia}(T)\leq 3$ follows Lemma \ref{thm:star-total-edge-magic} and Lemma \ref{thm:bi-star-total-edge-magic}. Select a vertex $u$ of $T$ such that the neighbor set $N(u)=\{u_1,u_2,\dots ,u_k,u_0\}$, where $k=\textrm{deg}_T(u)-1$, $\textrm{deg}_T(u_i)=1$ with $i\in [1, k]$ and $\textrm{deg} (u_0)\geq 2$. Without loss of generality, assume that $k+2\leq \Delta(T)$, since we use the induction proof on numbers of vertices of trees.

Then, $T$ admits a proper total coloring $f$ holding $B_{tol}(T,f)=\min_gB_{tol}(T,g)=1$ true. Notice that $\chi''(T)=\Delta(T)+1$, there exists a vertex $w$ of $T$ such that $\textrm{deg}_T(w)=\Delta(T)$. By the principle of induction, we add a new vertex $x$ to $T$ and join it with $u$ by an edge $ux$. There are the following cases.

\emph{Case 1.} If $f(u_0)\neq f(u_iu)$ with $i\in [1,k]$, then we set $f(ux)=f(u_0)$, $f(x)=f(u_0u)$. Obviously, $|d_f(u_0u)-d_f(ux)|=0$.

\emph{Case 2.} If $f(u_j)\neq f(u_iu)$ for some $i\neq j$, then we set $f(ux)=f(u_j)$, $f(x)=f(u_ju)$, thus, $|d_f(u_ju)-d_f(ux)|=0$.

\emph{Case 3.} Without loss of generality, we assume $f(u_0)= f(u_ju)$ for some $j$, and $f(u_i)=f(u_{i'}u)$ for each $i\in [1,k]$. We have a color set $$S=\{f(u_0u), f(u_1u), f(u_2u), \dots, f(u_ku),f(u)\}$$ such that $|S|=\Delta(T)$. We can arrange $$f(u_0u)< f(u_1u)< f(u_2u)< \dots <f(u_ku).$$

By $\chi''(T)=\Delta(T)+1$, the cardinality of the color set holds $|f(V(T)\cup E(T))|=\Delta(T)+1$ true, so there exists one color $c\in f(V(T)\cup E(T))\setminus S$. So, $f(u_ju)< c <f(u_{j+1}u)$.

\emph{Case 3.1.} If $f(u)<f(u_ju)$ or $f(u_{j+1}u)<f(u)$, immediately, we have $c-f(u_ju)=1$ and $f(u_{j+1}u)-c=1$, $f(u_{j+1}u)-f(u_ju)=2$. We set $f(ux)=c$, $f(x)=f(u_ju)$, and furthermore

\begin{equation}\label{eqa:c3xxxxx}
{
\begin{split}
&\quad |d_f(u_{j+1}u)-d_f(ux)|\\
&=|[f(u_{j+1})+f(u_{j+1}u)+f(u)]\\
&\quad -[f(x)+f(ux)+f(u)]|\\
&=|f(u_{j+1})-f(u_ju)+f(u_{j+1}u)-c|\\
&=|c+1-2-c|=1.
\end{split}}
\end{equation}

\emph{Case 3.2.} If $f(u_ju)<f(u)<c<f(u_{j+1}u)$, according to $\chi''(T)=\Delta(T)+1$, we get $f(u)-f(u_ju)=1$, $c-f(u)=1$ and $f(u_{j+1}u)-c=1$. We let $f(ux)=c$, $f(x)=f(u_{j+1}u)$, and moreover we can compute

\begin{equation}\label{eqa:c3xxxxx}
{
\begin{split}
&\quad |d_f(u_{j+1}u)-d_f(ux)|\\
&=|[f(u_{j+1})+f(u_{j+1}u)+f(u)]\\
&\quad -[f(x)+f(ux)+f(u)]|\\
&=|f(u_{j+1})+f(u_{j+1}u)-f(u_{j+1}u)-c|\\
&=|c+1-c|=1.
\end{split}}
\end{equation}
Summarizing the above cases, we have shown
$$B_{tol}(T+ux,f)=B_{tol}(T,f)=1,
$$ that is, $f$ is an equitable proper total coloring of the tree $T+ux$. The proof of this theorem is complete.
\end{proof}

According to Lemma \ref{thm:star-total-edge-magic} and Lemma \ref{thm:bi-star-total-edge-magic}, we claim that a tree $T$ admits an edge-magic proper total coloring if its maximum degree $\Delta(T)$ is even, and $T$ admits an equitable proper total coloring otherwise. Furthermore, we can make random rooted Topsnut-gpws by adding vertices and edges based on edge-magic/equitable proper total colorings. About rooted Topsnut-gpws $G$ with edge-magic proper total colorings, we have the following problems: (1) The distance between two maximum degree vertices of $G$ is at least 3. (2) For a fixed $|V(G)|$, how large is $|E(G)|$? and what is the number of maximum degree vertices of $G$? (3) The girth (length of the smallest cycle) of $G$ is at least $6$, and so on.

In  Fig.\ref{fig:rooted-total-00}, we can get a text-based password
$${
\begin{split}
D_{vev}=&5231631721721635272635235145163\\
&5217217235235217213613.
\end{split}}$$
Clearly, no way by $D_{vev}$ to reconstruct the original non-planar rooted Topsnut-gpw $G$ shown in Fig.\ref{fig:rooted-total-00}.

\begin{figure}[h]
\centering
\includegraphics[height=5.4cm]{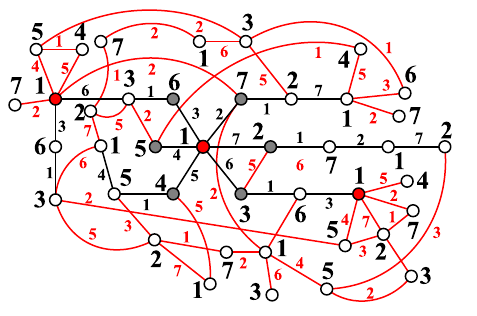}\\
\caption{\label{fig:rooted-total-00} {\small A non-planar rooted Topsnut-gpw $G$ with maximum degree $6$ and an edge-magic proper total coloring.}}
\end{figure}

\subsection{Consecutive integer sets of complete graphs with proper total colorings}

A complete graph $K_m$ admits a proper total coloring $f$ with $\chi ''(K_m)=m+1$ for even $m$, and $\chi ''(K_m)=m$ for odd $m$. Let $$f^*(E(K_m))=\{f(u)+f(uv)+f(v):uv \in E(K_m)\}.$$ Hence, we guess: $f^*(E(K_m))$ is a consecutive integer set $[a,b]$ (see Fig.\ref{fig:total-continus}). Four proper total colorings $h$ shown in Fig.\ref{fig:total-continus-dual} match with four proper total colorings $f$ shown in Fig.\ref{fig:total-continus}, we can see $f^*(E(K_3))=h^*(E(K_3))$ and $f^*(E(K_5))=h^*(E(K_5))$, so we say the proper total colorings $f$ of $K_3$ and $K_5$ are \emph{self-matching about consecutive integer sets}. Does any complete graph $K_{2n+1}$ admit a  proper total coloring $f$ being self-matching about consecutive integer set?

Since any tree $T$ admits an edge-magic proper total labelling $f$ or an equitable  proper total labelling $f$, so $$f^*(E(T))=\{f(u)+f(uv)+f(v):uv \in E(T)\}$$ is a consecutive integer set.

\begin{figure}[h]
\centering
\includegraphics[height=5.8cm]{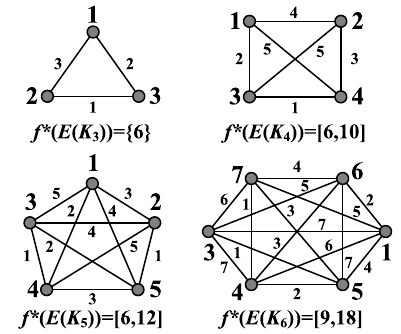}\\
\caption{\label{fig:total-continus} {\small Four examples for illustrating the consecutive integer sets of proper total colorings.}}
\end{figure}

\begin{figure}[h]
\centering
\includegraphics[height=5.8cm]{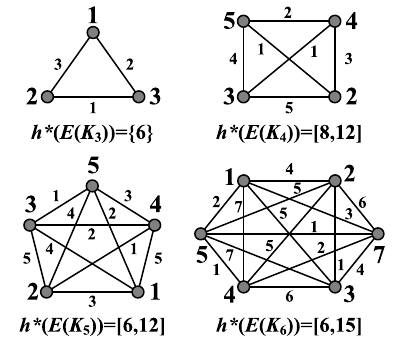}\\
\caption{\label{fig:total-continus-dual} {\small Four proper total colorings of four complete graphs match with four total colorings of four complete graphs shown in Fig.\ref{fig:total-continus}.}}
\end{figure}

In Fig.\ref{fig:new-total-coloring}, we have $f^*_1(E(K_{2,3}))=\{f_1(u)+f_1(uv)+f_1(v):uv \in E(K_{2,3})\}=[6,8]$, $f^*_2(E(K_{2,4}))=\{f_2(u)+f_2(uv)+f_2(v):uv \in E(K_{2,4})\}=[6,9]$ and $f^*_3(E(K_{3,3}))=\{f_3(u)+f_3(uv)+f_3(v):uv \in E(K_{3,3})\}=[6,10]$, and furthermore the matching colorings $h_1,h_2,h_3$ distribute us

$h^*_1(E(K_{2,3}))=\{h_1(u)+h_1(uv)+h_1(v):uv \in E(K_{2,3})\}$ $=[7,9]$,

$h^*_2(E(K_{2,4}))=\{h_2(u)+h_2(uv)+h_2(v):uv \in E(K_{2,4})\}$ $=[9,12]$ and

$h^*_3(E(K_{3,3}))=\{h_3(u)+h_3(uv)+h_3(v):uv \in E(K_{3,3})\}$ $=[8,12]$.

It shows that each complete bipartite graph $K_{m,n}$ admits a proper total coloring
$$
g:V(K_{m,n})\cup E(K_{m,n})\rightarrow [1,\chi''(K_{m,n})]
$$ such that
$$g^*(E(K_{m,n}))=\{g(u)+g(uv)+g(v):uv \in E(K_{m,n})\}$$ is a  consecutive integer set $[a,b]$.

\subsection{New parameters on proper vertex colorings}

We define two parameters on proper vertex colorings of graph theory in this subsection. Let $f:V(G)\rightarrow [1,k]$ be a proper vertex coloring of a graph $G$ with $k=\chi (G)$. We define a parameter
\begin{equation}\label{eqa:c3xxxxx}
B_{sub}(G,f)=\sum _{xy\in E(G)}|f(x)-f(y)|,
\end{equation}
and try to determine $\min_fB_{sub}(G,f)$ and $\max_fB_{sub}(G,f)$. Clearly, if $G$ is a bipartite graph, then $$\min_fB_{sub}(G,f)=\max_fB_{sub}(G,f)=|E(G)|.$$

If a connected graph $G$ holds
\begin{equation}\label{eqa:c3xxxxx}
\min_fB_{sub}(G,f)<M<\max_fB_{sub}(G,f)
\end{equation}
true for each $M$, there exists a proper vertex coloring $f_M$ of $G$ such that $B_{sub}(G,f_M)=M$, we say that $G$ has \emph{a group of consecutive difference proper vertex colorings} (see an example shown in Fig.\ref{fig:new-topic-v-coloring-1}).

\begin{figure}[h]
\centering
\includegraphics[height=2.5cm]{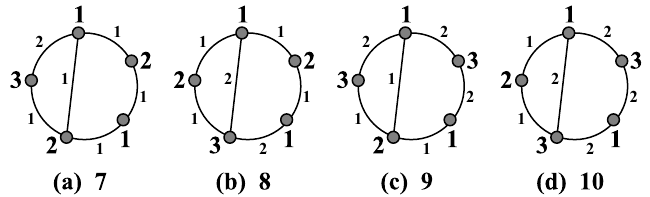}\\
\caption{\label{fig:new-topic-v-coloring-1} {\small The graph $C+uv$ admits a group of consecutive difference proper vertex colorings, which forms a \emph{Topsnut-matching chain}.}}
\end{figure}

Let $g:V(G)\rightarrow [1,k]$ be a proper vertex coloring of a graph $G$ with $k=\chi(G)$. We define another parameter \begin{equation}\label{eqa:c3xxxxx}
B_{sum}(G,g)=\sum _{xy\in E(G)}[f(x)+f(y)]
\end{equation}
and try to compute two extremal values $\min_gB_{sum}(G,g)$ and $\max_hB_{sum}(G,h)$. If $G$ has a proper vertex coloring $g_Q$ for each $Q$ satisfying
\begin{equation}\label{eqa:c3xxxxx}
\min_gB_{sum}(G,g)<Q<\max_hB_{sum}(G,h)
\end{equation}
such that $B_{sum}(G,g_Q)=Q$, we say that $G$ has \emph{a group of consecutive sum proper vertex colorings}. We give an example shown in Fig.\ref{fig:new-topic-v-coloring-2}.

Observed that the labeled graph (a) in Fig.\ref{fig:new-topic-v-coloring-1} is equal to the labeled graph (a) in Fig.\ref{fig:new-topic-v-coloring-2}, so we ask for a problem: If a proper vertex coloring $h^*$ holds $$B_{sub}(G,h^*)=\min_fB_{sub}(G,f)$$ true, then $h^*$ holds $$B_{sum}(G,h^*)=\min_fB_{sum}(G,f)$$ true too?

\begin{figure}[h]
\centering
\includegraphics[height=4.8cm]{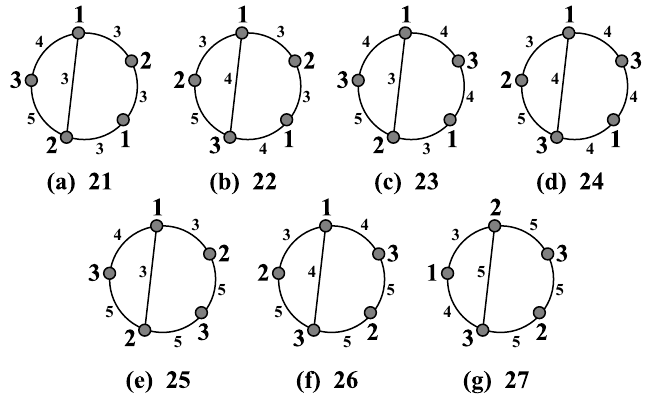}\\
\caption{\label{fig:new-topic-v-coloring-2} {\small The graph $C+uv$ admits a group of consecutive sum proper vertex colorings, which forms a \emph{Topsnut-matching chain}.}}
\end{figure}

For making Topsnut-gpws and Topsnut-matchings more complex, we can use distinguishing edge-colorings and distinguishing total colorings, since they match with the following open problems:

(i) Zhang \emph{et al.} \cite{Zhang-Liu-Wang-2002} show a famous conjecture: For every graph $G$ with no $K_2$ or $C_5$ component, then the adjacent strong edge chromatic number $$\chi'_{as}(G) \leq \Delta(G)+2.$$

(ii) Zhang \emph{et al.} \cite{Zhang-Chen-Li-Yao-Lu-Wang-China-2005} introduced a concept of adjacent vertex distinguishing total coloring (AVDTC), and show a conjecture: Let $G$ be a simple graph with order $n \geq  2$; then $G$ has its AVDTC chromatic number $$\chi''_{as}(G) \leq \Delta(G)+2.$$

(iii) Yang \emph{et al.} propose the following conjectures (\cite{Yang-Ren-Yao-Ars-2016, Yang-Yao-Ren-Information-2016}):  Every simple graph $G $ having at least a $4$-avdtc holds $$\chi''_{4as}(G) \leq \Delta(G)+4.$$

\subsection{New proper total colorings}

We use the successful experience of graph labelling definitions to propose new proper total colorings. Let $f:V(G)\cup E(G)\rightarrow [1,k]$ be a proper total coloring of a graph $G$. If each edge $uv$ holds $f(uv)=|f(u)-f(v)|$ true, we call $f$ a \emph{ve-matching difference total $k$-coloring} of $G$, we denote the smallest number of $k$ over all ve-matching difference total $k$-colorings by $\chi''_{ved}(G)$, called the \emph{ve-matching difference total chromatic number} of $G$. See two ve-matching difference total colorings shown in Fig.\ref{fig:induce-total-coloring}(a) and (b). We can see $\chi''(G)\leq \chi''_{ved}(G)$, in general.

For a proper total coloring $h:V(G)\cup E(G)\rightarrow [1,m]$, if each edge $uv$ holds $h(uv)=h(u)+h(v)$ true, we call $h$ a \emph{ve-matching sum total $m$-coloring}. The minimal number of $m$ over all ve-matching sum total $m$-colorings is denoted as $\chi''_{ves}(G)$, called the \emph{ve-matching sum total chromatic number} of $G$. Two ve-matching sum total colorings are shown in Fig.\ref{fig:induce-total-coloring}(c) and (d). Clearly, $\chi''(G)\leq \chi''_{ves}(G)$.

\begin{figure}[h]
\centering
\includegraphics[height=5.2cm]{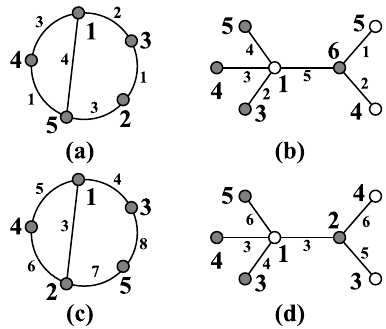}\\
\caption{\label{fig:induce-total-coloring} {\small Four ve-matching difference/sum total colorings.}}
\end{figure}

\begin{lem} \label{thm:ved-ves-trees-lemma}
Suppose that $H$ is a connected subgraph of a connected graph $G$, then
\begin{equation}\label{eqa:c3xxxxx}
\chi''_{ved}(H)\leq \chi''_{ved}(G),~\chi''_{ves}(H)\leq \chi''_{ves}(G).
\end{equation}
\end{lem}

\section{Graph labelling analysis}

Analyzing a graph labelling in detail and depth is necessary and important, since graph labellings are applied to design graphical ciphers serving to information security. We hope that analysis of graph labellings will be helpful for applying graph labellings towards information networks, and become a new subbranch of graph theory.

\subsection{Pan-labellings}

A pan-labelling is constituted of a topological structure and some operations based on numbers, letters, Topsnut-gpws, sets, groups etc. In general, a pan-labelling may be a traditional graph labelling/coloring of graph theory, or others introduced here. A coloring can be admitted by each simple graph, but a traditional labelling is admitted only by part of simple graphs.

For a $(p,q)$-graph $G$, a \emph{pan-labelling} $f$ is defined on a \emph{domain} $S\subseteq V(G)\cup E(G)$, and yields the \emph{main-range} $f(S)=\{f(x):x\in S\}$. Furthermore, $f$ gives a mapping $f'$ based on the edge-domain $E(G)$ and produces the \emph{derivative range}
$$
f'(E(G))=\{f'(uv)=F(f(u),f(v)):uv\in E(G)\}
$$ where $F(f(u),f(v))$ is a \emph{function} of two variables. The main range or the derivative range is one of number sets, graph sets, group sets, Topsnut-sets, and so on.

Definition \ref{defn:sequence-labelling} is one of generalized labelling definitions for connecting more well-defined graph labellings.

\begin{defn}\label{defn:sequence-labelling}
\cite{Yao-Mu-Sun-Zhang-Wang-Su-2018} Let $G$ be a $(p,q)$-graph, and let $A_M=\{a_i\}_1^M$ and $B_q=\{b_j\}_1^q$ be two monotonic increasing sequences of numbers with $M\geq p$. There are the following restrict conditions:
\begin{asparaenum}[Seq-1. ]
\item \label{condi:vertex-mapping} A vertex mapping $f:V(G)\rightarrow A_M$ such that $f(u)\neq f(v)$ for distinct vertices $u,v\in V(G)$.
\item \label{condi:total-mapping} A total mapping $g:V(G)\cup E(G) \rightarrow A_M\cup B_q$ such that $g(x)\neq g(y)$ for distinct elements $x,y\in V(G)\cup E(G)$.
\item \label{condi:induced-edge-label} An induced edge label $f(uv)=O(f(u),f(v))$.
\item \label{condi:total-equation} An $F$-equation $F(g(u),g(uv),g(v))=0$ holds true.
\item \label{condi:total-equation-edge} An $E$-equation $E(f(u),f(uv),f(v))=0$ holds true.
\item \label{condi:not-full-1} $f(E(G))\subseteq B_q$.
\item \label{condi:not-full-2} $g(V(G)\cup E(G))\subseteq A_M\cup B_q$.
\item \label{condi:half-full} $f(V(G))\subseteq A_M$ and $f(E(G))=B_q$.
\item \label{condi:all-full} $f(V(G))=A_M$ and $f(E(G))=B_q$.
\end{asparaenum}
\quad We call $f$:
\begin{asparaenum}[(1) ]
\item  a \emph{sequence-$(A_M,B_q)$ labelling} if Seq-\ref{condi:vertex-mapping} and Seq-\ref{condi:induced-edge-label} hold true;

\item  a \emph{sequence-$(A_M,B_q)$ total labelling} if Seq-\ref{condi:total-mapping}, Seq-\ref{condi:induced-edge-label} and Seq-\ref{condi:not-full-2} hold true;

\item  a \emph{full sequence-$(A_M,B_q)$ labelling} if Seq-\ref{condi:total-mapping}, Seq-\ref{condi:total-equation} and Seq-\ref{condi:half-full} hold true;

\item  a \emph{graceful sequence-$(A_M,B_q)$ labelling} if Seq-\ref{condi:vertex-mapping}, Seq-\ref{condi:induced-edge-label} and Seq-\ref{condi:all-full} hold true;

\item  a \emph{total sequence-$(A_M,B_q)$ labelling} if Seq-\ref{condi:total-mapping} and Seq-\ref{condi:total-equation} hold true;

\item  a \emph{sequence-$(A_M,B_q)$ $F$-total graceful labelling} if Seq-\ref{condi:total-mapping}, Seq-\ref{condi:total-equation} and Seq-\ref{condi:all-full} hold true;

\item  a \emph{sequence-$(A_M,B_q)$  mixed labelling} if Seq-\ref{condi:induced-edge-label}, Seq-\ref{condi:total-equation-edge} and Seq-\ref{condi:half-full} hold true.\qqed
\end{asparaenum}
\end{defn}

If two sets $A_M=\{a_i\}_1^M$ and $B_q=\{b_j\}_1^q$ defined in Definition \ref{defn:sequence-labelling} correspond a graph labelling admitted by graphs, we say $(A_M, B_q)$ a \emph{graph-realized sequence matching}.

\subsection{Topsnut-matchings labeled by graphs}

Let $\{G_i\}^m_1$ be a set of disjoint graphs $G_1,G_2,\dots ,G_m$, where $G_k$ admits a labelling $g_k$ with $k\in [1,m]$, and let  a $(p,q)$-graph $H$ be a \emph{base}.

\textbf{$\ast$ Addition and subtraction.} Define a mapping $f:V(H)\rightarrow \{G_i\}^m_1$, and an induced edge label $f'(uv)=F(G_i,G_j)$ for each edge $uv\in E(H)$, where $f(u)=G_i$ and $f(v)=G_j$. Hereafter, ``joining $G_i$ with $G_j$'' is defined as an operation of ``joining a vertex of a graph $G_i$ with some vertex of another graph $G_j$ by an edge''.
\begin{asparaenum}[Gr-1. ]
\item If $f'(E(H))=\{|i-j|:uv\in E(H)\}=[1,q]$, we call $H(f,f')$ a \emph{graceful graph-set labelling} (graceful gs-labelling). And, we have a \emph{graceful gs-compound} $G=\langle \{G_i\}^m_1;H(f,f')\rangle$ obtained by joining $G_i=f(u)$ with $G_j=f(v)$ for each edge $uv\in E(H)$. In general, the number of such graphs $G=\langle \{G_i\}^m_1;H(f,f')\rangle$ is not one.
\item If $f'(E(H))=\{|i-j|:uv\in E(H)\}=[1,2q-1]^o$, then $H(f,f')$ is called an \emph{odd-graceful graph-set labelling} (odd-graceful gs-labelling), and we have an \emph{odd-graceful gs-compound} $G=\langle \{G_i\}^m_1;H(f,f')\rangle$.
\item If $f'(E(H))=\{i+j~(\bmod~q):uv\in E(H)\}=[0,q-1]$, we have a \emph{felicitous graph-set labelling} (felicitous gs-labelling) $H(f,f')$, and a \emph{felicitous gs-compound} $G=\langle \{G_i\}^m_1;H(f,f')\rangle$.
\item If $f'(E(H))=\{i+j~(\bmod~2q-1):uv\in E(H)\}=[0,2q-3]^o$, we have an \emph{odd-elegant graph-set labelling} (felicitous gs-labelling) $H(f,f')$, and an \emph{odd-elegant gs-compound} $G=\langle \{G_i\}^m_1;H(f,f')\rangle$.
\end{asparaenum}

\textbf{$\ast$ Magic type.} Define a mapping $f:V(H)\cup E(H)\rightarrow \{G_i\}^m_1$, such that $f(u)=G_i$, $f(v)=G_j$ and $f(uv)=G_k$ for each edge $uv\in E(H)$.
\begin{asparaenum}[Mg-1. ]
\item If there exists a constant $k^*$, such that $$i+k+j=k^*$$ for each edge $uv\in E(H)$, we call $f$ an \emph{edge-magic total graph-set labelling} (edge-magic total gs-labelling), the graph $G=\langle \{G_i\}^m_1;H(f)\rangle$ is obtained by joining $G_i$ with $G_k$ and joining $G_k$ with $G_j$ for each edge $uv\in E(H)$ is called an \emph{edge-magic total gs-compound}.
\item If there exists a constant $k^*$ such that $|i-k+j|=k^*$ for each edge $uv\in E(H)$, we call $f$ an \emph{edge-magic total graceful graph-set labelling} (edge-magic total graceful gs-labelling), the graph $G=\langle \{G_i\}^m_1;H(f)\rangle$ is obtained by joining $G_i$ with $G_k$ and joining $G_k$ with $G_j$ for each edge $uv\in E(H)$ is called an \emph{edge-magic total graceful gs-compound}.
\item If there exists a constant $k^*$ such that $k+|i-j|=k^*$ for each edge $uv\in E(H)$, we call $f$ an \emph{edge-magic graceful total graph-set labelling} (edge-magic graceful total gs-labelling), the graph $G=\langle \{G_i\}^m_1;H(f)\rangle$ is obtained by joining $G_i$ with $G_k$ and joining $G_k$ with $G_j$ for each edge $uv\in E(H)$ is called an \emph{edge-magic graceful total gs-compound}.
\end{asparaenum}

\textbf{$\ast$ $(k,d)$ type.} It is not difficult to imitate those well-defined graph labellings having parameters, so it is interesting to mention them as exercise.

\subsection{Topsnut-matchings produced by graph operations}
\begin{asparaenum}[Op-1. ]
\item \textbf{Odd-graceful/odd-elegant graph matching.} Let $S_{og}(p)$ be a set of odd-graceful/odd-elegant graphs of $m$ vertices. A $(p,q)$-graph $G$ admits a graph labelling $f:V(G)\rightarrow S_{og}(p)$ such that each edge $uv\in E(G)$ labeled as $f(uv)=\odot\langle f(u),f(v)\rangle$ is just a twin odd-graceful/odd-elegant graph of $k_{uv}$ vertices, where $(f(u),f(v))$ is just an \emph{odd-graceful/odd-elegant Topsnut-matching}. If $\{k_{uv}:uv \in E(G)\}=[a,b]$, we say the graph $\langle G\odot S_{og}(p)\rangle$ obtained by joining $f(u)$ with $f(uv)$ and joining $f(uv)$ with $f(v)$ for each edge $uv\in E(G)$ an \emph{$[a,b]$-twin odd-graceful/odd-elegant graph}.

\item \textbf{Euler graph matching.} Let $E_{g}$ be a set of non-eulerian graphs. A $(p,q)$-graph $G$ admits a graph labelling $f:V(G)\rightarrow E_{g}$, and induced edge label $f(uv)=\odot\langle f(u),f(v)\rangle$ is just an Euler graph of $k_{uv}$ vertices, we call $(f(u),f(v))$ an \emph{Euler Topsnut-matching}. The graph $\langle G\odot E_{g}\rangle$ obtained by joining $f(u)$ with $f(uv)$ and joining $f(uv)$ with $f(v)$ for each edge $uv\in E(G)$ an \emph{$[a,b]$-Euler graph}, where $\{k_{uv}:uv \in E(G)\}=[a,b]$.

\item \textbf{Hamilton graph matching.} Let $H_{ag}$ be a set of graphs. A $(p,q)$-graph $G$ admits a graph labelling $f:V(G)\rightarrow H_{ag}$, and induced edge label $f(uv)=\odot\langle f(u),f(v)\rangle$ is just a Hamilton graph of $k_{uv}$ vertices, we call $(f(u),f(v))$ a \emph{Hamilton Topsnut-matching}. The graph $\langle G\odot E_{ag}\rangle$ obtained by joining $f(u)$ with $f(uv)$ and joining $f(uv)$ with $f(v)$ for each edge $uv\in E(G)$ is an \emph{$[a,b]$-Hamilton graph}, where $\{k_{uv}:uv \in E(G)\}=[a,b]$.
\item \textbf{Pan-matching with operation $(\bullet)$.} Let $P_{ag}$ be a set of graphs. A $(p,q)$-graph $G$ admits a graph labelling $F:V(G)\rightarrow H_{ag}$, and induced edge label $F(uv)=F(u)(\bullet)F(v)$ is just a graph having a P-matching, where $(\bullet)$ is an operation. Here, a P-matching may be: a perfect matching of $k_{uv}$ vertices, $k_{uv}$-cycle, $k_{uv}$-connected, $k_{uv}$-edge-connected, $k_{uv}$-colorable, edge $k_{uv}$-colorable, total $k_{uv}$-colorable, $k_{uv}$-regular, $k_{uv}$-girth, $k_{uv}$-maximum degree, $k_{uv}$-clique, $\{a;b\}_{uv}$-factor, a maximal planar graph of $k_{uv}$ vertices, and so on. We call the graph $\langle G(\bullet) P_{ag}\rangle$ obtained by joining $F(u)$ with $F(uv)$ and joining $F(uv)$ with $F(v)$ for each edge $uv\in E(G)$ a \emph{P-matching $\{k_{uv}\}$-graph}.
\end{asparaenum}

An $\{a;b\}$-factor is a spanning subgraph $H$ of a graph $G$ such that each vertex $x$ of $H$ has one of degree $a$ and degree $b$. Let $F:V(G)\rightarrow H_{ag}$, where $P_{ag}$ is a set of graphs, and let $F(uv)=\odot\langle F(u),F(v)\rangle$ be a graph having an $\{a;b\}$-factor for each edge $uv\in E(G)$ (see Fig.\ref{fig:graph-label-graph}).

\begin{figure}[h]
\centering
\includegraphics[height=2.6cm]{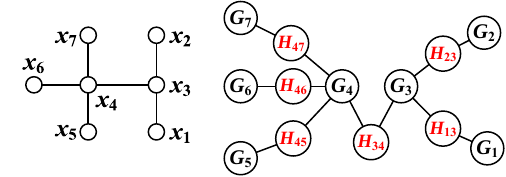}\\
\caption{\label{fig:graph-label-graph-0} {\small Left is a tree $T$, Right is a collection $T^*$ of graph-labelling graphs  based on $T$ and the intersect operation.}}
\end{figure}

\begin{figure}[h]
\centering
\includegraphics[height=5.8cm]{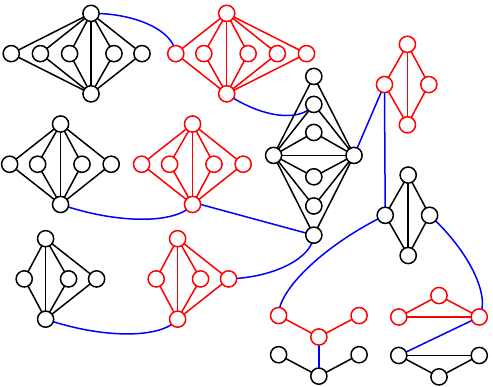}\\
\caption{\label{fig:graph-label-graph} {\small A graph $H$ is one of the collection $T^*$ shown in Fig.\ref{fig:graph-label-graph-0}.}}
\end{figure}

In Fig.\ref{fig:graph-label-graph-0} and Fig.\ref{fig:graph-label-graph}, a tree $T$ admits a graph labelling $F:V(T)\rightarrow F(G)$, where $F(x_i)=G_i$ with $i\in [1,7]$, and $F(x_ix_j)=F(x_i)\cap F(x_j)=G_i\cap G_j=H_{ij}$. The graph $H$ shown in Fig.\ref{fig:graph-label-graph} is one of the collection $T^*$ of graph-labelling graphs, since there are many ways to join two graphs by an edge. Here, $H_{13}=K_1+\overline{K}_2$ is a $\{2;1\}$-factor, $H_{23}=K_2+\overline{K}_2$ is a $\{2;2\}$-factor, $H_{34}=K_2+\overline{K}_2$ is a $\{2;3\}$-factor, $H_{45}=K_2+\overline{K}_3$ is a $\{2;4\}$-factor, $H_{46}=K_2+\overline{K}_4$ is a $\{2;5\}$-factor, and $H_{47}=K_2+\overline{K}_5$ is a $\{2;6\}$-factor. So, The edges of the graph $H$ form a  graceful sequence of $\{2;k\}$-factors with $k\in [1,6]$.

\begin{thm} \label{thm:a-b-factor-graph-labelling}
Each caterpillar with $q$ edges admits an $\{a;b\}$-factor graph labelling, where $\{a;b\}$ is a non-decreasing sequence-pair $\{a^*_i; b^*_i\}^q_1$.
\end{thm}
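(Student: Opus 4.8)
The plan is to realize the prescribed non-decreasing sequence-pair $\{a^*_i;b^*_i\}^q_1$ by a direct ``gadget'' construction of exactly the shape illustrated in Fig.\ref{fig:graph-label-graph}, where the operation $(\bullet)$ is graph intersection $\cap$. First I record the structure of the caterpillar: let $P=u_1u_2\cdots u_n$ be its spine and $L(u_i)=\{v_{i,1},\dots,v_{i,m_i}\}$ (with $m_i\geq 0$) the leaves attached at $u_i$, so that $V(T)=V(P)\cup\bigcup_{i=1}^{n}L(u_i)$ and $q=(n-1)+\sum_{i=1}^{n}m_i$. Reading, for $i=1,\dots,n$, first the leaf-edges $u_iv_{i,1},\dots,u_iv_{i,m_i}$ and then (if $i<n$) the spine-edge $u_iu_{i+1}$ lists the $q$ edges of $T$ in an order $e_1,e_2,\dots,e_q$; the edge $e_t$ is the one we shall force to carry the pair $\{a^*_t;b^*_t\}$.

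\textbf{Step 1 (realize each pair by one graph).} For each $t\in[1,q]$ I would exhibit a graph $B_t$ all of whose vertices have degree $a^*_t$ or $b^*_t$; then $B_t$ is itself an $\{a^*_t;b^*_t\}$-factor, hence contains one as a spanning subgraph, namely itself. If $a^*_t=b^*_t=:c$, take $B_t=K_{c+1}$; if (after swapping, since the pair is unordered) $a^*_t<b^*_t$, take the join $B_t=K_{a^*_t}+\overline{K}_{\,b^*_t-a^*_t+1}$, which has exactly $a^*_t$ vertices of degree $b^*_t$ and $b^*_t-a^*_t+1$ vertices of degree $a^*_t$ (the only genuinely degenerate case $a^*_t=0$ being handled by small ad hoc graphs such as $K_{b^*_t+1}$ together with isolated vertices). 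This step is routine and uses no minimality.

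\textbf{Step 2 (assemble the vertex-graphs and verify).} For every edge $e_t$ of $T$ take a copy $\widehat{B}_t$ of $B_t$ on a private vertex set $V_t$, with $V_t\cap V_{t'}=\emptyset$ for $t\neq t'$, and define a graph labelling $F$ of $T$ by
$$F(w)=\bigcup\{\widehat{B}_t:\ e_t\in E(T)\ \text{is incident with}\ w\}\qquad(w\in V(T)),$$
with induced edge labels $F(xy)=F(x)\cap F(y)$ as in Fig.\ref{fig:graph-label-graph}. To finish I check that $F(x)\cap F(y)=\widehat{B}_t$ for each edge $e_t=xy$: since $T$ is a tree, the edges incident with $x$ and those incident with $y$ meet only in $e_t$, so $\widehat{B}_t$ is the unique gadget shared by $F(x)$ and $F(y)$, while every other gadget of $F(x)$ lives on a vertex set disjoint from every gadget of $F(y)$; hence $F(x)\cap F(y)=\widehat{B}_t\cong B_t$, an $\{a^*_t;b^*_t\}$-factor. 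Letting $t$ run over $[1,q]$ shows that $F$ is an $\{a;b\}$-factor graph labelling of $T$ whose sequence of pairs along $e_1,\dots,e_q$ is precisely $\{a^*_i;b^*_i\}^q_1$. (If one also wants $F$ injective on $V(T)$, automatic unless $T=K_2$, append a private pendant edge to each $F(w)$; this alters no intersection.)

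The only step that needs real care is the verification in Step 2 — that the pairwise intersection of two vertex-graphs singles out precisely one gadget; this rests on placing the $q$ gadgets on pairwise disjoint vertex sets together with the triviality that two vertices of a tree span at most one edge. Everything else is elementary, and the same argument goes through verbatim with any simple graph in place of $T$ (and also admits a one-leaf-at-a-time inductive rephrasing, attaching $\widehat{B}_{t+1}$ as the next edge is created); we keep the caterpillar formulation because caterpillars are the structures of interest here and, via Lemma \ref{thm:graph-split-tree} and the graceful / odd-graceful constructions above, the ones most directly used to manufacture Topsnut-matchings.
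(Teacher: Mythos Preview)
Your proof is correct but takes a genuinely different route from the paper's. The paper labels the vertices of the caterpillar by single complete bipartite graphs $K(A,B)$, letting the parameters $A,B$ accumulate along the spine so that the label of each spine vertex $u_{i+1}$ contains (as a subgraph) both the label of $u_i$ and the labels $K(a_{i+1,j},b_{i+1,j})$ of its leaves; the intersection at each edge then returns a complete bipartite graph $K(a,b)$, which is itself an $\{a;b\}$-factor. Your construction instead places one gadget per edge on pairwise disjoint vertex sets and labels each vertex by the \emph{disjoint union} of its incident-edge gadgets, so that the intersection at an edge recovers exactly the unique shared gadget. The paper's approach yields connected vertex labels and genuinely uses the spine ordering together with the non-decreasing hypothesis (the cumulative sums $A(m_k),B(m_k)$ must grow); your approach is simpler to verify, works verbatim for any simple graph (as you note), and does not need the monotonicity of $\{a^*_i;b^*_i\}$, at the cost of producing vertex labels that are typically disconnected.
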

\begin{proof} A caterpillar $T$ shown in Fig.\ref{fig:caterpillar-general} contains a path $P=u_1u_2\cdots u_n$, and each set of leaves $v_{i,j}$ adjacent to a vertex $u_i$ is denoted as $L(u_i)=\{v_{i,j}:j\in [1,m_i]\}$ with $m_i\geq 0$ and $i\in [1,n]$. We take a non-decreasing sequence-pair $\{a_{i,j}\}$ and $\{b_{i,j}\}$ for $j\in [1,m_i]$ and $i\in [1,n]$. Each complete bipartite graph $K_{a_{i,j},b_{i,j}}$ is written as $K(a_{i,j},b_{i,j})$ for convenient statement. We define a graph labelling $F$ on $T$ as follows:

For $i:=1$, we set
$$F(u_1)=K(c_{1,2}+A(m_1),d_{1,2}+B(m_1))$$ with $A(m_1)=\sum^{m_1}_{j=1}a_{1,j}$ and $B(m_1)=\sum^{m_1}_{j=1}b_{1,j}$, and set $F(v_{1,j})=K(a_{1,j},b_{1,j})$ and $F(u_1v_{1,j})=F(u_1)\cap F(v_{1,j})$ with $j\in [1,m_1]$.

For $i:=i+1$, we set
$${
\begin{split}
F(u_{i+1})=&K\biggr (\sum^{i+1}_{k=1}A(m_k)+\sum ^{i}_{k=1}c_{k,k+1},\\
&\sum^{i+1}_{k=1}B(m_k)+\sum ^{i}_{k=1}d_{k,k+1}\biggr  )
\end{split}}$$
with $A(m_k)=\sum^{m_k}_{j=1}a_{k,j}$ and $B(m_k)=\sum^{m_k}_{j=1}b_{k,j}$, and set $F(v_{i+1,j})=K(a_{i+1,j},b_{i+1,j})$ and $F(u_1v_{i+1,j})=F(u_{i+1})\cap F(v_{i+1,j})$ with $j\in [1,m_{i+1}]$.

At the last, we let
$${
\begin{split}
F(u_{n})=&K\biggr (1+\sum^{n}_{k=1}A(m_k)+\sum ^{n-1}_{k=1}c_{k,k+1},\\
&\sum^{n}_{k=1}B(m_k)+\sum ^{n-1}_{k=1}d_{k,k+1}\biggr )
\end{split}}$$ and set $F(v_{n,j})=K(a_{n,j},b_{n,j})$ and $F(u_nv_{n,j})=F(u_{n})\cap F(v_{n,j})$ with $j\in [1,m_{n}]$.

Let $\alpha(r)=\sum^r_{k=1}m_k$. We write $a^*_{j}=a_{1,j}$ and $b^*_{1,j}=b_{1,j}$ with $j\in [1,m_1]$; $c_{12}=a^*_{1+\alpha(1)}$ and $d_{12}=b^*_{1+\alpha(1)}$; $a^*_{1+\alpha(1)+j}=a_{2,j}$ and $b^*_{1+\alpha(1)+j}=b_{2,j}$ with $j\in [1,m_2]$; $c_{2,3}=a^*_{2+\alpha(2)}$ and $d_{2,3}=b^*_{2+\alpha(2)}$; $\cdots $; $c_{n-1,n}=a^*_{n-1+\alpha(n-1)}$ and $d_{n-1,n}=b^*_{n-1+\alpha(n-1)}$; $a^*_{n-1+\alpha(n-1)+j}=a_{n,j}$ and $b^*_{n-1+\alpha(n-1)+j}=b_{n,j}$ with $j\in [1,m_n]$. Furthermore, we let $a^*_{k}\leq a^*_{k+1}$ and $b^*_{k}\leq b^*_{k+1}$ with $k\in [1, n-1+\alpha(n)]$.

Thereby, we have shown the result of the theorem.
\end{proof}

We give the values of $\{a^*_i\}^q_1$ and $\{b^*_i\}^q_1$ in Theorem \ref{thm:a-b-factor-graph-labelling} in the following:

(i) $\{a^*_i\}^q_1=\{a\}$ and $\{b^*_i\}^q_1=[1,q]$, so each caterpillar with $q$ edges admits a \emph{graceful $\{a;b\}$-factor graph labelling};

(ii) $\{a^*_i\}^q_1=\{a\}$ and $\{b^*_i\}^q_1=[1,2q-1]^o$, then every caterpillar with $q$ edges admits an \emph{odd-graceful $\{a;b\}$-factor graph labelling};

(iii) $\{a^*_i\}^q_1=[1,q]$ and $\{b^*_i\}^q_1=[1,q]$, then each caterpillar with $q$ edges admits an \emph{bi-graceful $\{a;b\}$-factor graph labelling};

(iv) $\{a^*_i\}^q_1=[1,2q-1]^o$ and $\{b^*_i\}^q_1=[1,2q-1]^o$, then each caterpillar with $q$ edges admits an \emph{bi-odd-graceful $\{a;b\}$-factor graph labelling}.

Notice that there are many graphs containing $\{a;b\}$-factors, in general. We can take well-known sequences (such as Fibonacci sequence, arithmetic progression, geometric progression, etc.) to replace $\{a^*_i\}^q_1$ and $\{b^*_i\}^q_1$ for getting more  interesting Topsnut-gpws. Moreover, it is not difficult to prove: \emph{Each lobster with $q$ edges admits an $\{a;b\}$-factor graph labelling, where $\{a,b\}$ is some non-decreasing sequence-pair $\{a^*_i; b^*_i\}^q_1$}.

\subsection{Properties of Labellings}

The previous subsections show the labellings with the following properties. Other properties of labellings can be found in \cite{Gallian2016}. Let $k$ be a constant, and let $G$ be a $(p,q)$-graph admitting a labelling $f$. We have:
\begin{asparaenum}[C-1. ]
\item (e-magic-graceful) Each edge $uv$ matches with another edge $xy$ such that $f(uv)+|f(x)-f(y)|=k$.
\item (e-magic) Each edge $uv$ matches with another edge $xy$ such that $f(x)+f(uv)+f(y)=k$.
\item (ee-graceful) Each edge $uv$ matches with another edge $xy$ holding $|f(x)+f(y)-f(uv)|=k$ true.
\item (ee-difference) Each edge $uv$ matches with another edge $xy$ holding $f(uv)=|f(x)-f(y)|$ true, or $f(uv)=M-|f(x)-f(y)|$.
\item (ee-sum) Each edge $uv$ matches with another edge $xy$ holding $f(uv)=f(x)+f(y)~(\bmod~B)$ true,  such that the resulting edge labels are distinct and nonzero.
\item (ep-matching) Each matching edge $uv\in M$ holds $f(u)+f(v)=k$ true, where $M$ is a perfect matching of $G$, and $k$ is some constant.
\item (ee-bandwiden) Each edge $uv$ matches with another edge $u'v'$ holding $s(uv)+s(u'v')=0$ true, where $s(xy)=|f(x)-f(y)|-f(xy)$.
\item (ve-matching) Each edge $uv$ matches with one vertex $w$ such that $f(uv)+f(w)=k'$, and vice versa, except the \emph{singularity}.
\item (EV-ordered) There two orders:

\quad (i) $f_{\max}(V(G))<f_{\min}(E(G))$, or $f_{\min}(V(G))>f_{\max}(E(G))$;

\quad (ii) $ f(V(G))\subseteq f(E(G))$, $ f(E(G))\subseteq f(V(G))$.

\item (set-ordered) $f_{\max}(X)<f_{\min}(Y)$, or $f_{\min}(X)>f_{\max}(Y)$ if $G$ is bipartite with its partition $(X,Y)$ of $V(G)$.
\item (reciprocal-inverse) $h(V(G)\cup E(G))=[1, p+q]$,
$$
h(V(G))\setminus \{a_0\}=f(E(G)),~h(E(G))=f(V(G))\setminus \{a_0\}
$$ where $a_0=\lfloor (p+q+1)/2\rfloor $ is the singularity of two labellings $f$ and $h$.
\item (odd-even separable) $f(V(G)\cup E(G))=[1, p+q]$, and $f(V(G))$ is an odd-set containing only odd numbers, as well as $f(E(G))$ is an even-set containing only even numbers.
\end{asparaenum}

\subsection{Some indices for analyzing graph labellings}

We design parameters for theoretically metricizing Topsnut-gpws, such as:
\begin{asparaenum}[Deg-1. ]
\item \emph{Difficulty}. A labelling $f$ holds $m$ conditions, we say $f$ to be $m$-rank difficulty.
\item \emph{Complexity.} A labelling $f$ holds $m$ conditions, each condition has a complex rank, summarizing them together forms the whole complex rank.
\item \emph{Constructibility and non-constructibility.} It includes configuration construction (with no polynomial algorithm in general) and structural construction  (with polynomial algorithm), constructive labelling. Conversely, it includes non-structural construction, non-constructive labelling.
\item \emph{Computationally unbreakable.} Consider giant spaces, no-constructive algorithms, non-mathematical interventions (physics, chemistry, biology, music, national language).
\item \emph{Matching.} Twin odd-graceful labelling, reciprocal-inverse labellings, other matchings mentioned here, and so on.
\item \emph{Combinatorics.} Twin type of labellings, such as twin odd-graceful and twin  odd-elegant labellings. Various combinatorics induce many labellings, such as 6C-labellings.
\item \emph{Closure to property and configuration.} Labellings are closed to particular graphs, or graph properties, or labelling properties, and so on.
\item \emph{Connections with others.} There are: (i) canonical mathematical operations, such as addition, subtraction, multiplication and division; (ii) graph operations, such as union, intersection, split, subdivision, and so on; (iii) advanced algebraic operations, such as group, ring and field; (iv) text-based passwords; (v) between labellings, such as equivalence, transformation \emph{etc.}
\item \emph{Compound.} Graphs are labeled by Topsnut-gpws and graphic groups \emph{etc.}
\item \emph{Transformation.} For example, $f$ is set-ordered on $(X,Y)$, so we have an affine transformation $g$ defined by $g(x)=af(x)+b$ for $x\in X$, $g(y)=cf(y)+d$ for $y\in Y$.
\item \emph{Generalization and diversity.} What is a hyperlabelling? What is a network labelling? What is an random labelling? What is a functional (chemistry, physical, biological) labelling?
\end{asparaenum}

\section{Algebraic group/set matching partitions}

Many problems of Topsnut-gpws can be transformed into algebraic problems, such as set problems and algebraic group problems, etc. However, the research of algebraic group/set problems differs greatly from that of Topsnut-gpws. On the other hands, investigating the problems proposed in this subsection does not need knowledge of graph theory, only basic mathematical knowledge.

\subsection{Set matching partitions}

Set matching partition is a natural phenomenon in mathematics, such as an integer set $[1,10]$ contains two subsets $[1,10]^o=\{1,3,5,7,9\}$ and $[1,10]^e=\{2,4,6,8,10\}$. Clearly, $[1,10]^o\cup [1,10]^e=[1,10]$, we say $[1,10]^o$ and $[1,10]^e$ are matching to each other, they are a \emph{set matching partition} of $[1,10]$. Many graph labellings are related with set problems. For two integers $p\geq 2$ and $q\geq 1$, the previous labelling definitions enable us to obtain the following set problems:

\begin{asparaenum}[Set-1. ]
\item From Definition \ref{defn:relaxed-Emt-labelling}: Partition $[1, p+q]$ into two disjoint subsets $V$ and $E$ with $V\cup E=[1, p+q]$ such that: (i) each $c\in E$ corresponds to distinct $a,b\in V$ holding $c=|a-b|$ true; (ii) there exists a constant $k$, each $c'\in E$ matches with distinc $a',b'\in V$ holding $a'+c'+b'=k$ true. We call $(V,E)$ a \emph{relaxed edge-magic total matching partition} of $[1, p+q]$.

\qquad $\ast$ Find all possible relaxed edge-magic total matching partitions $(V,E)$ of $[1, p+q]$.

\item From Definition \ref{defn:Oemt-labelling}: Select two subsets $V,E\subset [1,2q-1]$ with $E=[1,2p-1]^o$ such that there is a constant $k$, each $c\in E$ corresponds two distinct $a,b\in V$ holding $a+b+c=k$ true. We call $(V,E)$ an \emph{odd-edge-magic matching partition} of $[1,2q-1]$.

 \qquad $\ast$ Find all possible odd-edge-magic matching partitions of $[1,2q-1]$.

\item From Definition \ref{defn:relaxed-Oemt-labelling}: Selecting two subsets $V,E\subset [1,2q-1]$ with $E=[1,2p-1]^o$ holds true: (i) each $c\in E$ corresponds $a,b\in V$ to form an \emph{ev-matching} $(acb)$; (ii) each $c\in E$ corresponds to $z\in E$ with the ev-matching $(xzy)$ such that $c=|x-y|$; (iii) each $c\in E$ with the ev-matching $(acb)$ corresponds to $c'\in E$ with the ev-matching $(a'c'b')$ such that $$(|a-b|-c)+(|a'-b'|-c')=0.$$ We call $(V,E)$ an \emph{ee-difference odd-edge-magic matching partition} of $[1,2q-1]$.

 \qquad $\ast$ Find all possible ee-difference odd-edge-magic matching partitions $(V,E)$ of $[1,2q-1]$. Here, each ev-matching $(acb)$ corresponds an edge $c$ of a graph, where the edge $c$ has two ends $a,b$.

\item From Definition \ref{defn:6C-labelling}: Partition $[1,p+q]$ into two subsets $V,E$ satisfies: (i) each $c\in E$ corresponds $a,b\in V$ to form an \emph{ev-matching} $(acb)$; (ii) (e-magic) each $c\in E$ with the ev-matching $(acb)$ hold $c+|a-b|=k$ true; (iii) (ee-difference) each $c\in E$ corresponds to $z\in E$ with the ev-matching $(xzy)$ such that $c=|x-y|$; (iv) (ee-bandwiden) each $c\in E$ with the ev-matching $(acb)$ corresponds to $c'\in E$ with the ev-matching $(a'c'b')$ such that $(|a-b|-c)+(|a'-b'|-c')=0$; (iv) (EV-ordered) $\max V<\min E$ (or $\max V>\min E$); (v) (ve-matching) each ev-matching $(acb)$ matches with another ev-matching $(uwv)$ such that $a+w=k'$ or $b+w=k'$, $k'$ is a constant, except the \emph{singularity} $\lfloor \frac{p+q+1}{2}\rfloor $. We call $(V,E)$ a \emph{6C-partition} of $[1,p+q]$.

 \qquad For a given 6C-partition $(V,E)$ of $[1,p+q]$, if there exists another 6C-partition $(V',E')$ of $[1,p+q]$ such that $$V\setminus (V\cap V')=E',~E=V'\setminus (V\cap V')$$ for $V\cap V'=\{\lfloor \frac{p+q+1}{2}\rfloor \}$, we get a partition $(V\cup E', E\cup V)$, and call it a \emph{6C-complementary matching partition} of $[1,p+q]$.

 \qquad $\ast$ Find all possible 6C-partitions $(V,E)$ of $[1,p+q]$, and all possible 6C-complementary matching partitions $(V\cup E', E\cup V)$.

\item From Definition \ref{defn:Dgemm-labelling}: Partition $[0,p+q-1]$ into two subsets $V,E$ with $E=[1,q]$ and $V\subseteq [0,q-1]$ satisfies: (i) each $c\in E$ corresponds $a,b\in V$ to form an \emph{ev-matching} $(acb)$; (ii) (ee-difference) each $c\in E$ corresponds to $z\in E$ with the ev-matching $(xzy)$ such that $c=|x-y|$; (iii) (ee-bandwiden) each $c\in E$ with the ev-matching $(acb)$ corresponds to $c'\in E$ with the ev-matching $(a'c'b')$ such that $(|a-b|-c)+(|a'-b'|-c')=|a-b|+|a'-b'|-(c+c')=0$; (iv) there exists a constant $k$ such that each $c\in E$ with its ev-matching $(acb)$ holds $c+|a-b|=k$ true; (v) each $c\in E$ corresponds another $c'\in E$ with its ev-matching $(a'c'b')$ such that $c+a'=p$ or $c+b'=p$. We call $(V,E)$ an \emph{ee-difference graceful-magic matching partition} of $[0,p+q-1]$.

 \qquad $\ast$ Find all possible ee-difference graceful-magic matching partitions of $[0,p+q-1]$.

\item From Definition \ref{defn:ve-exchanged-labelling}: Partition $[1, p+q]$ into two disjoint subsets $V$ and $E$ with $V\cup E=[1, p+q]$ such that each $c\in E$ corresponds to distinct $a,b\in V$ holding $c=|a-b|$ true, and there exists a constant $k$ satisfying $a+c+b=k$ for each $c\in E$ which corresponds to distinct $a,b\in V$. We call $(V,E)$ an \emph{edge-magic graceful matching partition} of $[1, p+q]$. If $(E, V)$ is another edge-magic graceful matching partition of $[1, p+q]$, we say $(V,E)$ (resp. $(E, V)$) to be a \emph{ve-exchanged matching partition} of $[1, p+q]$.

 \qquad $\ast$ Find all possible edge-magic graceful matching partitions of $[1, p+q]$, and all possible ve-exchanged matching partitions.
\item If there are two sets $V\subseteq [0,q]^2$ (or $[0,2q-1]^2$) and $E\subseteq [1,q]$ (or $[1,2q-1]$) such that each $c\in E$ with its ev-matching $(acb)$ holds $c=|a-b|$ true, where $a\in A\in V$ and $b\in B\in V$ with $A\cap B=\emptyset$, then we call $(V,E)$ a \emph{v-set e-proper graceful (or odd-graceful) matching partition}.

\item From the twin odd-graceful/odd-elegant labellings: Partition $[0,2q]$ into two subsets $S_1,S_2$ such that $S_1\subset [0,2q-1]$, $S_2\subset [1,2q]$, $|S_1\cap S_2|=1$ and $S_1\cup S_2=[0,2q]$. For $E_1=E_2=[1,2q-1]^o$, each $c_i\in E_i$ corresponds two numbers $a_i,b_i\in S_i$ holding $c_i=|a_i-b_i|$ true (or $c_i=a_i+b_i~(\bmod~2q)$) with $i=1,2$. We call $(S_1,S_2)$ a \emph{twin odd-graceful (or odd-elegant) matching partition} of $[0,2q]$.

 \qquad $\ast$ Characterize twin odd-graceful (or odd-elegant) matching partitions, and find them.

\item From Definitions \ref{defn:difference-sum-labelling} and \ref{defn:felicitous-sum-labelling}: Select a subset $E\subset [0,p-1]$ such that each $c\in E$ corresponds two distinct $a,b\in V=[0,p-1]$ to hold $c=|a-b|$ true (or $c=a+b~(\bmod~|E|)$), we call $f_E=(V,E)$ a \emph{graph matching partition}, and call $S_{um}(G,f_E)=\sum_{c\in E}|a-b|$ a \emph{difference-sum number} (or $F_{um}(G,f_E)=\sum_{c\in E}(a+b)~(\bmod~|E|)$ is a \emph{felicitous-sum number}).

 \qquad $\ast$ Determine $\max_{f_E} S_{um}(G,f_E)$ (profit) and $\min_{f_E} S_{um}(G,f_E)$ (cost) over all difference-sum matching partitions $f_E=(V,E)$ of $[0,p-1]$. Find $\max_{f_E} F_{um}(G,f_E)$ and $\min_{f_E} F_{um}(G,f_E)$ over all felicitous-sum matching partitions $f_E=(V,E)$ of $[0,p-1]$.
\end{asparaenum}

It may be interesting to consider such algebraic groups on the above set partition problems.

\subsection{Matching partitions of algebraic matrices}

We introduce an algebraic expression of a Topsnut-gpw $G$ being a $(p,q)$-graph as follows:
\begin{defn}\label{defn:Topsnut-matrix}
A \emph{Topsnut-matrix} $A_{vev}(G)$ of  a Topsnut-gpw $G$ being a $(p,q)$-graph is defined as
\begin{equation}\label{eqa:a-formula}
\centering
A_{vev}(G)= \left(
\begin{array}{ccccc}
x_{1} & x_{2} & \cdots & x_{q}\\
e_{1} & e_{2} & \cdots & e_{q}\\
y_{1} & y_{2} & \cdots & y_{q}
\end{array}
\right)=(X~W~Y)^{T}
\end{equation}\\
where
\begin{equation}\label{eqa:two-vectors}
{
\begin{split}
&X=(x_1 ~ x_2 ~ \cdots ~x_q), W=(e_1 ~ e_2 ~ \cdots ~e_q)\\
&Y=(y_1 ~ y_2 ~\cdots ~ y_q),
\end{split}}
\end{equation}
and $G$ has another \emph{Topsnut-matrix} $A_{vv}(G)$ defined as $A_{vv}(G)=(X,Y)^{T}$, where $X,Y$ are called \emph{vertex-vectors}, $W$ is called \emph{edge-vector}, such that $e_i=x_iy_i$ is an edge of $G$ for $i\in [1,q]$.\qqed
\end{defn}

So, $A_{vev}(G)$ is a Topsnut-matrix, and $A_{vv}(G)$ is a Topsnut-matrix. Clearly, such Topsnut-matrices are easily saved in computer, and produce quickly text-based passwords. For example, we have the following four Topsnut-matrices $A_1,A_2,B_1$ and $B_2$ from Fig.\ref{fig:matrix-A1} to Fig.\ref{fig:matrix-B2}:

\begin{figure}[h]
\centering
\includegraphics[height=1.8cm]{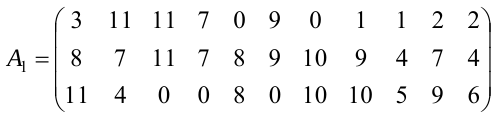}\\
\caption{\label{fig:matrix-A1} {\small A Topsnut-matrix $A_1$ for $G_a$ in the perfect Max-min difference-sum matching partition $A$ shown in Fig.\ref{fig:one-conjecture}.}}
\end{figure}

\begin{figure}[h]
\centering
\includegraphics[height=1.8cm]{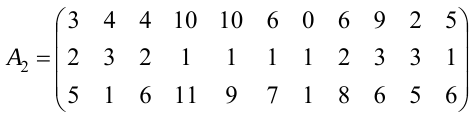}\\
\caption{\label{fig:matrix-A2} {\small A Topsnut-matrix $A_2$ for $G_d$ in the perfect Max-min difference-sum matching partition $A$ shown in Fig.\ref{fig:one-conjecture}.}}
\end{figure}

\begin{figure}[h]
\centering
\includegraphics[height=2cm]{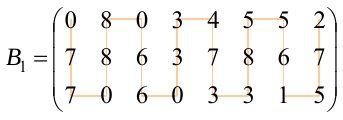}\\
\caption{\label{fig:matrix-B1} {\small A Topsnut-matrix $B_1$ for $F_b$ in the perfect Max-min felicitous-sum matching partition $B$ shown in Fig.\ref{fig:one-conjecture}.}}
\end{figure}

\begin{figure}[h]
\centering
\includegraphics[height=2cm]{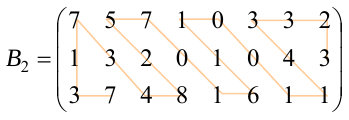}\\
\caption{\label{fig:matrix-B2} {\small A Topsnut-matrix $B_1$ for $F_c$ in the perfect Max-min felicitous-sum matching partition $B$ shown in Fig.\ref{fig:one-conjecture}.}}
\end{figure}

We point out: (i) A Topsnut-matrix $A_{vev}(G)$ is not unique for expressing a Topsnut-gpw $G$, in other words, a Topsnut-gpw $G$ many have two or more Topsnut-matrices; (ii) Topsnut-matrices differ from popular algebraic matrices, since Topsnut-matrices are only the expression of labelled vertices joined by labelled edges. Clearly, we need some new algebraic operations on Topsnut-matrices.

Let $D(A)$ be the matrix of a perfect Max-min difference-sum matching partition $A=\odot_{12}\langle G_a,G_d\rangle $ shown in Fig.\ref{fig:one-conjecture}. So, $D(A)$ is a Topsnut-matrix, denoted directly as $D(A)=\odot_{12}\langle A_1,A_2\rangle$ (see Fig.\ref{fig:matrix-A1} and Fig.\ref{fig:matrix-A2}), called a \emph{matrix matching partition}. Similarly, the matrix $D(B)$ of the perfect Max-min felicitous-sum matching partition $B=\odot_{9}\langle F_b,F_c\rangle $ shown in Fig.\ref{fig:one-conjecture} is a Topsnut-matrix, and we have another \emph{matrix matching partition} $D(B)=\odot_{8}\langle B_1,B_2\rangle$ (see Fig.\ref{fig:matrix-B1} and Fig.\ref{fig:matrix-B2}).

Along the orange line in the matrix $B_1$, we can get a text-based password
$$T_{ext}(B_1)=077088066033473385561572$$ and another text-based password
$$T_{ext}(B_2)=731734825701611001143323$$ obtained along the orange line in the matrix $B_2$. Obviously, it is not easy to reconstruct the perfect Max-min felicitous-sum matching partition $B$ shown in Fig.\ref{fig:one-conjecture} from $T_{ext}(B_1)$ and $T_{ext}(B_2)$, even it is impossible if Topsnut-gpws with large numbers of vertices and edges.

By the vertex-split and vertex-identifying operations, as well as the edge-split and edge-identifying operations, we can define algebraic operations on Topsnut-matrices of $(p,q)$-graphs that are topological structures of Topsnut-gpws, such as $D(A)=\odot_{12}\langle A_1,A_2\rangle$ and $D(B)=\odot_{8}\langle B_1,B_2\rangle$ obtained by the \emph{vertex-identifying operation of $(3\times q)$-matrices}.

\subsection{Topsnut-matchings made by graphic groups}

Let $T^{odd}_{group}$ be a set of odd-graceful Topsnut-groups. We define a labelling $f:V(G)\rightarrow T^{odd}_{group}$ for a $(p,q)$-graph $G$, and set $f(uv)=\odot\langle f(u),f(v)\rangle$ to be a matching of two odd-graceful Topsnut-groups $G^{odd}_i$ and $G^{odd}_j$, here, each $T_i\in G^{odd}_i$ matches with $T_j\in G^{odd}_j$ such that $\odot \langle T_i,T_j\rangle$ is just an odd-graceful Topsnut-matching, and vice versa.

For encrypting a network by graphic groups we show a simple example in Fig.\ref{fig:path-gpw-12}, Fig.\ref{fig:network-group-0} and Fig.\ref{fig:network-group-1}. We have an operation defined by
\begin{equation}\label{eqa:additive-operation-abelian}
[f_i(x)+f_j(x)-f_k(x)]~(\bmod~13)=f_{\lambda}(x)
\end{equation}
for each element $x\in V(G)\cup E(G)$ shown in Fig.\ref{fig:path-gpw-12}, where $\lambda=i+j-k~(\bmod~13)$, and call (\ref{eqa:additive-operation-abelian}) ``\emph{additive operation}''. We can see that there are many ways to realize a network encrypted by a graphic group, since there are two or more ways to join $G_i$ with $G_j$ by an edge (allow by two or more edges). Thereby, we have obtained many encrypted networks.

\begin{figure}[h]
\centering
\includegraphics[width=6cm]{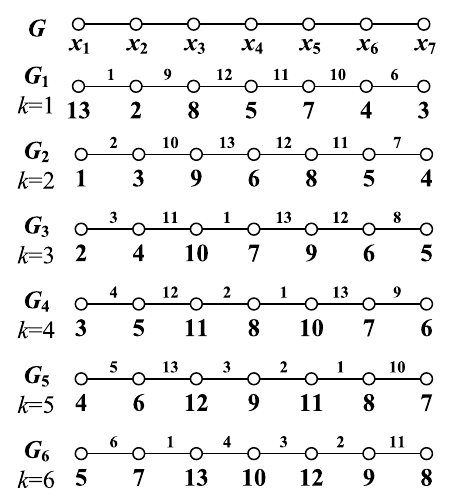}\\
\includegraphics[width=6cm]{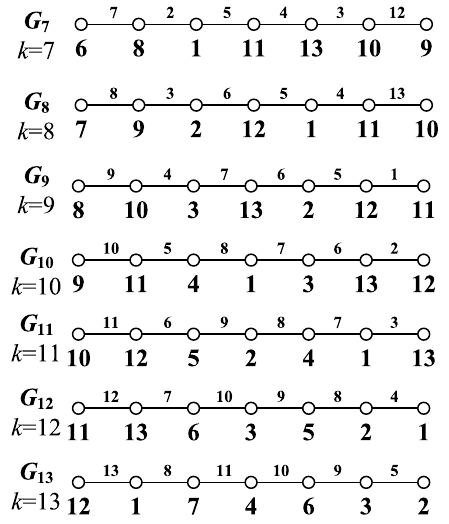}\\
\caption{\label{fig:path-gpw-12} {\small A graphic group based on a path $G$ and the edge-magic graceful labelling, each $G_i$ admits a pan-edge-magic graceful labelling $f_i$ under modulo 13.}}
\end{figure}

\subsection{Algebraic groups from Topsnut-gpws, Topsnut-matrices and text-based passwords}

We have known that a Topsnut-gpw $G$ has its Topsnut-matrix $A(G)$ which induces a text-based password $D(G)$. So this  Topsnut-gpw $G$ and its Topsnut-matrix $A(G)$, as well as the text-based password $D(G)$ can produce three Abelian additive groups by the additive operation shown in (\ref{eqa:additive-operation-abelian}), we call them \emph{Topsnut-group, Topsnut-matrix group and Text-pw group}, respectively. If a Topsnut-gpw $G$ matches with another Topsnut-gpw $H$, so two Topsnut-groups induced by $G$ and $H$ match with each other. More results on such groups can be found in \cite{Sun-Zhang-Zhao-Yao-2017} and \cite{Yao-Mu-Sun-Zhang-Wang-Su-Ma-2018}.

\begin{figure}[h]
\centering
\includegraphics[height=2.8cm]{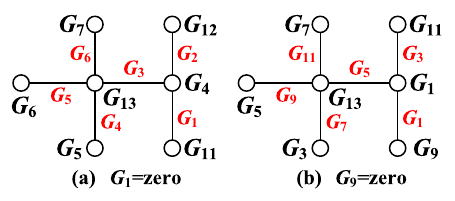}\\
\caption{\label{fig:network-group-0} {\small A tree admits: (a) a graceful group labelling based on the zero $G_1$ shown in Fig.\ref{fig:path-gpw-12}; (b) an odd-graceful group labelling based on the zero $G_9$ shown in Fig.\ref{fig:path-gpw-12}.}}
\end{figure}

\begin{figure}[h]
\centering
\includegraphics[height=7.4cm]{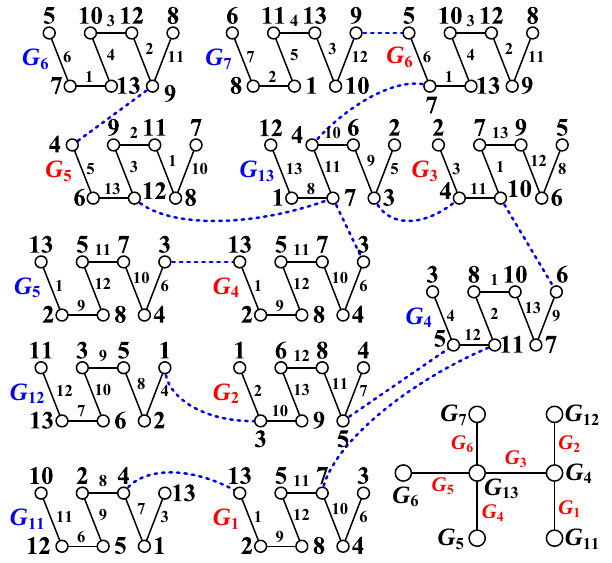}\\
\caption{\label{fig:network-group-1} {\small A network encrypted by a graphic group shown in Fig.\ref{fig:path-gpw-12}.}}
\end{figure}

\section{Researching problems}

For further researching Topsnut-matchings we propose the following problems:
\begin{asparaenum}[\textbf{Pro}-1. ]
\item (A complete graph obtained from labeled trees) Given disjoint trees $T_1,T_2, \dots, T_m$ with
$$\sum^m_{i=1}|E(T_i)|\geq \frac{1}{2}n(n-1).$$ Can we find a labelling $f_i$ for each tree $T_i$ such that $f_i:V(T_i)\rightarrow [0,n-1]$ and $$\{|f_i(u)-f_i(v)|:uv\in E(T_i)\}=[1,|E(T_i)|],$$ and identify the vertices of $\bigcup^m_{i=1}V(T_i)$ having the same labels into one, the resulting graph is just $K_n=\odot\langle T_i\rangle^m_1$, or $K_n=\ominus(T_i)^m_1$, or $K_n=\bigcup^m_{i=1} T_i$?

\item For any odd-graceful graph $H$ of $p$ vertices, does there exists a pan-odd-graceful Topsnut-matching team $\odot_1\langle H,T_i\rangle ^p_1$? Or consider this $H$ as a lobster first. Find conditions for the perfect odd-gracefully Topsnut-matching team $\odot_1\langle H,T_i\rangle ^p_1$ with $T_i\cong T_j$ for $i\neq j$. Find nontree graphs which induce pan-odd-graceful Topsnut-matching teams.

\item  Plant the concept of pan-odd-graceful Topsnut-matching team $\odot_1\langle H,T_i\rangle^p_1$ on other graph labellings.

\item For a given lobster $T$, find another lobster $T'$ such that $\odot  \langle T,T'\rangle$ admits a twin odd-graceful labelling (or a twin odd-elegant labelling) (\cite{Wang-Xu-Yao-2017-Twin, Wang-Xu-Yao-2017}).

\item For a given $(p,q)$-tree $G$ admitting a 6C-labelling $f$, find all possible $(p,q)$-tree $H$ admits a 6C-labelling $g$ such that $\odot\langle G,H \rangle $ are 6C-complementary matchings.
\item Find conditions for a connected graph $G$ to be a multiple-tree matching partition $G=\oplus_F\langle T_i\rangle ^m_1$ with $m\geq 2$.
\item Find all possible odd-graceful Topsnut-matchings $\odot \langle G,H \rangle $ for a given $(p,q)$-graph $G$ admitting odd-graceful labellings.
\item Determine v-set e-proper graceful/odd-graceful labellings of Euler graphs.

\item Determine the conditions for $A_M$ and $B_q$ and any $a\in A_M$ corresponds two numbers $a^*\in A_M$ $b^*\in B_q$ such that $b^*=|a-a^*|$. Then determine such sequence pair $(A_M,B_q)$ defined in Definition \ref{defn:sequence-labelling} such that the sequence type of labellings defined in Definition \ref{defn:sequence-labelling} hold true on graphs. Clearly, if any $b\in B_q$ corresponds two  numbers $a',a''\in A_M$ such that $b=|a'-a''|$, then there exists at least a forest $T$ admitting a graceful sequence-$(A_M,B_q)$ labelling defined in Definition \ref{defn:sequence-labelling}.

\item Find connected graphs $G$ such that for any integer $M$ holding
\begin{equation}\label{eqa:c3xxxxx}
\min_f S_{um}(G,f)<M<\max_f S_{um}(G,f)
\end{equation} true, then there exists a difference-sum labelling $h$ of $G$ with $M=S_{um}(G,h)$.

\item Find connected graphs $G$ such that for any integer $M$ holding
\begin{equation}\label{eqa:c3xxxxx}
\min_f F_{um}(G,f)<M<\max_f F_{um}(G,f)
\end{equation} true, then there exists a felicitous-sum labelling $h$ of $G$ with $M=F_{um}(G,h)$.

\item For particular graphs $G$, compute the exact values of $\min_f S_{um}(G,f)$ and $\max_f S_{um}(G,f)$.

\item For a given graph $G$, find all graphs $H$ for forming set-ordered matching graphs $G\ominus H$ with $H\not\cong G$.

\item For an odd-graceful graph $G$, find all matching graphs $H$ such that $\odot \langle G,H\rangle $ admitting twin odd-graceful labellings.

\item Consider other v-set e-proper $\varepsilon$-labellings of a complete graph $K_n$, where $\varepsilon \in \{$edge-magic total labelling, odd-elegant labelling, harmonious labelling, the labellings defined in this paper$\}$. For example:

 \quad (i) A v-set e-proper felicitous labelling $(F,f)$ of a $(p,q)$-graph $G$ is defined as: $F:V(G)\rightarrow [0,q-1]^2$ with $F(x)\cap F(y)=\emptyset$ for distinct $x,y\in V(G)$, and $f:E(G)\rightarrow [0,q-1]$ holding $f(E(G))=[0,q-1]$ and $f(uv)=a_u+a_v~(\bmod ~q)$ true with $a_u\in F(u)$ and $a_v(v)\in F(v)$.

 \quad Does $K_n$ admits a v-set e-proper felicitous labelling?

 \quad (ii) A v-set e-proper edge-magic total labelling $(F,f)$ of a $(p,q)$-graph $G$ is defined by $F:V(G)\rightarrow [1,M]^2$ with $p+q\leq M$ and $F(x)\cap F(y)=\emptyset$ for distinct $x,y\in V(G)$, and $f:E(G)\rightarrow [1,M]$ with $f(uv)\neq f(xy)$ for any two edges $uv,xy\in E(G)$, and there exists a constant $k$ such that $$a_u+f(uv)+a_v=k$$ for any edge $uv\in E(G)$ with $a_u\in F(u)$ and $a_v\in F(v)$.

 \quad Does $K_n$ admits a v-set e-proper edge-magic total labelling? Find the parameter $E_{mt}(G)=\min_{(F,f)}\{M\}$ over all v-set e-proper edge-magic total labellings of $G$.

\item If we can split a connected graph admitting a v-set e-proper graceful labelling into a tree, then characterize this graph and its possible v-set e-proper graceful labellings.
\item Find conditions for a connected graph $G$ that can be split into caterpillars, or lobsters, such that $G$ admits a v-set e-proper $X$-labelling, where $X$  is a graph labelling admitted by caterpillars, or lobsters (see Theorem \ref{thm:caterpillar-lobster-v-set-e-proper}).

\item For each $p\geq 2$, find a $(p,q)$-graph $G=\odot_f\langle G_i\rangle ^m_1$ defined in definition \ref{defn:multiple-graph-matching}, such that $q$ is the largest edge number on such $(p,q)$-graphs when $p$ is fixed. We can add other restrictions: (i) each $G_i$ is a spanning subgraph of $G$; (ii) $E(G_i)\cap E(G_j)=E^*\subset E(G)$ for $i\neq j$ and $E^*\neq \emptyset$; (iii) each $G_i$ is a tree, or an Euler graph, or a bipartite graph, and so on.

\item A $(p,q)$-graph $G$ and a $(q,p)$-graph $H$ admit two edge-magic graceful labellings $f$ and $g$, respectively, and $f$ and $g$ are reciprocal inverse because $f(E(G))=g(V(H))\setminus X^*$ and $f(V(G))\setminus X^*=g(E(H))$ for $X^*=f(V(G))\cap g(V(H))$. Find such pairs of graphs $G$ and $H$, and characterize them.
\item Find reciprocal complementary (reciprocal-inverse matching) $G=\odot \langle T,G\rangle $ for a fixed graph $T$, where $T$ and $G$ admit reciprocal-inverse labellings $f$ and $g$, respectively, such that $$f(E(T))=g(V(G))\setminus X^*\textrm{ and }f(V(T))\setminus X^*=g(E(G))$$ for $X^*=f(V(T))\cap g(V(G))$.
\item If a total coloring $g$ of a graph $G$ arrives at $B_{tol}(G,g)=\min_fB_{tol}(G,f)$, is there $\chi ''(G)=|\{g(x):x\in V(G)\cup E(G)\}|$?
\item For any connected subgraph $H$ of a connected graph $G$, does there exists $\max_g S_{um}(H,g)\leq \max_f S_{um}(G,f)$?
\item Determine connected graphs having a group of consecutive difference proper vertex colorings, or a group of consecutive sum proper vertex colorings.
\item Find connected graphs admitting one of the edge-magic proper total coloring and the equitably proper total coloring.
\item It is not difficult to verify $\chi''_{ved}(K_n)\leq 2n$ and $\chi''_{ves}(K_n)\leq 2n-1$. Does any tree $T$ hold $$\chi''_{ved}(T)\leq \Delta(T)+4$$ and $$\chi''_{ves}(T)\leq 2\Delta(T)+1$$ true?
\item Let $f: V(G)\cup E(G)\rightarrow [1,\chi''(G)]$ be a proper total coloring of a graph $G$, and let $$f^*(E(G))=\{f(u)+f(uv)+f(v):uv \in E(G)\}.$$ Characterize $G$ if $f^*(E(G))$ is a consecutive integer set $[a,b]$.
\item  In \cite{YAO-SUN-WANG-SU-XU2018arXiv}, the authors defined: ``Let $\eta$-labeling be a given graph labelling, and let a connected graph $G$ admit an $\eta$-labeling. If every connected proper subgraph of $G$ also admits a labelling like  $\eta$-labeling, then we call $G$ a \emph{perfect $\eta$-labeling graph}.'' Caterpillars are perfect $\eta$-labeling graphs if these $\eta$-labelings are listed in this article, and each lobster is a perfect (odd-)graceful labeling graph. They ask for: \emph{If every connected proper subgraph of a connected graph $G$ admits an $\eta$-labelling, then does $G$ admits this $\eta$-labelling too}? Clearly, a perfect $\eta$-labeling graph (like an elder generation) can be used to produce a crowd of Topsnut-GPWs (like son generations).
\end{asparaenum}

\section{Conclusion}

We have known that Topsnut-matching is a larger topic in researching Topsnut-gpws, \emph{nature-inspired passwords}. Results and techniques of graph theory are proven to be powerful in designing and researching Topsnut-gpws, since there are no polynomial algorithms for many of these results and techniques. Many of the graph labellings introduced here match with mathematical conjectures, so they may provide computationally unbreakable for our Topsnut-gpws. It is hopeful to let more people use Topsnut-gpws and \emph{pan-Topsnut-gpws} (allow label vertices and edges with non-mathematical elements) for protecting their information and profits in networks ( \cite{Yao-Mu-Sun-Zhang-Wang-Su-2018}, \cite{Hongyu-Wang-2018-Doctor-thesis}). There are over 200 graph labellings introduced in \cite{Gallian2016}, and more new graph labellings emerge everyday. It is time to consider \emph{Graph Labelling Analysis} as a subbranch of graph theory. So, we try doing some exploring work here, although we have two hands empty on this topic.

Matching can help us to design Topsnut-gpws for one public key vs one private key, one public key vs two or more private keys, and more public keys vs more private keys. Matching opens a window for us to understand something new in cryptography. It is very important that matching is just one of mathematical principles. Almost mathematical operations have their own matching operations. The graph labellings first defined or introduced here match with other existing graph labelllings, and can be shown to be related with mathematical conjectures, or open problems.

Researching Topsnut-matching can derive two interesting topics: one is \emph{set matching partition} to number theory, and another is about labeled graphs for constructing large scale of graphs with labellings, which differs from finding labellings to unlabeled graphs. We have listed possible researching problems for further studying works on Topsnut-gpws, and hope to find more something new and to do more theoretical works on Topsnut-gpws. We try to use Topsnut-groups to build up so-called network passwords for encrypting a network with thousand and thousand nodes (vertices). So we have investigated one of Topsnut-groups, called \emph{Abelian additive graphic group} (\emph{graphic group} for short). This type of graphic groups based on addition operation processes a particular property: ``Every element in a graphic group can be regarded as ``zero'' of the graphic group, so we can call it an \emph{every-zero graphic group}''. Unfortunately, we do not discover graphic group based on multiplication operation. It may be a way to find more graph labellings of a graph from connection between two or more graphic groups.

Several new colorings and new parameters on proper total colorings have been introduced and investigated. We have found that the difference-sum labelling (\emph{extremal labelling}) can be admitted by every graph, so then it breaks down the case of no labelling admitted by each graph. Thereby, we are motivated from the difference-sum labelling and know that there are many\emph{ extremal labellings} like the difference-sum labelling, which mean that we may touch a new subbranch of graph labellings.

The above research works on two different areas motivate us to think of the biological combination of human being and AI machine in current development of the world, rather than AI machines only that will take a long time to success. An application project supported by mathematics like passwords depends on mathematics going deep into and continuous improvement, how long will it support last, how far can the project go.

\section*{Acknowledgment}

The author, \emph{Bing Yao}, is delight for supported by the National Natural Science Foundation of China under grants 61163054, 61363060 and 61662066. Bing Yao, cordially, thanks every member of \emph{Topological Graphic Passwords Symposium} in the second semester of 2017-2018 academic year for their effective discussion and constructive suggestions, and part of members of the Symposium were supported by Scientific research project of Gansu University under grants 2016A-067, 2017A-047 and 2017A-254.



%


\end{document}